\documentclass[12pt]{article}
\usepackage{amsmath}
\usepackage{amsfonts}
\usepackage{amssymb}
\usepackage{amsthm}
\usepackage{graphicx}
\usepackage{fullpage}
\usepackage{setspace}
\usepackage{verbatim}
\usepackage{natbib}
\usepackage{appendix}

\bibliographystyle{econometrica}

\newtheorem{theorem}{Theorem}[section]
\newtheorem{lemma}{Lemma}[section]

\newtheorem{assumption}{Assumption}[section]

\onehalfspacing

\title{Weighted KS Statistics for Inference on Conditional Moment
  Inequalities\footnotetext{First version: March
    2011. This version: October 2011.}}

\author{Timothy B. Armstrong\thanks{email: timothya@stanford.edu.  Thanks
    to Han Hong and Joe Romano for guidance and many useful discussions,
    and to Liran Einav, Azeem Shaikh, Tim Bresnahan, Guido Imbens, Raj
    Chetty, Whitney Newey, Victor Chernozhukov, Jerry Hausman, Andres
    Santos, Elie Tamer, Vicky Zinde-Walsh, Alberto Abadie, Karim Chalak,
    Xu Cheng,
    Konrad Menzel,
    Stefan Hoderlein,
    Don Andrews, Peter Phillips, Taisuke Otsu, Ed Vytlacil, Xiaohong Chen,
    Yuichi Kitamura
    and participants at seminars at Stanford and MIT
    for helpful comments and criticism.  All remaining errors are
    my own.  This paper was written with generous support from a
    fellowship from the endowment in memory of
    B.F. Haley and E.S. Shaw through the Stanford Institute for Economic
    Policy Research.}
\\
Stanford University}

\date{}

\begin{document}

\maketitle

\begin{abstract}

This paper proposes confidence regions for the identified set in
conditional moment inequality models using Kolmogorov-Smirnov
statistics with a truncated inverse variance weighting with increasing
truncation points.  The new weighting differs from those proposed in the
literature in two important ways.
First,
confidence regions based on KS tests with the weighting function I
propose %
converge to the identified set at a faster rate than existing procedures
based on bounded weight functions in a broad class of models.
This
provides a theoretical justification for inverse variance
weighting in this context, and contrasts with analogous results for
conditional moment equalities in which optimal weighting only affects the
asymptotic variance.
Second, the new weighting changes the asymptotic
behavior, including the rate of convergence, of the KS statistic itself,
requiring a new asymptotic theory in choosing the critical value, which I
provide.
To make these comparisons,
I derive rates of convergence for the confidence regions I
propose along with new results for rates of convergence of existing
estimators under a general set of conditions.  A series of examples
illustrates the broad applicability of the conditions.  A monte carlo
study examines the finite sample behavior of the confidence regions.
\end{abstract}

\section{Introduction}

This paper proposes methods for inference in conditional moment inequality
models and derives new relative efficiency results for these models
to show that these methods are more efficient than available methods in a
certain precise sense.
Formally, %
these models are defined by a
restriction of the form $E_P(m(W_i,\theta)|X_i)\ge 0$ almost surely.
Here, $m$ is a known parametric function, which may be vector valued (in
which case the inequality is interpreted as elementwise).  This setup
includes many models commonly used in econometrics, including regression
models with endogenously censored or missing data, selection models, and
certain models of firm and consumer behavior.

The problem is
to perform inference on the identified set
\begin{align*}
\Theta_0(P)\equiv \{\theta|E_P(m(W_i,\theta)|X_i)\ge 0\text{ a.s.}\}
\end{align*}
given a
sample $(X_1,W_1),\ldots,(X_n,W_n)$ from $P$.  This paper proposes
confidence regions $\mathcal{C}_n$ that satisfy
\begin{align}\label{cht_coverage}
\liminf_{n\to\infty}\inf_{P\in\mathcal{P}}
  P(\Theta_0(P)\subseteq\mathcal{C}_n)\ge 1-\alpha
\end{align}
for classes of probability distributions $\mathcal{P}$ restricted only by
mild regularity conditions.  For these confidence regions and several
confidence regions available in the literature satisfying this
requirement, I derive rates of convergence of $\mathcal{C}_n$ to
$\Theta_0(P)$.  The results give sequences $a_n$, which depend on the
smoothness of $\mathcal{P}$ and the method used to construct
$\mathcal{C}_n$, such that
\begin{align}\label{rate_def_eq}
\sup_{P\in\mathcal{P}}P(d_H(\theta_0(P),\mathcal{C}_n)\ge a_n) \to 0.
\end{align}
These results show that, in a general class of models, the confidence
regions proposed here are the only ones to obtain the best rate $a_n$
in (\ref{rate_def_eq}) for a variety of classes $\mathcal{P}$ defined by
different smoothness conditions without prior knowledge of $\mathcal{P}$.
In this sense, the confidence regions proposed here are adaptive.

The confidence regions proposed in this paper are
based on a Kolmogorov-Smirnov (KS) statistic weighted by a truncation of
the inverse of the sample variance
with an increasing sequence of
truncation points.
Following the approach of \citet{chernozhukov_estimation_2007} and
\citet{romano_inference_2010},
the confidence regions invert these tests using critical values that
control the familywise error rate over parameter values in the identified
set, resulting in a set that satisfies (\ref{cht_coverage}).
The increasing sequence of truncation points I propose changes the
asymptotic behavior and, in particular, the rate of convergence of the KS
statistic relative to the bounded weightings proposed in the literature.
This requires a new asymptotic theory in choosing the critical value,
which I develop.
I derive the rate of convergence to the identified set for these
confidence regions
under conditions that apply to a
broad
class of models while still being interpretable.
Since general results for rates of convergence to the identified set have
not been derived for confidence regions based on kernel methods or KS
statistics with bounded weights, %
I derive
rates of convergence for confidence
regions based on these existing approaches as well.
For the class of models
I consider, I find that using the inverse variance with increasing
trunaction points as the weight function in the KS statistic results in a
confidence region for the identified set that has a faster rate of
convergence to the identified set than
the KS statistic based confidence regions with bounded weights proposed in
the literature, and achieves the same rate of convergence as a kernel
estimate with the optimal bandwidth.
For classes of underlying distributions in which smoothness of two
derivatives or less is imposed, these rates correspond with the upper
bounds derived by \citet{stone_optimal_1982} for estimating conditional
means.

To my knowledge, these results provide the first theoretical justification
for weighting moments by their variance in conditional moment inequality
problems.  If
the truncation parameter is allowed to increase fast enough, weighting by
the variance in the KS objective function increases the rate of
convergence of confidence regions to the identified set under the
conditions I consider.
Given that numerous negative results exist for similar problems, it might
be surprising that such general results on relative efficiency
could be obtained.
For one, the
tests procedures I compare are adapted from nonparametric goodness of
fit tests.  The general concensus in this literature is that the relative
efficiency of these tests will depend on the particular situation, and
that, while power results can be obtained for certain types of
alternatives, one cannot make any broad conclusions about which tests are
more powerful.
An important insight of this paper is that, although one
cannot make a general statement about one procedure being optimal against
all possible alternatives in every setting, most conditional moment
inequality models used in practice place restrictions on how parameter
values not in the identified set translate to the conditional moment
restriction being violated.  One of the contributions of this paper is to
propose a set of interpretable conditions under which the truncated
variance weighting proposed in this paper is most efficient, and to show
that several models used in practice satisfy them.

A second reason that relative efficiency in this setting might seem like
an intractable problem is that,
even for the seemingly simpler problem of
inference based on finitely many moment inequalities, no
relative efficiency results in terms of local power comparisons have been
developed.
Indeed, the lack of such results has motivated interest in large
deviations optimality \citep{canay_inference_2010}, which are of
particular interest when local power comparisons do not give a clear
recommendation.
This paper makes progress in a
seemingly more difficult problem by showing that, while power comparisons
in models with unconditional moment inequalities involve subtle issues of
how relative efficiency should be defined for inference on sets, power
comparisons for conditional moment inequalities can be made with the
coarser comparison of rates of convergence to the identified set.  Since
different approaches to inference on the identified set lead to different
rates of convergence to the identified set, %
comparing rates of convergence leads to clear
recommendations of which estimator to use.

Part of the intuition for the efficiency of the inverse variance weighting
proposed in this paper relative to other methods is similar to the
intuition for why weighting by the inverse of the variance matrix in the
GMM objective function improves the asymptotic variance of GMM estimators.
Moments that can be estimated more accurately should be given more
weight.
However, as I describe in more detail in the body of the paper,
the result is also related to the choice of bandwidth in kernel
estimation.  The KS statistics for moment inequality models I consider
take the supremum of an infinite number of unconditional moment
inequalities that together are equivalent to the conditional moment
inequality.  Under the conditions in this paper, local alternatives
violate a sequence of unconditional moments that behave like means of
kernel functions under a decreasing sequence of bandwidths.  Weighting by
the inverse of the variance allows the KS statistic to automatically
choose the unconditional moments that correspond to the optimal bandwidth,
while controlling the probability of type I error even when smoothness
conditions needed for kernel estimation do not hold.

One interpretation
of this result is that inverse variance weighting results in a test that
is adaptive to smoothness conditions on the conditional mean.  Indeed, the
rates of convergence to the identified set derived in this paper coincide
with the optimal rates of convergence for estimates of conditional means
under Lipschitz condition or a bounded second derivative derived in
\citet{stone_optimal_1982}.  The confidence regions proposed in this paper
are also adaptive to Holder conditions and intermediate levels of
smoothness.  Thus, this paper draws a connection between optimal weighting
functions and adaptive estimation.

Another way of describing the intuition for the better rate of convergence
with variance weighting is that it helps alleviate a nonsimilarity problem
with KS statistics applied to conditional moment inequality problems.
As shown by \citet{armstrong_asymptotically_2011}, KS statistics with
bounded weights will converge at different rates on the boundary of the
identified set depending on the shape of the conditional mean.
The results from that paper can be used to improve the power of tests
based on bounded weights, but require pre tests to determing the rate of
convergence of the test statistic.
The weight functions I propose in this paper scale up low variance moments
so that the KS statistic will be of the same order of magnitude whether
the supremum is achieved at a low or high variance moment.  This makes
the procedures proposed in this paper more powerful against sequences of
alternative parameter values that determine rates of convergence in the
Hausdorff metric, leading to a faster rate of convergence
for the confidence region even when a worst case critical value is used.

The results in this paper show that, in certain smoothness classes,
confidence regions based on the methods in this paper achieve the best
rate of convergence to the identified set in the Hausdorff metric.  While
other methods %
achieve the same rate of convergence
if prior information is known about
the shape of the conditional mean,
these methods will do much worse if incorrect prior information is used to
choose a different approach.  A succinct way of putting this is that,
among the approaches considered here, the approach based on inverse
variance weighted KS statistics has the optimal minimax rate for a broad
set of smoothness classes.  %
While minimax definitions of relative efficiency are useful, they ignore
the possibility that, while the inverse variance weighting approach is
better in the worst
case in a particular class of distributions, other approaches might do much
better under more favorable data
generating processes.  However, the results in Section \ref{other_est_sec}
show that, even in a very restrictive set of cases that are more favorable
for the approach based on bounded weights, the inverse variance weighting
proposed in this paper will only lose a $\log n$ term in the rate of
convergence to the identified set relative to the rate of convergence
using bounded weights.  This contrasts with the polynomial differences in
rates of convergence in cases where bounded weights or kernel based
methods do worse.

The sets considered in this paper
are confidence sets in the sense of \citet{chernozhukov_estimation_2007},
since they contain the identified set with a prespecified probability
asymptotically.  One can also interpret these sets as outwardly biased
estimates of the identified set, similar to those proposed by
\citet{chernozhukov_estimation_2007}.  Throughout the paper, I refer to
these sets interchangeably as confidence regions and as estimates of the
identified set.  Interpreting these sets as confidence regions, the rates
of convergence in the Hausdorff metric derived in this paper are a measure
of the power of these tests against local alternatives.  The rates of
convergence derived here imply local power results for sequences of
parameter values that approach the boundary of the identified set.  In
addition to the confidence regions considered here that contain the entire
identified set, methods similar to those used in this paper could be used
to construct confidence regions for points in the identified set, as
proposed by \citet{imbens_confidence_2004}.  Local power results for tests
satisfying this less stringent requirement would follow from similar
arguments.

The new class of weightings proposed in this paper leads to a nontrivial
change in the behavior of the statistic.  Whereas the KS type statistics
considered by \citet{andrews_inference_2009} and
\citet{kim_kyoo_il_set_2008} are defined as the supremum of a random
process that converges to a tight random process,
this does not hold with the increasing truncation
points for the inverse variance weighting used here.  Thus, while the
statistics using bounded weights can be handled using functional central
limit theorems in the supremum norm, such as those in
\citet{vaart_weak_1996}, such results do not apply for the weighting
functions in this paper.  To overcome this, I use maximal inequalities
that bound the supremum of a random process by a function of the maximal
variance of the process.
The asymptotic
bounds on the
sampling distribution of the statistic with the new weighting
follow arguments in \citet{pollard_convergence_1984}, with some slight
modifications to obtain uniformity in the underlying distribution.  A
disadvantage of this approach is that it only leads to an upper bound on
the critical value for the test statistic, leading to conservative
inference.  While this is also the case for many procedures in the moment
inequalities setting, it would be useful to extend these results to derive
less conservative critical values.  On the other hand, the local power
results in this paper show that, even with these conservative critical
values, confidence sets based on the weighting proposed in this paper
converge to the identified set at a faster rate than confidence regions
based on bounded weightings.

This paper relates to the recent literature on econometric models defined
by moment inequalities
and, in particular, conditional moment inequalities where the conditioning
variable is continuously distributed.  \citet{andrews_inference_2009},
\citet{kim_kyoo_il_set_2008}, %
\citet{menzel_estimation_2008,menzel_consistent_2010} and
\citet{chernozhukov_intersection_2009} treat this problem in different
ways.
The estimators of the identified set considered in the present
paper are most similar to those considered by \citet{andrews_inference_2009}
and \citet{kim_kyoo_il_set_2008}, %
the only major difference being the magnitude of a truncation parameter
relative to the sample size.  One of the contributions of this paper is to
show how allowing the truncation parameter to change with the sample size
changes the behavior of the KS statistic in nontrivial ways, and how to
use this to form set estimates that, in a broad class of models, converge
to the identified set at a faster rate.  In addition, the rates of
convergence to the identified set for some of these approaches derived in
the present paper are the first local power results for these
methods that apply generically to conditional moment inequality models in
the set identified case.
These estimators and inference procedures build on the idea of
transforming conditional moment inequalities to unconditional moment
inequalities, which was used by \citet{khan_inference_2009} to propose
estimates for a point identified model.
Their setting differs from most of those considered
here in that their model is point identified with a root-$n$ rate of
convergence for the point estimate.
\citet{galichon_test_2009} propose a similar statistic for a class of
models under a different setup with possible lack of point identification.

More broadly, this paper relates to the literature on set identified
models.  Much of this research has been on models defined by finitely many
unconditional moment inequalities.  Papers that treat this problem include
\citet{andrews_confidence_2004}, %
\citet{andrews_inference_2008}, %
\citet{andrews_validity_2009}, %
\citet{andrews_inference_2010}, %
\citet{chernozhukov_estimation_2007}, %
\citet{romano_inference_2010}, %
\citet{romano_inference_2008}, %
\citet{bugni_bootstrap_2010},
\citet{beresteanu_asymptotic_2008},
\citet{moon_bayesian_2009},
\citet{imbens_confidence_2004}
and \citet{stoye_more_2009}.

The paper is organized as follows.
In Section \ref{setup_sec}, I describe the estimation problem and
estimators of the identified set, and give an
informal description of some of the results in the paper and the intuition
behind them.
In Section \ref{cov_sec}, I state conditions under which the
estimate contains the identified set with probability approaching one.
In Section \ref{rate_sec}, I state conditions for consistency and rates of
convergence.
In Section \ref{applications_sec}, I verify the conditions of Section
\ref{rate_sec} in some examples.
In Section \ref{other_est_sec}, I derive rates of convergence of other
estimators of the identified set and compare them to rates of convergence
for the estimators proposed in this paper.
Section \ref{monte_carlo_sec} reports the results of a monte carlo study
of the finite sample properties of the estimators.
Section \ref{conclusion} concludes, and an appendix contain proofs and
additional results referred to in the body of the paper.

I use the following notation throughout the paper.
For observations
$(X_1,W_1),\ldots,(X_n,W_n)$ and a
measurable function $h$ on the sample space, $E_nh(X_i,W_i)\equiv
\frac{1}{n}\sum_{i=1}^n h(X_i,W_i)$ denotes the
sample mean and $E_P h(X_i,W_i)$ denotes the mean of $h(X_i,W_i)$ under
the probability measure $P$.  The support of a random variable $X_i$ under
a probability measure $P$ is denoted $\text{supp}_P(X_i)$.
I use double
subscripts to denote elements of vector observations so that $X_{i,j}$
denotes the $j$th component of the $i$th observation $X_i$.
For a vector $x\in\mathbb{R}^k$, use the notation $x_{-i}$ to denote the
vector $(x_1,\ldots,x_{i-1},x_{i+1},\ldots,x_k)'$.
Inequalities on Euclidean space refer to the partial ordering of
elementwise inequality.  I use $a\wedge b$ to denote the elementwise
minimum and $a\vee b$ to denote the elementwise maximum of $a$ and $b$.
For a norm $\|\cdot\|$ on $\mathbb{R}^k$, $\|t\|_{-}\equiv \|t\wedge 0\|$.
Unless otherwise noted, $\|\cdot\|$ denotes the Euclidean norm.

\section{Setup and Informal Description of Results}
  \label{setup_sec}

We observe iid observations $(X_1,W_1),\ldots,(X_n,W_n)$ distributed
according to some probability distribution $P\in\mathcal{P}$, and wish to
perform inference on the identified set $\Theta_0(P)$ of parameters
$\theta\in\Theta\subseteq\mathbb{R}^d$ that satisfy the conditional moment
inequalities
\begin{align*}
E_P[m(W_i,\theta)|X_i]\ge 0 \,\,\, \text{$P$-a.s.}
\end{align*}
Here, $X_i$ and $W_i$ are random variables on $\mathbb{R}^{d_X}$ and
$\mathbb{R}^{d_W}$ respectively, and $m:\mathbb{R}^{d_W}\times \Theta\to
\mathbb{R}^{d_Y}$ is a measurable function.
See Section \ref{applications_sec} for examples of econometric models
that fit into this framework.
In what follows, $\bar m(\theta,x,P)$ will denote a version of
$E_P[m(W_i,\theta)|X_i=x]$.

I consider inference on $\Theta_0(P)$ using
a standard deviation weighted KS statistic defined as follows.  Let
$\mathcal{G}$ be a class of functions from $\mathbb{R}^{d_X}$ to
$\mathbb{R}_+$.
Let $\mu_{P,j}(\theta,g)=E_Pm_j(W_i,\theta)g_j(X_i)$ and
$\sigma_{P,j}(\theta,g)
  =\{E_P[m_j(W_i,\theta)g_j(X_i)]^2-[E_Pm_j(W_i,\theta)g_j(X_i)]^2\}^{1/2}$
and define the sample analogues
$\hat\mu_{n,j}(\theta,g)=E_nm_j(W_i,\theta)g_j(X_i)$ and
$\hat\sigma_{n,j}(\theta,g)
  =\{E_n[m_j(W_i,\theta)g_j(X_i)]^2-[E_nm_j(W_i,\theta)g_j(X_i)]^2\}^{1/2}$.
Since the functions in $\mathcal{G}$ are nonnegative,
$E_P[m(W_i,\theta)|X_i=x]\ge 0$ for all $x$ implies that
$\mu_{P,j}(\theta,g)=E_Pm_j(W_i,\theta)g_j(X_i)$  is nonnegative for all
$g$ and $j$.  The
KS statistics in this paper are designed to be positive and large in
magnitude when one of these moments is small (negative and large in
magnitude).
For a fixed function $S:\mathbb{R}^{d_Y}\to\mathbb{R}_+$ chosen by the
researcher, the KS statistic is defined as
\begin{align*}
T_n(\theta)=\sup_{g\in\mathcal{G}}
S\left(\frac{\hat\mu_{n,1}(\theta,g)}{\hat\sigma_{n,1}(\theta,g)\vee \sigma_n}
  ,\ldots,\frac{\hat\mu_{n,d_Y}(\theta,g)}
    {\hat\sigma_{n,d_Y}(\theta,g)\vee \sigma_n}\right)
\end{align*}
where $\sigma_n$ is a sequence of truncation points.
Here, $S$ is a function that is positive and large in magnitude when one
of its arguments is negative and large in magnitude.  Possible choices
include $t\mapsto \|t\|_{-}$ or, more generally, any function that
satisfies Assumption \ref{S_assump}, given in Section \ref{cov_sec}.
If $T_n(\theta)$ is positive and large in magnitude, this is evidence that
$\mu_{P,j}(\theta,g)$ is negative for some $j$ and $g$, so that $\theta$
is not in the identified set.

The set estimates in this paper invert this test statistic using
critical values that control the probability of false rejection
uniformly over $\Theta$, as proposed by
\citet{chernozhukov_estimation_2007}.  For some data dependent
value $\hat c_n$, the confidence region $\mathcal{C}_n(\hat c_n)$ for the
identified set is defined as
\begin{align*}
\mathcal{C}_n(\hat c_n)
\equiv \left\{\theta\in\Theta
  \bigg |\frac{\sqrt{n}}{\sqrt{\log n}}T_n(\theta)\le \hat c_n\right\}.
\end{align*}
Defining the critical relative to the scaling
$\frac{\sqrt{n}}{\sqrt{\log n}}$ anticipates results on the rate of
convergence of $T_n(\theta)$ stated in what follows.

\subsection{Intuition for the Results}\label{intuition_sec}

To describe the intuition behind the results in this paper, consider a
special case of the KS statistic based confidence regions I treat in this
paper applied to a particular model.  Consider an interval regression
model, in which we
posit a linear conditional mean for a latent variable $W_i^*$ given an
observed variable $X_i$,
$E_P(W_i^*|X_i)=\theta_1+X_i'\theta_{-1}$, but
only observe intervals known to contain $W_i^*$.
Here, $X_i$ is a continuously distributed random variable on
$\mathbb{R}^{d_X}$.
While surveys that
elicit interval responses are an obvious application, this encompasses
other forms of incomplete data including selection models and missing data
(see Section \ref{selection_sec} for an example).
I give a more thorough treatment of this model in Sections
\ref{oneside_sec} and \ref{int_reg_sec}.  To keep things simple, suppose
that we only observe a one sided interval containing $W_i^*$.  That is, we
observe a variable $W_i^H$ known to be greater than or equal to $W_i^*$.
Then the problem can be defined formally as estimating or performing
inference on the identified set $\Theta_0(P)$ of values of
$\theta=(\theta_1,\theta_{-1})$ that satisfy
$E_P(W_i^H|X_i)\ge\theta_1+X_i'\theta_{-1}$.

To fix ideas, consider using the KS statistic defined above with the class
of functions $\mathcal{G}$ given by the set of indicator functions
$I(\|X_i-s\|\le h)$ with $s$ ranging over real numbers and $h$ ranging
over nonnegative reals.  The results in this paper allow other classes of
functions for $\mathcal{G}$, including other kernel functions, but this
example captures the main ideas.  For some positive weighting function
$\omega(\theta,s,h)$, define the KS statistic
$T_{n,\omega}(\theta)
  =\sup_{s,h}|\omega(\theta,s,h)
    E_n(W_i^H-\theta_1-X_i'\theta_{-1})I(\|X_i-s\|\le h)|_{-}$
where $|r|_{-}\equiv |r\wedge 0|$.  This corresponds to the KS statistic
defined above with $S(r)=|r|_{-}$ and with the weight function
$\frac{1}{\hat \sigma (\theta,s,h)\vee \sigma_n}$
(here $\hat \sigma(\theta,s,h)
  \equiv \{E_n[(W_i^H-\theta_1-X_i'\theta_{-1})I(\|X_i-s\|\le h)]^2
    -[E_n(W_i^H-\theta_1-X_i'\theta_{-1})I(\|X_i-s\|\le h)]^2\}^{1/2}$)
replaced by an arbitrary
weight function $\omega(\theta,s,h)$.  I derive rates of convergence for
set estimates based on the truncated variance weight function
$\frac{1}{\hat \sigma (\theta,s,h)\vee \sigma_n}$ in Section
\ref{rate_sec}.  In Section \ref{other_est_sec}, I derive rates of
convergence to the identified set for estimators based on KS statistics
with $\omega$ given by a function that is bounded uniformly in the sample
size $n$.  In the remainder of this section, I state these results
informally and describe some of the intuition behind them.

Following \citet{andrews_inference_2009} and \citet{kim_kyoo_il_set_2008},
one can show that $T_{n,\omega}(\theta)$ will converge at a $\sqrt{n}$
rate under regularity conditions if $\omega(\theta,s,h)$ is bounded
uniformly in $n$.
However,
since the variance of the moment indexed
by $(\theta,s,h)$ will be arbitrarily small when $h$ is small ($X_i$ has
a continuous distribution),
setting $\omega(\theta,s,h)$ equal to
$\frac{1}{\hat \sigma (\theta,s,h)\vee \sigma_n}$
gives a weight function that increases without
bound as $\sigma_n$ decreases with the sample size.
This decreases the rate of convergence from $\sqrt{n}$ to
$\sqrt{n/\log n}$ in general.
The estimators of the identified set I
propose in this paper are based on inverting KS tests with this weighting
function, where $\sqrt{n/\log n}T_{n,\omega}(\theta)$ is compared to a
critical value $\hat c_n$ that is bounded or increases slowly.  With a
bounded weight function that does not increase with $n$,
$\sqrt{n}T_{n,\omega}(\theta)$ is compared to a bounded or slowly
increasing critical value.

In this paper, I consider rates of convergence of these confidence regions
to the
identified set.  While power against a fixed sequence of local
alternatives is a bit different than rates of convergence to the
identified set (see the discussion at the end of Section
\ref{oneside_sec}, the conditions in Section \ref{int_reg_sec}, and the
example in Section \ref{slope_examp_sec} of the appendix for some of the
issues that arise in going from sequences of local alternatives to rates
of convergence to the identified set), much of the intuition for the
results in this paper can be exposited in the context of a single sequence
of local alternatives.  Consider a value of $\theta$ such that the
regression line $\theta_1+X_i'\theta_{-1}$ is tangent to the conditional
mean $E_P(W_i^H|X_i)$ at a single point $x_0$, and $X_i$ has a density
bounded away from zero and infinity near $x_0$.  This will typically be the
case at least for some, if not all, elements on the boundary of the
identified set.  The results are the same if $x_0$ is replaced by a
finite set, and can be extended to cases of set identification at infinity
or at a finite boundary in which $x_0$ may be infinite and
the density of
$X_i$ may
go to zero or infinity near $x_0$ by transforming the model (see Section
\ref{selection_sec}).
Suppose that, for some $\alpha>0$,
\begin{align}\label{alpha_cond_intuition}
\text{$E_P(W_i^H-\theta_1-X_i'\theta_{-1}|X_i=x)$ increases like
$\|x-x_0\|^\alpha$}
\end{align}
as $\|x-x_0\|$ increases for $x$ close to $x_0$.  If
$E_P(W_i^H|X_i=x)$ is twice differentiable and $x_0$ is on the interior of
the support of $X_i$, this will hold with $\alpha=2$, and a Lipschitz
condition on $E_P(W_i^H|X_i=x)$ leads to $\alpha=1$.  While other values
of $\alpha$ appear less natural in this context, they are common in
irregularly identified cases such as the selection model considered in
Section \ref{selection_sec}.

Consider the power of KS tests against local alternatives of the form
$\theta_n=(\theta_{1,0}+a_n,\theta_{-1,0})$, where
$\theta_0=(\theta_{1,0},\theta_{-1,0})$ is on the boundary of the
identified set and satisfies the above conditions for some $\alpha$.
Since moments centered at $x_0$ will have more negative expected values
under this sequence of alternatives, the moments with the most power for
detecting this sequence of local alternatives will be those indexed by
$s=x_0$ and some sequence of values of $h$.
For both classes of weight functions, the order of magnitude of the value
of $h$ that indexes the moment with the most power will be determined by a
tradeoff between variance and the magnitude of the expectation.
The 
KS objective function evaluated at some $(\theta,s,h)$ is the sum of a
mean
zero term $(E_n-E_P)(W_i^H-\theta_1-X_i'\theta_{-1})I(\|X_i-s\|\le h)$ and a
drift term $E_P(W_i^H-\theta_1-X_i'\theta_{-1})I(\|X_i-s\|\le h)$.  Under
$(\theta_n,s,h)$ with $s=x_0$, the drift term is
\begin{align}\label{bias_var_intuition}
&E_P(W_i^H-\theta_{1,n}-X_i'\theta_{-1,n})I(\|X_i-x_0\|\le h)
=E_P(W_i^H-\theta_{1,0}-a_n-X_i'\theta_{-1,0})I(\|X_i-x_0\|\le h) \notag  \\
&=E_P(W_i^H-\theta_{1,0}-X_i'\theta_{-1,0})I(\|X_i-x_0\|\le h)
-a_nE_P I(\|X_i-x_0\|\le h).
\end{align}
Some calculation shows that the first term in the above display is of
order $h^{\alpha+d_X}$, while the second term in the above display is
of order $-a_nh^{d_X}$.

Which values of $h$ result in the corresponding
moment having power depends on the mean zero term and the scaling, which
depends on the weight function.
First, consider the increasing sequence of weight functions given by
$\omega(\theta_n,x_0,h)=\frac{1}{\hat \sigma (\theta_n,x_0,h)\vee
  \sigma_n}$.  In
this case, the $\mathcal{O}(h^{\alpha+d_X}-a_nh^{d_X})$ term in the above
display will be divided by $\hat \sigma (\theta_n,x_0,h)\vee \sigma_n$,
which, for $\sigma_n$ small enough, will be approximately equal to the
standard deviation of the moment indexed by $(\theta_n,x_0,h)$, which is
of order $h^{d_X/2}$, and compared to a critical value that is of order
$(n/\log n)^{-1/2}$ (the mean zero term will be of the same order of
magnitude as the normalized critical value, so it will not affect the
power calculation).  Thus, the local alternative indexed by $a_n$ will be
detected if
$\mathcal{O}\left(\frac{h^{\alpha+d_X}-a_nh^{d_X}}{h^{d_X/2}}\right)
  \le -\mathcal{O}(n/\log n)^{-1/2}$ for some $h$.  The left hand side is
minimized when $h$ is equal to a small constant times $a_n^{1/\alpha}$,
which leads to the left hand side being of order
$-a_n^{(d_X+2\alpha)/(2\alpha)}$.  This will be less than the 
$-\mathcal{O}(n/\log n)^{-1/2}$ critical value if $a_n$ is greater than or
equal to a large enough constant times
$(n/\log n)^{-\alpha/(d_X+2\alpha)}$.  An argument that formalizes these
ideas and adapts them to derive rates of convergence to the identified set
rather than power against fixed sequences shows that this is the rate of
convergence of set estimates based on KS statistics with the truncated
inverse variance weight function I propose in this paper under more
general conditions that include this model as a special case.

Now consider using a KS statistic with a bounded weight function.  The
drift term will still be of order $h^{\alpha+d_X}-a_nh^{d_X}$ before being
multiplied by the weight function, but, since the weight function is
bounded uniformly in $n$, weighting will not increase the order of
magnitude of the drift term.  In this case, the KS statistics will be
compared to a critical value of order $n^{-1/2}$, and the mean zero term
will be of a smaller order of magnitude, so that the local alternative
indexed by $a_n$ will be detected if
$\mathcal{O}(h^{\alpha+d_X}-a_nh^{d_X})\le -\mathcal{O}(n^{-1/2})$.  As
before, the left hand side is minimized when $h$ is equal to some small
constant
times $a_n^{1/\alpha}$.  In this case, this leads to the left hand side
being of order $a_n^{(d_X+\alpha)/\alpha}$.  This will be less than the
$-\mathcal{O}(n^{-1/2})$ critical value of $a_n$ is greater than some
large constant times $n^{-\alpha/(2d_X+2\alpha)}$.  This is a slower rate
of convergence than the $(n/\log n)^{-\alpha/(d_X+2\alpha)}$ rate for
estimaters that use the inverse variance weighting with increasing
truncation points.

The increase in power from weighting low variance moments by the inverse
of their standard deviations comes from the fact that local alternatives
violate the conditional moment inequality on a shrinking subset of the
support of the conditioning variable.  If we require that the weight be
bounded uniformly in $n$, low variance moments cannot be weighted properly
because the inverse of the standard deviation will be greater than the
truncation point.  One way of putting this is that the KS statistic
chooses the optimal order of magnitude for the kernel bandwidth by
performing a bias-variance tradeoff automatically, and the variance
scaling makes sure that the correct variance is used in making this
calculation.

\section{Coverage of the Identified Set}%
\label{cov_sec}

In this section, I state conditions under which the
confidence region $\mathcal{C}_n(\hat c_n)$ contains the identified set
$\Theta_0(P)$ with probability approaching one.  Under these
conditions, these estimates control the probability of falsely
concluding that the data are not consistent with some parameter
value.  I show that the probability that the estimate contains the
identified set converges to one uniformly in any class of probability
distributions $\mathcal{P}$ that satisfy a set of assumptions stated
below.  Since these conditions do not restrict the smoothness of the
conditional mean $\bar m(\theta,x,P)$ or the distribution of the
conditioning variable, this shows that the estimator is
robust to many types of data generating processes, at least in the
sense of controlling the probability of type I error.  In contrast,
rates of convergence derived later in the paper depend on additional
smoothness conditions on the data generating process.  Thus, while we
can be reasonably confident rejecting potential parameter values with
this method, the power of the KS statistic based estimates (and the
other set estimators considered in Section \ref{other_est_sec}) will
depend on the shape of the data generating process.

I make the following assumptions.

\begin{assumption}\label{g_pos_assump}
$g_j(X_i)\ge 0$ $P$-a.s. for $j$ from $1$ to $d_Y$ for $g\in\mathcal{G}$
and $P\in\mathcal{P}$.
\end{assumption}

Assumption \ref{g_pos_assump} states that the conditional moment
inequalities are integrated against nonnegative functions, so that going
from conditional moment inequalities to unconditional moment inequalities
does not change the sign of the moment inequalities.

\begin{assumption}\label{covering_assump}
For $j$ from $1$ to $d_Y$, define the classes of functions
$\mathcal{F}_{j,1}=\{s m_j(W_i,\theta)g_j(X_i)+t
  |\theta\in\Theta,g\in\mathcal{G},s,t\in [-(\overline Y\vee 1),
  \overline Y\vee 1]\}$
and
$\mathcal{F}_{j,2}=\{(s m_j(W_i,\theta)g_j(X_i)+t)^2
  |\theta\in\Theta,g\in\mathcal{G},s,t\in [-(\overline Y\vee 1),
  \overline Y\vee 1]\}$
where $\overline Y$ is defined in Assumption \ref{bdd_assump} below.
Suppose that, for $j$ from $1$ to $d_Y$ and $i=1,2$,
$\sup_Q N(\varepsilon,\mathcal{F}_{j,i},L_1(Q))\le A\varepsilon^{-V}$
for $0<\varepsilon<1$ for some $A,V>0$, where the
supremum over $Q$ is
over all probability measures and
$N(\varepsilon,\mathcal{F}_{j,i},L_1(Q))$ is the $L_1$ covering number
defined in \citet{pollard_convergence_1984}.
\end{assumption}

\begin{assumption}\label{bdd_assump}
For some fixed $\overline Y\ge 0$, $|m_j(W_i,\theta)g_j(X_i)|\le \overline
Y$ $P$-a.s. for $j$ from $1$ to $d_Y$ for all $P\in\mathcal{P}$.
\end{assumption}

Assumption \ref{covering_assump} bounds the complexity of the classes of
functions involved so that empirical process methods can be used.  This
condition will hold if the corresponding bounds hold for $\mathcal{G}$ and
$\{w\mapsto m(w,\theta)|\theta\in\Theta\}$ individually.  In Section
\ref{covering_sec} of the appendix, I state sufficient conditions
for Assumption \ref{covering_assump}, and verify them for some classes of
functions $\mathcal{G}$ and the moment functions $m$ from the examples in
Section \ref{applications_sec}.
See \citet{pollard_convergence_1984} or \citet{vaart_weak_1996} for
definitions and additional sufficient conditions for these covering number
bounds.

Assumption \ref{bdd_assump} is natural in many cases, such as models
defined
by quantile restrictions.  In other cases, it restricts some variables to
a finite interval.  While this is clearly stronger than just bounding some
of the moments of $m_j(W_i,\theta)g_j(X_i)$, when combinded with
Assumption \ref{covering_assump}, it leads to rates of convergence that
are uniform in $\theta$ and $g$ and in the underlying distribution with no
additional assumptions on the shape of the conditional mean or variance or
the smoothness of the cdfs of the random variables.

I make the following assumption on the function $S$.  These assumptions
are satisfied by the function $t\to \|t\|_{-}\equiv \|t\wedge 0\|$ for any
norm $\|\cdot\|$ on Euclidean space.

\begin{assumption}\label{S_assump}
$S:\mathbb{R}^{d_Y}\to\mathbb{R}_+$ satisfies
(i) $S(t)>0$ iff. $t_j<0$ for some $j$ and
(ii) for some positive constants $K_{S,1}$ and $K_{S,2}$, we have, for
any $c>0$, $S(t)\ge c\Longrightarrow t_j\le -cK_{S,1} \text{ some $j$}$
and $S(t)\le c\Longrightarrow t_j\ge -cK_{S,2} \text{ all $j$}$.
\end{assumption}

Finally, I make the following assumption on the sequence of cutoff values
for the weighting functions.

\begin{assumption}\label{cutoff_assump}
$\sigma_n$ is bounded from above and for some possibly data dependent
value $\hat a_n$, $\sigma_n\sqrt{n/\log n}\ge \hat a_n$.
\end{assumption}

This assumption will be invoked with additional assumptions on how $\hat
a_n$ is chosen.  In all cases, I will require $\hat a_n$ to be bounded
away from zero, but some of the results will require stronger conditions
on $\hat a_n$.

Under these conditions with $\hat a_n$ and $\hat c_n$ chosen large enough,
the probability of type I error (in the sense of the estimate not
containing the identified set) converges so zero uniformly in
$P\in\mathcal{P}$.  In the following theorem, the constant $K$ that
determines how large $\hat a_n$ and $\hat c_n$ must be could in principle
be calculated as a function of $P$ using the maximal inequalities in the
proof and then estimated.   However, the resulting bounds would be
conservative in most cases.  In practice, it may be more sensible to take
some data dependent value such as
$\sup_{\theta\in\Theta,1\le j\le d_Y}
  \{E_n[m_j(W_i,\theta)-E_nm_j(W_i,\theta)]^2\}^{1/2}$
and multiply it by a sequence going slowly to infinity such as $\log n$ or
$\log \log n$.

\begin{theorem}\label{coverage_thm}
Suppose that Assumptions \ref{g_pos_assump}, \ref{covering_assump},
\ref{bdd_assump}, \ref{S_assump} and \ref{cutoff_assump} hold with $\hat
a_n\ge K$ and $\hat c_n\ge K$ with probability approaching one uniformly
in $P\in\mathcal{P}$.  If $K$ is larger than some constant that depends
only on $V$ and $\overline Y$ in Assumptions \ref{covering_assump} and
\ref{bdd_assump}, then
\begin{align*}
\inf_{P\in\mathcal{P}} P(\Theta_0(P)\subseteq \mathcal{C}_n(\hat c_n))
  \stackrel{n\to\infty}{\to} 1.
\end{align*}
\end{theorem}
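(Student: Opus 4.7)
The plan is to translate the event $\Theta_0(P)\subseteq\mathcal{C}_n(\hat c_n)$ into a uniform one-sided bound on the normalized empirical process, and then dominate that bound by a self-normalized maximal inequality tuned to the VC-type complexity of Assumption \ref{covering_assump}. More precisely, by Assumption \ref{S_assump}(ii), in order to force $T_n(\theta)\le \hat c_n\sqrt{\log n/n}$ it suffices to have, for every $j$ and every $g\in\mathcal{G}$,
\begin{align*}
\frac{\hat\mu_{n,j}(\theta,g)}{\hat\sigma_{n,j}(\theta,g)\vee\sigma_n}
  \ge -\,\hat c_n K_{S,2}\sqrt{\log n/n}.
\end{align*}
For $\theta\in\Theta_0(P)$, Assumption \ref{g_pos_assump} and nonnegativity of $\bar m(\theta,\cdot,P)$ yield $\mu_{P,j}(\theta,g)\ge 0$, so the desired one-sided bound follows once we show, with probability tending to one uniformly in $P\in\mathcal{P}$,
\begin{align*}
Z_n(P)\equiv\sup_{\theta\in\Theta,\,g\in\mathcal{G},\,j}
  \frac{|\hat\mu_{n,j}(\theta,g)-\mu_{P,j}(\theta,g)|}
       {\hat\sigma_{n,j}(\theta,g)\vee \sigma_n}
  \;\le\; K'\sqrt{\log n/n}
\end{align*}
for a constant $K'$ that can be absorbed into $\hat a_n$ and $\hat c_n$ via Assumption \ref{cutoff_assump}.

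The next step is to replace the sample standard deviation by its population counterpart. Since $\mathcal{F}_{j,1}$ and $\mathcal{F}_{j,2}$ are uniformly bounded (Assumption \ref{bdd_assump}) and satisfy the polynomial uniform-entropy bound of Assumption \ref{covering_assump}, a Pollard-style VC maximal inequality (e.g.\ Theorem II.24 in \citet{pollard_convergence_1984}) gives $\sup_{\theta,g,j}|\hat\sigma_{n,j}^2(\theta,g)-\sigma_{P,j}^2(\theta,g)|=O_P(n^{-1/2})$ uniformly in $P$. Combined with Assumption \ref{cutoff_assump}, which guarantees $\sigma_n\sqrt{n/\log n}\ge\hat a_n\ge K$, this lets us replace $\hat\sigma_{n,j}\vee\sigma_n$ by $\tfrac12(\sigma_{P,j}\vee\sigma_n)$ with probability approaching one uniformly in $P\in\mathcal{P}$, reducing matters to controlling
\begin{align*}
\widetilde Z_n(P)=\sup_{\theta,g,j}
  \frac{|\hat\mu_{n,j}(\theta,g)-\mu_{P,j}(\theta,g)|}
       {\sigma_{P,j}(\theta,g)\vee\sigma_n}.
\end{align*}

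To bound $\widetilde Z_n(P)$ I would use a peeling argument: partition the index set $\{(\theta,g,j)\}$ into the pieces $\mathcal{A}_0=\{\sigma_{P,j}(\theta,g)\le\sigma_n\}$ and $\mathcal{A}_k=\{2^{k-1}\sigma_n<\sigma_{P,j}(\theta,g)\le 2^k\sigma_n\}$ for $k\ge 1$. On $\mathcal{A}_0$, the denominator is $\sigma_n$, and the numerator is bounded by the uniform empirical-process deviation for the VC class $\mathcal{F}_{j,1}$ (again Pollard's maximal inequality), yielding a deviation of order $\sqrt{\log n/n}\cdot\sigma_n^{-1}\cdot$(constant), which is absorbed by $\hat a_n$ large. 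On each $\mathcal{A}_k$, apply a Bernstein-type inequality for VC classes, using that the variance of every envelope function on $\mathcal{A}_k$ is $\lesssim 2^{2k}\sigma_n^2$ and the class is uniformly bounded by $2\overline Y$; the resulting tail decays geometrically in $k$, so summing over the $O(\log n)$ relevant shells costs only a $\sqrt{\log n}$ factor and gives $\widetilde Z_n(P)=O_P(\sqrt{\log n/n})$ uniformly in $P$. Choosing $K$ (hence $\hat a_n$ and $\hat c_n$) larger than the resulting constants yields the conclusion.

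The main obstacle is the peeling-plus-self-normalization step: one must carry the Bernstein bounds through uniformly in $P\in\mathcal{P}$ and in the variance shell $k$, while simultaneously coupling the sample and population normalizers. The boundedness in Assumption \ref{bdd_assump} and the absolute (not bracketing) covering-number bound in Assumption \ref{covering_assump} are what make uniformity in $P$ possible; without them, Talagrand-type concentration would not translate into a bound that depends only on $V$ and $\overline Y$ as claimed in the theorem.
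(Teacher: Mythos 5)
Your overall strategy mirrors the paper's: coverage failure at some $\theta_0\in\Theta_0(P)$ forces, via Assumption \ref{S_assump}(ii) together with $\mu_{P,j}(\theta_0,g)\ge 0$ (Assumption \ref{g_pos_assump}), a large value of the self-normalized deviation $\sup_{\theta,g,j}|\hat\mu_{n,j}(\theta,g)-\mu_{P,j}(\theta,g)|/(\hat\sigma_{n,j}(\theta,g)\vee\sigma_n)$, which is then dominated by a truncated-variance maximal inequality whose constants depend only on $V$ and $\overline Y$. The paper implements this through Lemma \ref{sd_hat_rate_lemma}, i.e.\ a modified Pollard Theorem 37 (Theorem \ref{sd_scale_thm}) plus ratio-consistency of $\hat\sigma$ (Lemma \ref{sd_est_lemma}); your peeling-into-variance-shells-plus-Bernstein route is a legitimate alternative implementation of the same maximal inequality. (A minor slip: the sufficient condition for $T_n(\theta)\le \hat c_n\sqrt{\log n/n}$ comes from the contrapositive of the first implication in Assumption \ref{S_assump}(ii), so the relevant constant is $K_{S,1}$, not $K_{S,2}$; this is immaterial.)

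The genuine gap is the step where you replace $\hat\sigma_{n,j}(\theta,g)\vee\sigma_n$ by $\tfrac12(\sigma_{P,j}(\theta,g)\vee\sigma_n)$ using only the additive uniform bound $\sup_{\theta,g,j}|\hat\sigma_{n,j}^2-\sigma_{P,j}^2|=O_P(n^{-1/2})$. That bound cannot prevent the sample normalizer from collapsing on the intermediate band $\sigma_n\lesssim\sigma_{P,j}\lesssim n^{-1/4}$: there $\sigma_{P,j}^2$ can be smaller than the additive error, so $\hat\sigma_{n,j}$ may be essentially zero while $\sigma_{P,j}\gg\sigma_n$, leaving a denominator of only $\sigma_n\asymp\sqrt{\log n/n}$ while the numerator deviation is of Bernstein order $\sigma_{P,j}\sqrt{\log n/n}$ (e.g.\ $n^{-3/4}$), which far exceeds $\sigma_n\sqrt{\log n/n}\asymp \log n/n$; hence $Z_n(P)\le K'\sqrt{\log n/n}$ does not follow from your bound on $\widetilde Z_n(P)$ via this coupling. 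What is needed is multiplicative control, namely $\sup\{|\hat\sigma(f)/\sigma_P(f)-1|:\sigma_P(f)\ge\sigma_n\}$ small uniformly in $P$, and establishing it requires the same self-normalized (or shell-wise Bernstein) machinery applied to the squared centered functions, since the variance of $f^2$ scales with $\sigma_P(f)$ and so delivers relative error of order $\overline Y/K$ once $\sigma_n\sqrt{n/\log n}\ge K$. This is precisely the paper's Lemma \ref{sd_est_lemma}, proved by applying Lemma \ref{sd_rate_lemma} to the classes $\mathcal{G}_{n,P}^2$ of normalized squared functions (whose entropy is covered by Assumption \ref{covering_assump} through $\mathcal{F}_{j,2}$). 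With that repair, your peeling argument goes through and yields the theorem along essentially the paper's lines.
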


The $\sqrt{n/\log n}$ rate of convergence of the KS statistic is slower
than the $\sqrt{n}$ rate of convergence with fixed $\sigma_n$ derived by
\citet{andrews_inference_2009}.  In Section \ref{exact_rate_sec}, I show
that the rate of convergence is strictly slower than $\sqrt{n}$ under
conditions that include many cases of interest.
One might try to conclude from this that the procedures proposed by
\citet{andrews_inference_2009} and \citet{kim_kyoo_il_set_2008} will
suffer from type I error with probability approaching one if the cutoff
for the weight function ($1/\sigma_n$ in the notation of this paper)
increases with the sample size.  While this would be true if the critical
value for these tests were held fixed, the tests proposed in these papers
use estimated critical values that could increase with the sample size if
$\sigma_n$ goes to zero.  If the critical values increase fast enough,
these tests will still be valid, but it is not clear from existing results
whether they do.  Answering this question would require characterizing the
behavior of these critical values for small $\sigma_n$, and comparing them
to rates of convergence for the weighted KS statistic such as those
derived in the present paper.  Such an approach would likely build on the
ideas in this paper as well as \citet{andrews_inference_2009} and
\citet{kim_kyoo_il_set_2008}, using results on the asymptotic behavior of
the KS statistic with increasing weights that build on those derived in
this paper,
and comparing them to new results on the critical values proposed by
\citet{andrews_inference_2009} and \citet{kim_kyoo_il_set_2008} under
increasing weights, which would have to be derived and would likely
require stronger conditions than the ones in this paper.
In any case,
Theorem \ref{coverage_thm} can be used to form
estimates that contain the
identified set with probability one, and choosing
a critical value large enough to satisfy the assumptions of this theorem
will typically not affect the rate of convergence.  This is the approach I
take throughout the rest of the paper.

\section{Consistency and Rates of Convergence}%
\label{rate_sec}

To get consistency and rates of convergence, we need additional
assumptions that lead to $E_Pm(W_i,\theta)g(X_i)$ being large enough for
parameters far from the identified set.  Consistency and rate of
convergence results are stated for the Hausdorff metric on sets.  For a
metric $d$ on $\Theta$, define the Hausdorff distance between $d_H(A,B)$
any two sets $A$ and $B$ by
\begin{align*}
d_H(A,B)=\max \{\sup_{a\in A} \inf_{b\in B} d(a,b),
  \sup_{b\in B} \inf_{a\in A} d(a,b)\}.
\end{align*}
Here, I define $d_H$ to be the Hausdorff distance that arises when $d$ is
defined to be the metric associated with the Euclidean norm.
Note that under the conditions of Theorem \ref{coverage_thm},
$\Theta_0(P)\subseteq \mathcal{C}_n(\hat c_n)$ with probability
approaching one uniformly in $P\in\mathcal{P}$.  When this holds,
$\sup_{b\in \Theta_0(P)} \inf_{a\in \mathcal{C}_n(\hat c_n)} d(a,b)=0$
so that we just need to bound
$\sup_{a\in \mathcal{C}_n(\hat c_n)} \inf_{b\in \Theta_0(P)} d(a,b)$.

\subsection{Consistency}

The following assumption states that for $\theta$ bounded away from the
identified set, some moment $E_Pm_j(W_i,\theta)g_j(X_i)$ is negative and
is bounded away from zero.  This assumption is used to obtain consistency,
and is in general stronger than what would be needed for power against
fixed points in $\Theta\backslash\Theta_0(P)$, since consistency in the
sense of convergence under some metric on sets requires that the power
against fixed alternatives be uniform in alternatives bounded away from
the identified set in this metric.

\begin{assumption}\label{consistency_assump}
For every $\varepsilon>0$, there exists a $\delta>0$ such that, for all
$P\in\mathcal{P}$, $d_H(\theta,\Theta_0(P))>\varepsilon$ implies that
there exists a $g\in\mathcal{G}$ such that
$E_Pm_j(W_i,\theta)g_j(X_i)<-\delta$ for some $j$.
\end{assumption}

\begin{theorem}\label{consistency_thm}
Suppose that Assumption \ref{consistency_assump} and the assumptions of
Theorem \ref{coverage_thm} hold, and that $\sup_{P\in\mathcal{P}}P(\hat
c_n\sqrt{(\log n)/n}>\eta)\to 0$ for all $\eta>0$.  Then, for every
$\varepsilon>0$,
\begin{align*}
\sup_{P\in \mathcal{P}}
P(d_H(\Theta_0(P),\mathcal{C}_n(\hat c_n))>\varepsilon)
\stackrel{n\to\infty}{\to} 0.
\end{align*}
\end{theorem}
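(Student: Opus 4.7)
The plan is to combine Theorem \ref{coverage_thm} with a uniform lower bound on $T_n(\theta)$ for $\theta$ bounded away from $\Theta_0(P)$. By Theorem \ref{coverage_thm}, the event $\{\Theta_0(P) \subseteq \mathcal{C}_n(\hat c_n)\}$ has probability tending to $1$ uniformly in $P \in \mathcal{P}$, so on this event $\sup_{b \in \Theta_0(P)} \inf_{a \in \mathcal{C}_n(\hat c_n)} d(a,b) = 0$ and the Hausdorff distance reduces to the other half, $\sup_{a \in \mathcal{C}_n(\hat c_n)} \inf_{b \in \Theta_0(P)} d(a,b)$. Thus it suffices to show that, with probability tending to $1$ uniformly in $P$, every $\theta$ with $d_H(\theta,\Theta_0(P)) > \varepsilon$ satisfies $\sqrt{n/\log n}\,T_n(\theta) > \hat c_n$ and hence is excluded from $\mathcal{C}_n(\hat c_n)$.

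Fix $\varepsilon>0$ and let $\delta>0$ be the constant furnished by Assumption \ref{consistency_assump}. The key ingredient is uniform convergence of the sample moments: I would establish
\begin{align*}
\sup_{P \in \mathcal{P}} P\!\left(\sup_{\theta \in \Theta,\, g \in \mathcal{G},\, 1\le j \le d_Y}
  \bigl| \hat\mu_{n,j}(\theta,g) - \mu_{P,j}(\theta,g) \bigr| > \delta/2\right) \to 0,
\end{align*}
using the uniform covering-number bound in Assumption \ref{covering_assump} together with the uniform boundedness in Assumption \ref{bdd_assump}, via a VC-type uniform law of large numbers as in \citet{pollard_convergence_1984}. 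On the intersection of this event with the coverage event from Theorem \ref{coverage_thm}, any $\theta$ with $d_H(\theta,\Theta_0(P)) > \varepsilon$ admits, by Assumption \ref{consistency_assump}, some $g \in \mathcal{G}$ and $j$ with $\mu_{P,j}(\theta,g) < -\delta$, and hence $\hat\mu_{n,j}(\theta,g) < -\delta/2$.

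Assumption \ref{bdd_assump} together with the upper bound on $\sigma_n$ in Assumption \ref{cutoff_assump} gives a deterministic constant $M$ with $\hat\sigma_{n,j}(\theta,g) \vee \sigma_n \le M$, so the $j$th coordinate inside $S$ in the definition of $T_n(\theta)$ is at most $-\delta/(2M)$. The contrapositive of the second implication in Assumption \ref{S_assump}(ii) (namely, if some coordinate is strictly below $-c K_{S,2}$ then $S(t) > c$) then forces $S(\cdots) \ge \eta$ for a positive constant $\eta=\eta(\delta,M,K_{S,2})$, and taking the supremum over $g$ yields $T_n(\theta) \ge \eta$. Consequently $\sqrt{n/\log n}\,T_n(\theta) \ge \eta\sqrt{n/\log n}$ diverges, while the hypothesis $\sup_P P(\hat c_n \sqrt{(\log n)/n} > \eta) \to 0$ says exactly that $\hat c_n$ is of strictly smaller order than $\sqrt{n/\log n}$ uniformly in $P$; so on an event of probability tending to $1$ we have $\sqrt{n/\log n}\,T_n(\theta) > \hat c_n$ simultaneously for all such $\theta$, as required.

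The main technical hurdle is the uniform-in-$P$ empirical-process control on $\hat\mu_{n,j}-\mu_{P,j}$ indexed by $\Theta\times\mathcal{G}$; once this and the coverage event of Theorem \ref{coverage_thm} are in hand, the remainder is a deterministic chain of implications. Since the same covering-number and boundedness hypotheses are precisely what drives Theorem \ref{coverage_thm}, a maximal inequality of the required strength should already be a by-product of the tools assembled for that result, so the proof is mainly a matter of reusing those bounds at the slower scaling $\delta/2$ rather than $\sqrt{(\log n)/n}$.
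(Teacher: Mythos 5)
Your argument is correct and follows essentially the same route as the paper's proof: reduce via Theorem \ref{coverage_thm} to the excess-distance half of the Hausdorff metric, invoke Assumption \ref{consistency_assump} to produce a moment bounded below $-\delta$, control $\sup_{\theta,g,j}|\hat\mu_{n,j}-\mu_{P,j}|$ by a covering-number/boundedness maximal inequality, and then use Assumption \ref{S_assump} plus the upper bound on $\hat\sigma_{n,j}\vee\sigma_n$ to get a fixed positive lower bound on $T_n(\theta)$ that eventually exceeds $\hat c_n\sqrt{(\log n)/n}$. The only cosmetic difference is that you observe the bound $\hat\sigma_{n,j}\vee\sigma_n\le M$ holds deterministically from Assumptions \ref{bdd_assump} and \ref{cutoff_assump}, whereas the paper routes it through Lemma \ref{sd_est_lemma} as a high-probability statement; your version is a slight simplification but does not change the structure.
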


\subsection{Rates of Convergence under High Level Conditions}

While the focus of this paper is the interpretable conditions for rates of
convergence of the estimate of the identified set given in Section
\ref{interpretable_cond_sec}, I first present a result using a high level
condition.  The derivations of the rates of convergence in Section
\ref{interpretable_cond_sec} use this result along with additional
arguments relating the variance and expectation of the moments to the
conditions in this section.  The conditions in this section also encompass
the case where local alternatives violate the conditional moment
inequality on a non-shrinking set, leading to $\sqrt{n/\log n}$
convergence (such as Assumption \ref{ident_pos_assump} for the application
in Section \ref{selection_sec}), and it is instructive to compare the
verification of the conditions in this section under these two types of
set identification.

The next assumption is a high level assumption that
incorporates both the variance and expectation of the moments defined by
each $g\in\mathcal{G}$.  The assumption is similar to the polynomial
minorant condition in \citet{chernozhukov_estimation_2007}.

\begin{assumption}\label{rate_assump}
For some positive constants $C$, $\psi$, $\gamma$, and $\delta$ with
$\psi\le 1$, we have,
(i) for all $P\in\mathcal{P}$ and $\theta\in\Theta$ with
$d_H(\theta,\Theta_0(P))\le \delta$,
\begin{align*}
\inf_{g,j}
\frac{\mu_{P,j}(\theta,g)}
  {\sigma_{P,j}(\theta,g)\vee d_H(\theta,\Theta_0(P))^{\psi/\gamma}}
\le -C d_H(\theta,\Theta_0(P))^{1/\gamma}
\end{align*}
where the infemum is taken over $g\in\mathcal{G}$ and
$j\in\{1,\ldots,d_Y\}$ %
and (ii) $\sigma_n(n/\log n)^{\psi/2}$ is bounded uniformly in $P$.
\end{assumption}

Part (ii) of this assumption states that the cutoff $\sigma_n$ must go to
zero fast enough that the moments with the most identifying power relative
to their variance are scaled by their standard deviation.
How small $\sigma_{P,j}(\theta,g)$ can be in the assumption is determined
by how fast $\sigma_n$ goes to zero.  If the assumption holds with
$\psi$ small so that the infimum in the display is achieved when
$\sigma_{P,j}(\theta,g)$ is large relative to the distance from the
identified set, $\sigma_n$ can be chosen to go to zero more slowly.  If
part (i) holds for any $\psi$, it will hold for $\psi=1$, so that
choosing $\sigma_n$ so that part (ii) holds for $\psi=1$ will lead to
the assumption holding in a larger set of cases when the researcher is
unsure which $g$ functions have the most power.  In the cases considered
here, this will
not affect the rate of convergence, but will have a negative effect
on the tradeoff between power and size when considering power against
local alternatives at a particular rate.
In other words, part (i) of Assumption \ref{rate_assump} is weakest when
$\psi=1$, so, since $\sigma_n$ can always be chosen to go to zero at a
$[(\log n)/n]^{1/2}$ rate so that part (ii) holds with $\psi=1$, the
researcher can just choose $\sigma_n$ this way to have the rate of
convergence given in the next theorem hold under the weakest possible
conditions.

The following theorem gives rates of convergence to the identified set
under this assumption.

\begin{theorem}\label{rate_thm}
Suppose that Assumption \ref{consistency_assump} and \ref{rate_assump}
hold, and that Assumptions \ref{g_pos_assump}, \ref{covering_assump},
\ref{bdd_assump}, \ref{S_assump} and \ref{cutoff_assump} hold with
$\hat a_n$ and $\hat c_n$ chosen to satisfy the requirements of
Theorems \ref{coverage_thm} and \ref{consistency_thm}.  Then, for some
large $B$ that does not depend on $P$,
\begin{align*}
\sup_{P\in\mathcal{P}}
  P\left( \left(\frac{n}{\hat c_n^2\log n}\right)^{\gamma/2}
    d_H(\mathcal{C}_n(\hat c_n),\Theta_0(P)) > B\right)
\stackrel{n\to\infty}{\to} 0.
\end{align*}
\end{theorem}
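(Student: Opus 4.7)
By Theorem \ref{coverage_thm}, $\Theta_0(P)\subseteq \mathcal{C}_n(\hat c_n)$ with probability approaching one uniformly in $P\in\mathcal{P}$, so it suffices to control $\sup_{\theta\in\mathcal{C}_n(\hat c_n)}\inf_{\theta'\in\Theta_0(P)}d(\theta,\theta')$. The plan is to show the contrapositive: with probability approaching one, every $\theta\in\Theta$ with $d_H(\theta,\Theta_0(P))$ exceeding $r_n\equiv B(\hat c_n^2\log n/n)^{\gamma/2}$ is excluded from $\mathcal{C}_n(\hat c_n)$, i.e. $\sqrt{n/\log n}\,T_n(\theta)>\hat c_n$. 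Theorem \ref{consistency_thm} guarantees that, uniformly in $P$, any such $\theta$ that also lies in $\mathcal{C}_n(\hat c_n)$ satisfies $d_H(\theta,\Theta_0(P))\le \delta$ with probability going to one, so I may restrict attention to $r_n<r\equiv d_H(\theta,\Theta_0(P))\le \delta$ and invoke Assumption \ref{rate_assump}(i).

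For such $\theta$, Assumption \ref{rate_assump}(i) produces a pair $(g,j)$ with $\mu_{P,j}(\theta,g)\le -C r^{1/\gamma}[\sigma_{P,j}(\theta,g)\vee r^{\psi/\gamma}]$. The next step is to transfer this inequality to sample quantities through a self-normalized uniform empirical process bound of the form
\begin{align*}
\sup_{\theta,g,j}\frac{|\hat\mu_{n,j}(\theta,g)-\mu_{P,j}(\theta,g)|}{\sigma_{P,j}(\theta,g)\vee \sigma_n}\le K\sqrt{(\log n)/n},\qquad
\sup_{\theta,g,j}\left|\frac{\hat\sigma_{n,j}(\theta,g)\vee \sigma_n}{\sigma_{P,j}(\theta,g)\vee \sigma_n}-1\right|\le \tfrac{1}{2},
\end{align*}
holding with probability tending to one uniformly in $P$. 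These are exactly the maximal inequalities underlying the proof of Theorem \ref{coverage_thm}; they are available because the function classes $\mathcal{F}_{j,i}$ of Assumption \ref{covering_assump} satisfy uniform polynomial covering bounds, and because Assumption \ref{bdd_assump} permits ratio-type Bernstein arguments as in Pollard.

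Assumption \ref{rate_assump}(ii) gives $\sigma_n\le C'[(\log n)/n]^{\psi/2}$, and by definition $r_n^{\psi/\gamma}= B^{\psi/\gamma}\hat c_n^{\psi}[(\log n)/n]^{\psi/2}$, so for $B$ large (using $\hat c_n\ge K>0$) one has $\sigma_n\le r_n^{\psi/\gamma}\le r^{\psi/\gamma}$. Consequently $\sigma_{P,j}(\theta,g)\vee \sigma_n\le \sigma_{P,j}(\theta,g)\vee r^{\psi/\gamma}$. Combining the bound from Assumption \ref{rate_assump}(i) with the empirical process bounds and dividing by $\hat\sigma_{n,j}(\theta,g)\vee \sigma_n$ yields
\begin{align*}
\frac{\hat\mu_{n,j}(\theta,g)}{\hat\sigma_{n,j}(\theta,g)\vee \sigma_n}
\le -C_1 r^{1/\gamma}+C_2\sqrt{(\log n)/n}
\le -C_3 r^{1/\gamma}
\end{align*}
once $B$ is chosen so that $r^{1/\gamma}\ge r_n^{1/\gamma}=B^{1/\gamma}\hat c_n\sqrt{(\log n)/n}$ dominates the empirical process error.

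Finally, Assumption \ref{S_assump}(ii) translates this negative coordinate into $T_n(\theta)\ge K_{S,1}^{-1}C_3\, r^{1/\gamma}$, hence
\begin{align*}
\sqrt{n/\log n}\,T_n(\theta)\ge C_4 \sqrt{n/\log n}\,r_n^{1/\gamma}=C_4 B^{1/\gamma}\hat c_n,
\end{align*}
which exceeds $\hat c_n$ when $B$ is chosen large, giving $\theta\notin\mathcal{C}_n(\hat c_n)$ and completing the argument. The main obstacle I anticipate is step where the self-normalized empirical process inequalities above must be made uniform over $P\in\mathcal{P}$ and over the full index set $\Theta\times\mathcal{G}\times\{1,\ldots,d_Y\}$; the truncation at $\sigma_n$ is essential for converting the standard $\sqrt{(\log n)/n}$ Pollard-type deviation bounds into a ratio bound that remains informative for low-variance moments, and matching the $\psi/\gamma$ exponent in Assumption \ref{rate_assump} to the truncation level via Assumption \ref{rate_assump}(ii) is what drives the rate.
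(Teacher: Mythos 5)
Your proposal is correct and follows essentially the same route as the paper: reduce via Theorems \ref{coverage_thm} and \ref{consistency_thm}, extract a violated self-normalized moment from Assumption \ref{rate_assump} (using part (ii) to match the truncation $\sigma_n$ to the $\psi/\gamma$ exponent), and then contradict the uniform self-normalized maximal inequality and variance-ratio consistency (the paper's Lemmas \ref{sd_rate_lemma} and \ref{sd_est_lemma}) for $B$ large, concluding via Assumption \ref{S_assump}. The only blemish is notational: the final translation of a negative coordinate into a lower bound on $T_n(\theta)$ uses the contrapositive of the $K_{S,2}$ implication in Assumption \ref{S_assump}, so the constant should be $K_{S,2}^{-1}$ rather than $K_{S,1}^{-1}$, which does not affect the argument.
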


The results in the next section use Theorem \ref{rate_thm} along with
additional arguments to formalize the intuition described in Section
\ref{intuition_sec}.  The balancing of the mean and variance described in
Section \ref{intuition_sec} plays out through the ratio of the mean
$\mu_{P,j}(\theta,g)$ and the standard deviation $\sigma_{P,j}(\theta,g)$
in Assumption \ref{rate_assump}.  This determines the best attainable
value of $\gamma$ in Assumption \ref{rate_assump}.  If a sequence of $g$
functions can be found such that, as the distance of $\theta$ to the
identified set decreases, the magnitude of $\mu_{P,j}(\theta,g)$ decreases
much more slowly than $\sigma_{P,j}(\theta,g)$, the left hand side of the
display in
Assumption \ref{rate_assump} will be large in magnitude, so that the
condition will hold with a larger value of $\gamma$.  It is useful to
contrast this with the case where local alternatives violate one of the
conditional moment inequalities on a non-shrinking set.  In this case, $g$
can be chosen to be some fixed function that is positive only on this
set.  This leads to $\sigma_{P,j}(\theta,g)$ being fixed while
$\mu_{P,j}(\theta,g)$ typically goes to zero at a rate proportional to
$d_H(\theta,\Theta_0(P))$, so that Assumption \ref{rate_assump} holds with
$\gamma=1$, and Theorem \ref{rate_thm} gives a $\sqrt{n/\log n}$ rate of
convergence for the set estimator (see the proof of the part of Theorem
\ref{rate_thm_selection} that applies under Assumption
\ref{ident_pos_assump} for more details).  In cases like those described
in Section \ref{intuition_sec}, the best attainable ratio of
$\mu_{P,j}(\theta,g)$ to $\sigma_{P,j}(\theta,g)$ depends on smoothness
properties of the data generating process and
leads to a smaller $\gamma$ and a slower rate of convergence.  The results
in the next section cover this case.

\subsection{Interpretable Conditions for Rates of Convergence}
  \label{interpretable_cond_sec}

Assumption \ref{rate_assump} is a high level condition that incorporates
both the expectation and variance of each $g$ function.  The next
assumptions place restrictions on the shape of the conditional mean
$\bar m(\theta,x,P)=E_P(m(X_i,\theta)|X_i=x)$ as a function of $x$ and
$\theta$ that can be used to verify Assumption \ref{rate_assump}.
These conditions shed light on how the shape of the data generating
process and $\bar m(\theta,x,P)$ as a function of $\theta$ and $x$
determine the rate of convergence, and are easier to verify in many
applications.  Once consistency is established, these assumptions only
need to hold for $d_H(\theta,\Theta_0(P))<\varepsilon$ for some
$\varepsilon>0$.

\begin{assumption}\label{diff_assump}
$\bar m(\theta,x,P)$ is differentiable in $\theta$ with derivative
$\bar m_\theta(\theta,x,P)$ that is continuous as a function of $\theta$
uniformly in $(\theta,x,P)$
\end{assumption}

\begin{assumption}\label{x0_assump}
For some $\eta>0$ and $C>0$, we have, for all
$\theta\in\Theta\backslash\Theta_0(P)$, there exists a
$j_0(\theta,P)$, $\theta_0(\theta,P)$ and $x_0(\theta,P)$ such that
\begin{align*}
\bar m_{\theta,j_0(\theta,P)}(\theta_0(\theta,P),x_0(\theta,P),P)
  (\theta-\theta_0(\theta,P))
\le -\eta \|\theta-\theta_0(\theta,P)\|,
\end{align*}
$\bar m_{j_0(\theta,P)}(\theta_0(\theta,P),x_0(\theta,P),P)=0$, and, for
$\|x-x_0\|<\eta$,
\begin{align*}
|\bar m_{j_0(\theta,P)}(\theta_0(\theta,P),x,P)
-\bar m_{j_0(\theta,P)}(\theta_0(\theta,P),x_0(\theta,P),P)|
\le C\|x-x_0(\theta,P)\|^\alpha.
\end{align*}
\end{assumption}

The first part of Assumption \ref{x0_assump} states that, for $\theta$
close to
the identified set, there is some element in the identified set such that
that moving from this element to $\theta$ corresponds to moving some index
of the conditional mean downward.  This assumption restricts the angle
between the path from $\theta$ to some point on the identified set and the
directional derivative of the conditional mean for $\theta$ along this
path.  To see that the first part of Assumption \ref{x0_assump} comes from
a condition on the magnitude of the derivative of the conditional mean
with respect to $\theta$ and the angle of between the derivative and the
difference between $\theta$ and some point on the identified set, note
that, letting $\phi$ be the angle between $\bar
m_{\theta,j_0(\theta,P)}(\theta_0(\theta,P),x_0(\theta,P),P)$ and
$\theta-\theta_0(\theta,P)$,
\begin{align*}
&\bar m_{\theta,j_0(\theta,P)}(\theta_0(\theta,P),x_0(\theta,P),P)
  (\theta-\theta_0(\theta,P))  \\
&=\|\bar m_{\theta,j_0(\theta,P)}(\theta_0(\theta,P),x_0(\theta,P),P)\|
  \|\theta-\theta_0(\theta,P)\| \cos \phi.
\end{align*}
Thus, the first part of Assumption \ref{x0_assump} will be satisfied if
$\|\bar m_{\theta,j_0(\theta,P)}(\theta_0(\theta,P),x_0(\theta,P),P)\|$ is
bounded away from zero and $\cos \phi$ is negative and bounded away from
zero.

The second part of Assumption \ref{x0_assump} is a restriction on the
shape of the conditional mean as a function of $x$ for $\theta$ on the
boundary of the identified set.  Combining this with the first part of the
assumption determines which functions in $\mathcal{G}$ have power under
local alternatives.
As verified for several models in Section
\ref{applications_sec}, this typically follows from Holder conditions or
conditions on the first two derivatives of conditional means or quantiles
of variables in the data, leading to some value of $\alpha$ between zero
and $2$, or from conditions on densities and conditional means near the
boundary of the support of the conditioning variable, which can lead to
larger values of $\alpha$ after a transformation of the data.

To better understand how Assumption \ref{x0_assump} factors into the rate
of convergence, it is helpful to relate it to the discussion in Section
\ref{intuition_sec} giving an informal overview of the results for the
interval regression model.  The
interested reader can consult the proofs of the results in Sections
\ref{oneside_sec} and \ref{int_reg_sec} for more details.  The second part
of Assumption \ref{x0_assump} is the condition described in
(\ref{alpha_cond_intuition}).  The first part of Assumption
\ref{x0_assump} relates to the choice of local alternative used in Section
\ref{intuition_sec}.  In that section, we fixed a parameter
$\theta_0=(\theta_{1,0},\theta_{-1,0})$ on the boundary of the identified
set, and considered local alternatives of the form
$\theta_n=(\theta_{1,0}+a_n,\theta_{-1,0})$ for some positive sequence
$a_n\to 0$.  This leads to the characterization of the drift term of the
KS objective function in (\ref{bias_var_intuition}).  The same argument
goes through for most types of local alternatives that also vary the
slope, but certain types of local alternatives have to be ruled out.  In
the interval regression example, these correspond to local alternatives
that rotate the regression line around a single tangency point.  For
example, in the example in Section \ref{intuition_sec}, suppose $d_X=1$,
and $x_0=0$.  %
If we instead took a
sequence of local alternatives of the form $\theta_n'=(0,a_n)$, the last
line in (\ref{bias_var_intuition}) would instead be
\begin{align*}
E_P(W_i^H-\theta_{1,0}-X_i \theta_{-1,0})I(\|X_i-x_0\|\le h)
-a_nE_P(X_i-x_0) I(\|X_i-x_0\|\le h).
\end{align*}
Going through the rest of the argument with $a_nE_PI(\|X_i-x_0\|\le h)$
replaced by $a_nE_P(X_i-x_0) I(\|X_i-x_0\|\le h)$ gives a slower rate of
convergence
because the latter term goes to zero more quickly as $h$
decreases
(see Section \ref{slope_examp_sec} for a more detailed
treatment of this counterexample).

The first part of Assumption \ref{x0_assump} ensures that these types of
sequences of local alternatives do not determine the rate of convergence.
To see how this works, note that, applying
the left hand side of the first display of Assumption \ref{x0_assump} to
the interval regression example %
gives
$\bar m_{\theta}(\theta_0,x_0,P)
  (\theta-\theta_0)
=-(1,x_0)(\theta-\theta_0)$.
Thus, in order for Assumption \ref{x0_assump} to hold for some $\theta$
and this value of $\theta_0$, $(1,x_0)(\theta-\theta_0)$ must be
positive and have the same order of magnitude as $\|\theta-\theta_0\|$.
For $\theta_n$ in the above example, this is
$(1,x_0)(\theta_n-\theta_0)=(1,x_0)(a_n,0)'=a_n=\|\theta_n-\theta_0\|$, so
the first display of Assumption \ref{x0_assump} holds.  For the example
with $\theta_n'=(0,a_n)$ (and $x_0=0$)  %
$(1,x_0)(\theta_n'-\theta_0)=(1,0)(\theta_n'-\theta_0)=(1,0)(0,a_n)=0$, so
the first display of Assumption \ref{x0_assump} does not hold.

The next assumption states that, for any $P\in\mathcal{P}$, all points
must either be outside of the support of $X_i$ under $P$, or have
sufficient probability mass nearby.
While this assumption rules out $X_i$
having infinite support or having a density that goes to zero near the
boundary of its support, these cases can typically be handled by
transforming the data to make this assumption hold.  I do this for one
application in Section \ref{selection_sec}.

\begin{assumption}\label{support_assump}
For some $\eta>0$, we have, for all
$P\in\mathcal{P}$ and all $\varepsilon>0$,
$P(\|X_i-x\|\le \varepsilon)/\varepsilon^{d_X}\ge \eta$
for all $x$ on the support of $X_i$.
\end{assumption}

The next assumption ensures that the set of functions $\mathcal{G}$ is
rich enough to contain functions that behave like indicators of small
sets.  This assumption
holds for any class that contains indicator
sets of the open balls for any norm on $\mathbb{R}^{d_X}$, or, for any
nonnegative bounded kernel function $k:\mathbb{R}^{d_X}\to\mathbb{R}_+$
with finite support and $k(x)$ bounded away from zero near $x=0$, the
class
$\{x\mapsto k((x-t)/h)|t\in\mathbb{R}, h\ge 0\}$
that contains all dilations and translations of the kernel function $k$.

\begin{assumption}\label{g_rate_assump}
The functions in $\mathcal{G}$ are uniformly bounded
and for some constants $0<C_{\mathcal{G},1}<1$ and
$0<C_{\mathcal{G},2}<1$, we have that, for all $s\in\mathbb{R}^{d_X}$ and
$t\ge 0$, $\mathcal{G}$ contains a function $g$ such that
$C_{\mathcal{G},1}I(\|X_i-s\|<C_{\mathcal{G},2}t)
\le g(X_i) \le I(\|X_i-s\|<t)$.
\end{assumption}

The next theorem gives rates of convergence under these assumptions.

\begin{theorem}\label{rate_thm_alpha}
Suppose that Assumptions %
\ref{diff_assump}, \ref{x0_assump}, \ref{support_assump} and
\ref{g_rate_assump} hold.
Then part (i) of Assumption
\ref{rate_assump} holds with $\gamma=2\alpha/(d_X+2\alpha)$ and
$\psi=d_X/(d_X+2\alpha)$.
\end{theorem}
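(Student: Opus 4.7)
The plan is to verify part (i) of Assumption \ref{rate_assump} by exhibiting, for each $\theta$ with $\tau := d_H(\theta,\Theta_0(P))$ small, a specific $(g,j_0)$ whose ratio realizes the claimed order of magnitude. Observe that with $\gamma = 2\alpha/(d_X+2\alpha)$ and $\psi = d_X/(d_X+2\alpha)$ one has $\psi/\gamma = d_X/(2\alpha)$ and $1/\gamma = 1 + d_X/(2\alpha)$, so the target bound reduces to showing
\begin{align*}
\mu_{P,j_0}(\theta,g) \le -C_1\,\tau^{1+d_X/\alpha}
\quad\text{and}\quad
\sigma_{P,j_0}(\theta,g) \le C_2\,\tau^{d_X/(2\alpha)}
\end{align*}
for suitable constants, since then both the denominator is $O(\tau^{\psi/\gamma})$ and the ratio is $-\Omega(\tau^{1/\gamma})$. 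The choice of $g$ is driven by the bias--variance trade-off intuition of Section \ref{intuition_sec}: I take $j_0=j_0(\theta,P)$, $\theta_0=\theta_0(\theta,P)$, $x_0=x_0(\theta,P)$ from Assumption \ref{x0_assump} and, via Assumption \ref{g_rate_assump}, pick $g \in \mathcal{G}$ sandwiched between $C_{\mathcal{G},1} I(\|X_i-x_0\|<C_{\mathcal{G},2}h)$ and $I(\|X_i-x_0\|<h)$ with bandwidth $h = c\,\tau^{1/\alpha}$, where $c$ is a small constant chosen at the end.

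For the mean, I decompose
\begin{align*}
\bar m_{j_0}(\theta,x,P)
= \underbrace{\bar m_{j_0}(\theta_0,x_0,P)}_{=\,0}
 + [\bar m_{j_0}(\theta_0,x,P)-\bar m_{j_0}(\theta_0,x_0,P)]
 + [\bar m_{j_0}(\theta,x,P)-\bar m_{j_0}(\theta_0,x,P)].
\end{align*}
On $\mathrm{supp}(g)$ the middle bracket is bounded in absolute value by $Ch^\alpha$ via the Hölder part of Assumption \ref{x0_assump}. The last bracket equals $\bar m_{\theta,j_0}(\tilde\theta,x,P)(\theta-\theta_0)$ by Assumption \ref{diff_assump}; uniform continuity of $\bar m_\theta$ in $\theta$, combined with the angle bound $\bar m_{\theta,j_0}(\theta_0,x_0,P)(\theta-\theta_0)\le -\eta\|\theta-\theta_0\|$, yields $\le -(\eta/2)\|\theta-\theta_0\|$ uniformly on $\mathrm{supp}(g)$ for $\tau$ small and $h$ small. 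Integrating against $g$ gives $\mu_{P,j_0}(\theta,g) \le [Ch^\alpha-(\eta/2)\|\theta-\theta_0\|]\,E_P g(X_i)$; choosing $c$ small enough to absorb $Ch^\alpha$ into half the drift (using $\|\theta-\theta_0\|\ge\tau$), and applying Assumption \ref{support_assump} together with Assumption \ref{g_rate_assump} to get $E_P g(X_i)\gtrsim h^{d_X}$, yields the first displayed bound. For the variance, since $|m_{j_0}g|\le\overline Y$ by Assumption \ref{bdd_assump} and $g$ is supported in $B(x_0,h)$, I bound $\sigma_{P,j_0}(\theta,g)^2 \le E_P(m_{j_0}g)^2 \le \overline Y^2 P(\|X_i-x_0\|\le h)$, and use the local measure estimate to obtain $O(h^{d_X})$; combining with the $\mu$ bound closes the argument.

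The main obstacle I expect is propagating the angle condition of Assumption \ref{x0_assump}, which is stated only at the single point $(\theta_0,x_0)$, to a uniform statement for $x$ in the shrinking ball $B(x_0,h)$. Assumption \ref{diff_assump} delivers continuity in $\theta$ uniformly in $(x,P)$, so the $\tilde\theta\to\theta_0$ approximation is safe, but the passage from $\bar m_\theta(\theta_0,x_0,P)$ to $\bar m_\theta(\theta_0,x,P)$ requires additional continuity in $x$; the plan is to invoke such continuity (implicit from the broader smoothness setup, and innocuous since $h\to 0$) carefully enough to preserve the strict negativity of the directional derivative. A secondary subtle point is the upper bound on $P(\|X_i-x_0\|\le h)$: Assumption \ref{support_assump} only delivers a lower bound, so one must either assume or derive from the ambient boundedness a matching $O(h^{d_X})$ upper bound on the local measure; otherwise the variance control, and with it the identification of $\sigma_n$ truncation with $\tau^{\psi/\gamma}$, would be lost.
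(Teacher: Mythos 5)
Your construction of the witness $(g,j_0)$ and the decomposition of $\bar m_{j_0}(\theta,x,P)$ into a Hölder term and a mean-value term is essentially the paper's; the choice of bandwidth $h\asymp\tau^{1/\alpha}$ via Assumption \ref{g_rate_assump}, absorption of the $Ch^\alpha$ bias into the drift, and the use of Assumption \ref{support_assump} to lower-bound $E_P g(X_i)$ all match. You also correctly flag the subtle passage from $\bar m_{\theta,j_0}(\theta_0,x_0,P)$ to $\bar m_{\theta,j_0}(\theta^*,x,P)$: Assumption \ref{diff_assump} gives uniform continuity of the derivative only in $\theta$, so the paper's move implicitly also uses continuity of $\bar m_\theta$ in $x$; that implicit step is shared by the paper and is not a new gap you introduced.

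The genuine gap is in your variance control. You propose to show $\sigma_{P,j_0}(\theta,g)\le C_2\tau^{d_X/(2\alpha)}$ separately, which (as you yourself note) forces you to upper-bound $P(\|X_i-x_0\|\le h)$ by $O(h^{d_X})$. None of the invoked assumptions supply such an upper bound — Assumption \ref{support_assump} is one-sided, giving only the lower bound $P(\|X_i-x\|\le\varepsilon)\ge\eta\varepsilon^{d_X}$, and no bounded-density hypothesis appears in the theorem. Your proposal therefore cannot close as written; the sentence ``one must either assume or derive from the ambient boundedness a matching $O(h^{d_X})$ upper bound'' is an accurate confession that a needed hypothesis is missing. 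The paper avoids this entirely by never bounding $\sigma$ against $\tau^{\psi/\gamma}$ directly. Instead it shows (i) $\sigma_{P,j_0}(\theta,g)\le\overline Y\,\overline g^{1/2}\{E_Pg(X_i)\}^{1/2}$ — a bound in terms of $E_Pg$ that needs no upper bound on the local measure — and (ii) $d_H(\theta,\Theta_0(P))^{d_X/(2\alpha)}\le\text{const}\cdot\{E_Pg(X_i)\}^{1/2}$ using only the \emph{lower} bound on the local measure. Together these give $\sigma_{P,j_0}(\theta,g)\vee d_H(\theta,\Theta_0(P))^{\psi/\gamma}\le K\{E_Pg(X_i)\}^{1/2}$, so that after dividing $\mu_{P,j_0}(\theta,g)\le-(\eta/4)\|\theta-\theta_0\|E_Pg(X_i)$ by this, the $E_Pg$ factors collapse to $-(\text{const})\|\theta-\theta_0\|\{E_Pg(X_i)\}^{1/2}$, and one final application of the lower bound on $E_Pg$ delivers the exponent $1/\gamma=(d_X+2\alpha)/(2\alpha)$. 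Rework your variance step along these lines and the proof closes without any assumption on the upper tail of the local measure.
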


Applying Theorem \ref{rate_thm}, this gives a
$(n/\log n)^{\alpha/(d_X+2\alpha)}$ rate of convergence as long as the
cutoff point $\sigma_n$ for the standard deviation weighting decreases at
least as quickly as $((\log n)/n)^{\psi/2}=(n/\log n)^{d_X/(2d_X+4\alpha)}$,
but slightly more slowly than $((\log n)/n)^{1/2}$, so that Assumption
\ref{cutoff_assump} will hold with $\hat a_n$ large enough.  One choice of
$\sigma_n$
that will work regardless of $\alpha$ is to take some data dependent value
like $\sup_{\theta\in\Theta,1\le j\le d_Y}
  \{E_n[m(W_i,\theta)-E_nm(W_i,\theta)]^2\}^{1/2}$
and multiply by $((\log n)/n)^{1/2}b_n$, where $b_n$ is a sequence that goes
to infinity more slowly than any power of $n$ (such as $b_n=\log n$).

\section{Applications}\label{applications_sec}

In this section, I verify the conditions for rates of convergence stated
above for some applications under primitive conditions.
I start with a one sided regression model.

\subsection{One Sided Regression}\label{oneside_sec}

We posit a linear regression model $E_P(W_i^*|X_i)=X_i'\beta$ for a latent
variable $W_i^*$, but we only observe $(X_i,W_i^H)$, where $W_i^H$ is
known to be greater than or equal to $W_i^*$.  This leads to the
conditional moment inequality $E_P(W_i^H|X_i)\ge X_i'\beta$, which fits into
the framework of this paper with $d_Y=1$, $W_i=(X_i,W_i^H)$ and
$m(W_i,\theta)=W_i^H-\theta_1-X_i'\theta_{-1}$.  Here,
$\bar m(\theta,x)=E_P(W_i^H|X_i=x)-\theta_1-x'\theta_{-1}$.  I verify the
conditions used above to derive rates of convergence (Assumptions
\ref{diff_assump} and \ref{x0_assump})
under the following assumptions.

\begin{assumption}\label{reg_holder_assump}
For some $C>0$ and $0<\alpha\le 1$,
$\|E_P(W_i^H|X_i=x)-E_P(W_i^H|X_i=x')\|\le C\|x-x'\|^\alpha$ for $x$ and
$x'$ on the support of $X_i$ for all $P\in\mathcal{P}$.
\end{assumption}

Assumption \ref{reg_holder_assump} places a Holder condition on the
conditional mean of the upper bound of the outcome given $X_i$.  This is a
smoothness condition on the data generating process.  For $\alpha=1$,
Assumption \ref{reg_holder_assump} states that this conditional mean must
be Lipschitz continuous.  For smaller $\alpha$, the conditional mean must
still be continuous, but can be less smooth.

For $\alpha>1$, a condition like Assumption
\ref{reg_holder_assump} would restrict $E_P(W_i^H|X_i=x)$ to be
constant, since its slope would have to converge to zero at every point.
However, as described in Section \ref{intuition_sec}, this condition
factors into the rate of convergence only in restricting
$E_P(W_i^H-\theta_1-X_i'\theta_{-1}|X_i=x)$ to increase
no faster than a multiple of $\|x-x_0\|^\alpha$ near some tangency point
$x_0$ for $\theta=(\theta_1,\theta_{-1})$ on the boundary of the
identified set.
The same argument will still go through as long as this restriction on the
difference between $E_P(W_i^H|X_i=x)$ and a tangent line holds
for some $\alpha$, even if $\alpha>1$.  While placing this condition
directly on $E_P(W_i^H-\theta_1-X_i'\theta_{-1}|X_i=x)$ near tangency
points is a bit awkward in general, this condition has a natural
interpretation when $\alpha=2$.  In this case, it requires that the
difference between the conditional mean $E_P(W_i^H|X_i=x)$ and any tangent
line behave quadratically near the tangent point, which is implied by a
bound on the second derivative.  This is the content of the next
assumption.

\begin{assumption}\label{reg_2diff_assump}
(i) $E_P(W_i^H|X_i=x)$ has a second derivative that is bounded uniformly in
$P$ and $x$ and
(ii) for any $P\in\mathcal{P}$, $\theta_0\in\Theta_0(P)$,
$E_P(W_i^H|X_i=x)$ is bounded away from $\theta_{0,1}+x'\theta_{0,-1}$ on
the boundary of the support of $X_i$
\end{assumption}

The next assumption ensures that the condition on the tangent angle in
Assumption \ref{x0_assump} holds.  Under this assumption, rates of
convergence to the identified set depend on sequences of parameters in
which only the intercept parameter varies.  This condition ensures that
varying the intercept parameter a small amount near the boundary of the
identified set gives an element that is still in the parameter space
$\Theta$.

\begin{assumption}\label{reg_param_assump}
The subvector $\theta_{-1}$ of $\theta$ is bounded over $\theta\in\Theta$
and, for any $\theta\in\Theta$, $(\theta_1',\theta_{-1})\in \Theta$ for
all $\theta_1'\in\mathbb{R}$.
\end{assumption}

\begin{theorem}\label{oneside_thm}
Suppose that Assumptions \ref{reg_param_assump} holds in the one sided
linear regression model and $X_i$ has compact support for all
$P\in\mathcal{P}$.  Then, if Assumption \ref{reg_holder_assump} holds,
Assumptions \ref{diff_assump} and \ref{x0_assump} will hold for $\alpha$
specified in Assumption \ref{reg_holder_assump}.  If Assumption
\ref{reg_2diff_assump} holds, Assumptions \ref{diff_assump} and
\ref{x0_assump} will hold for $\alpha=2$.
\end{theorem}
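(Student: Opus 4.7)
The plan is to exploit the linearity of $\bar m(\theta,x,P) = E_P(W_i^H|X_i=x) - \theta_1 - x'\theta_{-1}$ in $\theta$, and to verify the angular condition in Assumption \ref{x0_assump} by projecting $\theta$ outside the identified set onto the boundary via a change in the intercept only, which Assumption \ref{reg_param_assump} permits. Assumption \ref{diff_assump} is essentially immediate: the derivative $\bar m_\theta(\theta,x,P) = (-1,-x')$ does not depend on $\theta$, so continuity in $\theta$ uniformly in $(\theta,x,P)$ is trivial, and compactness of $\text{supp}_P(X_i)$ makes it uniformly bounded.

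For Assumption \ref{x0_assump}, the first step is to characterize the identified set. Define $f(\theta_{-1},P) = \inf_{x\in\text{supp}_P(X_i)}[E_P(W_i^H|X_i=x) - x'\theta_{-1}]$. Under either Assumption \ref{reg_holder_assump} or \ref{reg_2diff_assump}(i) the conditional mean is continuous in $x$, so the infimum is attained at some $x_0(\theta,P)$ and $f$ is continuous in $\theta_{-1}$ by the theorem of the maximum. Combined with Assumption \ref{reg_param_assump}, the identified set is $\{(\theta_1,\theta_{-1})\in\Theta : \theta_1 \le f(\theta_{-1},P)\}$. Given $\theta=(\theta_1,\theta_{-1})\notin\Theta_0(P)$ close to $\Theta_0(P)$, set $\theta_0(\theta,P):=(f(\theta_{-1},P),\theta_{-1})$, which lies in $\Theta_0(P)$ by construction, and take $j_0=1$ since $d_Y=1$. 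Then $\theta-\theta_0 = (\theta_1-f(\theta_{-1},P),0)$ with $\|\theta-\theta_0\| = \theta_1-f(\theta_{-1},P) > 0$, and $\bar m(\theta_0,x_0,P)=0$ by the choice of $x_0$. The angular condition reduces to $\bar m_\theta(\theta_0,x_0,P)(\theta-\theta_0) = -(\theta_1-f(\theta_{-1},P)) = -\|\theta-\theta_0\|$, giving $\eta=1$.

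The remaining step verifies the Holder-type bound. Decompose $\bar m(\theta_0,x,P) - \bar m(\theta_0,x_0,P) = [E_P(W_i^H|X_i=x) - E_P(W_i^H|X_i=x_0)] - (x-x_0)'\theta_{-1}$. Under Assumption \ref{reg_holder_assump}, the first bracket is bounded by $C\|x-x_0\|^\alpha$, and for $\|x-x_0\|\le 1$ the linear term is bounded by $\|\theta_{-1}\|\cdot\|x-x_0\|^\alpha$ (using $\alpha\le 1$ and the uniform bound on $\|\theta_{-1}\|$ from Assumption \ref{reg_param_assump}), which yields the condition with the stated $\alpha$. Under Assumption \ref{reg_2diff_assump}, part (ii) forces $\bar m(\theta_0,\cdot,P)$ to attain its minimum of zero at an $x_0$ uniformly interior to $\text{supp}_P(X_i)$, so the first-order condition $\nabla_x E_P(W_i^H|X_i=x_0) = \theta_{-1}$ holds; a second-order Taylor expansion using part (i) then gives $|\bar m(\theta_0,x,P) - \bar m(\theta_0,x_0,P)| \le C\|x-x_0\|^2$ on a ball of uniform radius, yielding $\alpha=2$.

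The main obstacle I expect is arranging uniformity of the constants and of the Taylor radius $\eta$ in $(\theta,P)$: specifically, that $x_0(\theta,P)$ remains uniformly interior to $\text{supp}_P(X_i)$ in the $\alpha=2$ case, and that the Holder constants do not depend on $P$ or on $\theta_{-1}$. Both should follow from the ``bounded away from'' quantification built into Assumption \ref{reg_2diff_assump}(ii), compactness of the support, continuity of $f$, and the uniform bound on $\|\theta_{-1}\|$.
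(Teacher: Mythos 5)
Your proposal is correct and follows essentially the same route as the paper: the paper also picks $x_0(\theta,P)$ as the minimizer of $x\mapsto E_P(W_i^H|X_i=x)-\theta_1-x'\theta_{-1}$, sets $\theta_0(\theta,P)$ by shifting only the intercept down to the boundary (your $f(\theta_{-1},P)$ equals $\theta_1+t(\theta,P)$ in the paper's notation), verifies the angular bound from $\bar m_\theta=-(1,x')$ applied to $(\theta-\theta_0)=(t,0,\ldots,0)'$, and invokes the interior first-order condition plus a second-order expansion for the $\alpha=2$ case. The only real difference is cosmetic: where the paper asserts that the H\"older part is ``immediately implied'' by Assumption~\ref{reg_holder_assump}, you explicitly decompose $\bar m(\theta_0,x,P)-\bar m(\theta_0,x_0,P)$ into the conditional-mean increment and the linear term $(x-x_0)'\theta_{-1}$ and absorb the latter into $C\|x-x_0\|^\alpha$ using $\alpha\le 1$ and the bound on $\|\theta_{-1}\|$ from Assumption~\ref{reg_param_assump}; this makes an implicit step in the paper's argument explicit without changing the structure.
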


If the parameter space $\Theta$ is restricted so that all sequences of
local alternatives corresponded to rotating the regression line around a
tangent point, Assumption \ref{reg_param_assump} will fail and the rate
of convergence will be slower.  The verification of the
assumptions of Theorem \ref{rate_thm_alpha} will not go through in this
case because
the first part of Assumption \ref{x0_assump} will fail.  As an example,
suppose $E_P(W_i^H|X_i=x)=x^2$.  If the parameter space $\Theta$ does not
restrict the intercept parameter, the proof of Theorem \ref{oneside_thm}
will go through.  However, if
$\Theta=\{(0,\theta_1)|\theta_1\in\mathbb{R}\}$ (that is, we restrict the
intercept to be $0$), the rate of convergence will be determined by
local alternatives of the form $(0,a_n)$.  This corresponds to the sequence
of local alternatives $\theta_n'$ in the discussion in Section
\ref{interpretable_cond_sec}.  For the same reasons described in that
section, the first part of Assumption \ref{x0_assump} will not hold,
leading to a slower rate of convergence.  I show in Section
\ref{slope_examp_sec} of the appendix that the estimate of the identified
set converges no faster than at
a $((\log n)/n)^{1/5}$ rate, rather than the $((\log n)/n)^{2/5}$ rate for
the case where the parameter space is unrestricted.

These issues also make it more difficult
to state primitive conditions that lead to
Assumption \ref{x0_assump} in the %
case of two sided interval regression, in which we add the conditional
moment inequality $m_2(W_i,\theta)=\theta_{1}+X_i'\theta_{-1}-W_i^L$.
As with restricting the parameter space, adding the second conditional
moment inequality can
lead to the rate of convergence being deterimined by sequences of local
alternatives that correspond to rotating the regression line around a
tangent point.
One example that leads to this is when
$E_P(W_i^H|X_i=x)=x^2$ and $E_P(W_i^L|X_i=x)=-x^2$.
Adding the moment inequality on $W_i^L$ has the same effect as restricting
the intercept to be zero in the example above.  The rate of convergence to
the identified set is determined by local alternatives of the form
$(0,a_n)$, which leads to a slower rate of convergence.  The argument in
Section \ref{slope_examp_sec} applies here as well, leading to a slower
$((\log n)/n)^{1/5}$ rate of convergence.

For the case where $X_i$ is a scalar, these cases can be ruled out in the
interval regression model by
requiring that the conditional means of $W_i^H$ and $W_i^L$ be bounded
away from each other.  I go through this argument in the next section.
However, higher dimensions appear to require further conditions.

\subsection{Interval Regression with a Scalar Regressor}\label{int_reg_sec}

In the case of a single regressor, these types of slow convergence of a
slope parameter in the interval regression model can be ruled out by
relatively simple conditions.  In what follows, I consider an interval
regression model in which, in addition to $W_i^H$ defined as in Section
\ref{oneside_sec}, we observe $W_i^L$ that is known to satisfy $W_i^L\le
W_i^*$, so that $E_P(W_i^L|X_i)\le \theta_1+X_i'\theta_{-1}$.  This fits
into the framework of this paper with
$m(W_i,\theta)=(W_i^H-\theta_1-X_i'\theta_{-1}
  ,\theta_1+X_i'\theta_{-1}-W_i^L)$.
I restrict attention to the case where $d_X=1$, so that
$\theta_{-1}=\theta_2$ is a scalar.

In addition to the assumptions used in Section \ref{oneside_sec}, I impose
the following assumption, which rules out cases like the one described
above in which local alternatives correspond to rotating the regression
line around a tangent point.

\begin{assumption}\label{int_reg_assump}
(i) The support of $X_i$ is bounded uniformly in $P\in\mathcal{P}$.
(ii) The absolute value of the slope parameter $\theta_2$ is bounded
uniformly on the identified sets $\Theta_0(P)$ of $P\in\mathcal{P}$.
(iii) $E_P(W_i^H|X_i=x)-E_P(W_i^L|X_i=x)$ is bounded away from zero
uniformly in $x$ and $\mathcal{P}$.
\end{assumption}

\begin{theorem}\label{int_reg_thm}
In the interval regression model with $d_X=1$, suppose that Assumption
\ref{int_reg_assump} holds.
Then, if Assumption \ref{reg_holder_assump} holds as stated and with
$W_i^H$ replaced by $W_i^L$,
Assumptions \ref{diff_assump} and \ref{x0_assump} will hold for $\alpha$
specified in Assumption \ref{reg_holder_assump} (and $d_X=1$).  If
Assumption \ref{reg_2diff_assump} holds as stated and with $W_i^H$
replaced with $W_i^L$, Assumptions \ref{diff_assump} and \ref{x0_assump}
will hold for $\alpha=2$ (and $d_X=1$).
\end{theorem}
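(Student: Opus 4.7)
The plan is to verify Assumptions \ref{diff_assump} and \ref{x0_assump} for the interval regression model with $\theta\in\mathbb{R}^2$ and $d_Y=2$. Assumption \ref{diff_assump} is immediate: $\bar m(\theta,x,P)$ is affine in $\theta$ with derivatives $\bar m_{1,\theta}(\theta,x,P)=(-1,-x)$ and $\bar m_{2,\theta}(\theta,x,P)=(1,x)$, depending only on $x$, which is uniformly bounded by Assumption \ref{int_reg_assump}(i).

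For Assumption \ref{x0_assump}, fix $P\in\mathcal{P}$ and $\theta\notin\Theta_0(P)$ close to the identified set, and let $\theta_0$ be the unique Euclidean projection of $\theta$ onto the closed convex set $\Theta_0(P)$ (convex by linearity of $\bar m$ in $\theta$). Then $\|\theta-\theta_0\|=d_H(\theta,\Theta_0(P))$ and $\theta-\theta_0$ lies in the outward normal cone at $\theta_0$, which is generated by vectors of the form $(1,x^+)$ at upper tangent points $x^+$ and $-(1,x^-)$ at lower tangent points $x^-$. Since the normal cone in $\mathbb{R}^2$ has at most two extreme rays, I may write $\theta-\theta_0=\alpha v_1+\beta v_2$ with $\alpha,\beta\ge 0$, where each $v_i$ is one of these two forms.

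The crux is to bound the angle $\phi$ between $v_1$ and $v_2$ uniformly away from $\pi$ (when $\alpha,\beta>0$), so that at least one of $v_1\cdot(\theta-\theta_0)$ and $v_2\cdot(\theta-\theta_0)$ is at least $\eta\|\theta-\theta_0\|$ for a uniform $\eta>0$. By 2D geometry, if $\phi<\pi$ and $\theta-\theta_0$ lies in the cone spanned by $v_1,v_2$, it makes angle at most $\phi/2$ with the nearer extreme ray, so the larger projection is at least $\cos(\phi/2)\|v_i\|\|\theta-\theta_0\|\ge\cos(\phi/2)\|\theta-\theta_0\|$ (using $\|v_i\|\ge 1$). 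When the two extreme rays are both upper or both lower, the generators lie in a halfplane and $\phi<\pi$ is automatic. The nontrivial case is the mixed one, $v_1=(1,x^+)$ and $v_2=-(1,x^-)$: subtracting the tangency relations $\theta_{0,1}+x^+\theta_{0,2}=E_P(W_i^H|X_i=x^+)$ and $\theta_{0,1}+x^-\theta_{0,2}=E_P(W_i^L|X_i=x^-)$ and decomposing the difference through $x^+$ gives $(x^+-x^-)\theta_{0,2}\ge c-C|x^+-x^-|^\alpha$, using the gap condition of Assumption \ref{int_reg_assump}(iii) and the Hölder bound on $E_P(W_i^L|X_i=\cdot)$. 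With the uniform bound on $|\theta_{0,2}|$ from Assumption \ref{int_reg_assump}(ii), this forces $|x^+-x^-|\ge\delta_0$ for some uniform $\delta_0>0$, and then the identity $\sin^2\phi=(x^+-x^-)^2/[(1+(x^+)^2)(1+(x^-)^2)]\ge\delta_0^2/(1+\bar X^2)^2$ bounds $\phi$ away from $\pi$. Taking $(j_0,x_0)$ to be the generator realizing the projection bound, the first display of Assumption \ref{x0_assump} follows from $\bar m_{j_0,\theta}=-v_{j_0}$, and $\bar m_{j_0}(\theta_0,x_0,P)=0$ holds by construction.

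Finally, the local Hölder bound on $\bar m_{j_0}(\theta_0,\cdot,P)$ near $x_0$ is routine. For $j_0=1$, expand $\bar m_1(\theta_0,x,P)-\bar m_1(\theta_0,x_0,P)=[E_P(W_i^H|X_i=x)-E_P(W_i^H|X_i=x_0)]-(x-x_0)\theta_{0,2}$. Under Assumption \ref{reg_holder_assump}, bound the first bracket by $C|x-x_0|^\alpha$ and absorb the linear term into a multiple of $|x-x_0|^\alpha$ for $|x-x_0|\le 1$ (using $\alpha\le 1$ and the uniform bound on $|\theta_{0,2}|$). Under Assumption \ref{reg_2diff_assump}, part (ii) forces $x_0$ to lie in the interior of the support of $X_i$, so the first-order condition at the interior minimum of $\bar m_1(\theta_0,\cdot,P)$ gives $\theta_{0,2}=(d/dx)E_P(W_i^H|X_i=x)|_{x_0}$, and Taylor's theorem with the uniformly bounded second derivative yields an $(x-x_0)^2$ bound, giving $\alpha=2$. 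The case $j_0=2$ is symmetric using the corresponding hypothesis on $W_i^L$. The main obstacle is the angle-bound step in the third paragraph: Assumption \ref{int_reg_assump} is exactly what prevents the ``rotating regression line'' failure mode after Theorem \ref{oneside_thm}, by forcing upper and lower tangent points to be strictly separated so the outward normal cone does not degenerate in a direction orthogonal to all active constraints.
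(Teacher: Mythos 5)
Your argument is correct in outline but takes a genuinely different route from the paper. The paper splits into two cases according to whether the slope component of $\theta$ is attainable within $\Theta_0(P)$: if $(\theta_1',\theta_2)\in\Theta_0(P)$ for some $\theta_1'$, it recycles the one-sided argument (vary the intercept only, as in Theorem \ref{oneside_thm}); if instead $\theta_2>\overline\theta_2$ or $\theta_2<\underline\theta_2$, it fixes $\theta_0=(\theta_1',\overline\theta_2)$ with its upper tangent point $x_{0,1}$ and lower tangent point $x_{0,2}$, observes that the \emph{sum} of the two candidate directional derivatives equals $-(x_{0,1}-x_{0,2})(\theta_2-\overline\theta_2)$ (so one of them is suitably negative once $x_{0,1}-x_{0,2}$ is bounded away from zero and $|\theta_1-\theta_1'|/|\theta_2-\overline\theta_2|$ is bounded), and handles the large-ratio regime by a direct computation. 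Note that the paper's $\theta_0$ need not be the nearest point of $\Theta_0(P)$, which is harmless since Assumption \ref{x0_assump} only requires \emph{some} $\theta_0\in\Theta_0(P)$. You instead project $\theta$ onto the convex set $\Theta_0(P)$, decompose $\theta-\theta_0$ over (at most two) active constraint gradients, and bound the angle between any two active gradients away from $\pi$: trivially for same-type pairs via the bounded support, and for mixed pairs via exactly the gap-plus-continuity computation that the paper uses to separate $x_{0,1}$ from $x_{0,2}$. Your route is unified (no case split on the slope), works with the nearest point, and makes transparent why Assumption \ref{int_reg_assump} rules out the rotating-regression-line degeneracy; the paper's route is more elementary and avoids any convex-analytic machinery.

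Two steps in your sketch need shoring up. First, the claim that $\theta-\theta_0$ is a nonnegative combination of at most two \emph{active} gradients is asserted rather than proved; it is the content of a normal-cone representation for a semi-infinite system of linear inequalities, and the standard intersection formula $N_{C_1\cap C_2}=N_{C_1}+N_{C_2}$ requires a qualification that fails precisely when $\Theta_0(P)$ has empty interior (e.g.\ point identification). It does hold here — when the convex hulls of the upper and lower active sets meet, the active gradients already positively span $\mathbb{R}^2$, and otherwise a Slater/Danskin argument applies — but this case analysis (or an equivalent duality argument) should be written out. Second, your separation bound $|x^+-x^-|\ge\delta_0$ is derived using the H\"older modulus of $E_P(W_i^L|X_i=\cdot)$; under Assumption \ref{reg_2diff_assump} a bounded second derivative alone gives no modulus of continuity, so in the $\alpha=2$ case you should instead use part (ii) to place the lower tangent point in the interior, match slopes by the first-order condition, and get $c\le \tfrac{M}{2}(x^+-x^-)^2$ from a second-order expansion (the paper is equally terse on this point). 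Neither issue breaks the argument.
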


\subsection{One Sided Quantile Regression}\label{quantile_oneside_sec}

In this and the next section, I treat quantile versions of the regression
models considered above.  Here, we have a model for a conditional
quantile of the unobserved variable $W_i^*$ rather than the mean.  The
results are essentially the same, but, in addition to smoothness
conditions on the quantile itself, conditions are needed on the joint
density of the observed variables near the conditional quantile to
translate these into the conditions on $\bar m(\theta,x,P)$.

First, consider the one sided case in which
we observe $(X_i,W_i^H)$ with $W_i^H\ge W_i^*$.
For a random variable $Z_i$, define
$q_{\tau,P}(Z_i|X_i)$ to be the $\tau$th quantile of $Z_i$ conditional on
$X_i$ under $P$. Suppose that, for some known $\tau$, the conditional
$\tau$th quantile of $W_i^*$ satisfies
$q_{\tau,P}(W_i^*|X_i)=\theta_1+X_i'\theta_{-1}$ for some $\theta$.  Then
$E_P[\tau-I(W_i^*\le \theta_1+X_i'\theta_{-1})|X_i]=0$ so that
$E_P[\tau-I(W_i^H\le \theta_1+X_i'\theta_{-1})|X_i]\ge 0$.  Thus, this
fits into the framework of this paper with $W_i=(X_i,W_i^H)$ and
$m(W_i,\theta)=\tau-I(W_i^H\le \theta_1+X_i'\theta_{-1})$.

In many situations, models for quantiles of an outcome variable given
covariates can be more informative under interval data than models for the
conditional mean.  If $W_H$ can be infinite with positive probability
conditional on any value of $X_i$, the identified set for a conditional
mean model will be the entire parameter space.  If $W_H$ has a low
probability of being large or infinite, and is usually close to $W_i^*$,
a model for conditional quantiles of the unobserved variable will still
give informative bounds with interval data.

Smoothness conditions that lead to Assumptions \ref{diff_assump} and
\ref{x0_assump} for the quantile model are similar to those for the
conditional mean considered above, but with smoothness assumptions placed
on the conditional quantile $q_{\tau,P}(W_i^H|X_i)$ rather than the
conditional mean, and additional assumptions on the joint density of
$(X_i,W_i^H)$.  The first two assumptions are exactly the same as
Assumptions \ref{reg_holder_assump} and \ref{reg_2diff_assump}, but with
the conditional mean replaced by the conditional $\tau$th quantile.

\begin{assumption}\label{quantile_holder_assump}
For some $C>0$ and $\alpha\le 1$,
$\|q_{\tau,P}(W_i^H|X_i=x)-q_{\tau,P}(W_i^H|X_i=x')\|\le C\|x-x'\|^\alpha$
for $x$ and $x'$ on the support of $X_i$ for all $P\in\mathcal{P}$.
\end{assumption}

\begin{assumption}\label{quantile_2diff_assump}
(i) $q_{\tau,P}(W_i^H|X_i=x)$ has a second derivative that is bounded
uniformly in $P$ and $x$ and
(ii) for any $P\in\mathcal{P}$, $\theta_0\in\Theta_0(P)$,
$q_{\tau,P}(W_i^H|X_i=x)$ is bounded away from
$\theta_{0,1}+x'\theta_{0,-1}$ on the boundary of the support of $X_i$.
\end{assumption}

The next assumption states that $W_i^H$ has a density near its $\tau$th
quantile conditional on $X_i$.  One type of interval data that will
frequently lead to this assumption holding is if $(X_i,W_i^*)$ has a well
behaved joint density, and $W_i^H$ is equal to $W_i^*$ with high
probability and much larger than $W_i^*$ with some small probability.  For
example, suppose that $(X_i,W_i^*)$ has a joint density, and, $W_i^H$ is
either equal to $\infty$ or $W_i^*$, with $P(W_i^H=\infty|X_i=x,W_i^*=w)$
a smooth function of $(x,w)$ that is bounded from above by some constant
strictly less than $1-\tau$.  Then $(X_i,W_i^H)$ will have a joint density
near the $\tau$th conditional quantile of $W_i^H$.  This type of situation
arises naturally with missing data on an outcome variable.
However, other types of interval data will not lead to this assumption
holding.  If $W_i^H$ is the upper end of an interval from a survey in
which $W_i^*$ is always reported in the same interval, $W_i^H$ will not
have a density conditional on $X_i$.

\begin{assumption}\label{quantile_density_assump}
For some $\eta>0$,
$W_i^H|X_i$ has a conditional density $f_{W_i^H|X_i}(w|x)$ on
$\{(x,w)|q_{\tau,P}(W_i^H|X_i=x)-\eta\le w\le
q_{\tau,P}(W_i^H|X_i=x)+\eta\}$ that is continuous as a function of $w$
uniformly in $(w,x,P)$
and satisfies
$\underline f\le f_{W_i^H|X_i}(w|x)  \le \overline f$ for some
$0<\underline f<\overline f<\infty$.
\end{assumption}

Under these conditions, Assumptions \ref{diff_assump} and \ref{x0_assump}
will hold for the one sided quantile regression model.  The proof is
similar to the proof of Theorem \ref{oneside_thm} in the one sided
regression model.  The only difference is that some additional steps are
needed to translate smoothness conditions on the $\tau$th quantile into
smoothness conditions on the objective function using the assumptions on
the conditional density of $W_i^H$ given $X_i$.

\begin{theorem}\label{quantile_oneside_thm}
Suppose that the support of $X_i$ is bounded uniformly in
$P\in\mathcal{P}$, and that
Assumptions \ref{reg_param_assump} and
\ref{quantile_density_assump} hold in the one sided quantile regression
model.
Then, if Assumption \ref{quantile_holder_assump} holds,
Assumptions \ref{diff_assump} and \ref{x0_assump} will hold for $\alpha$
specified in Assumption \ref{quantile_holder_assump}.  If Assumption
\ref{quantile_2diff_assump} holds,
Assumptions \ref{diff_assump} and
\ref{x0_assump} will hold for $\alpha=2$.
\end{theorem}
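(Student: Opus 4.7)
The plan is to mirror the structure of the proof of Theorem \ref{oneside_thm}, using Assumption \ref{quantile_density_assump} to translate the smoothness assumptions stated at the level of the conditional quantile $q(x) := q_{\tau,P}(W_i^H \mid X_i = x)$ into corresponding smoothness of $\bar m(\theta,x,P) = \tau - F(\theta_1 + x'\theta_{-1} \mid x)$, where $F(\cdot \mid x)$ is the conditional cdf of $W_i^H$ given $X_i=x$. On the $\eta$-window around $q(x)$ where Assumption \ref{quantile_density_assump} supplies a continuous density $f(\cdot \mid x) \in [\underline f, \overline f]$, the chain rule gives $\bar m_\theta(\theta,x,P) = -f(\theta_1 + x'\theta_{-1} \mid x)(1,x')$, and uniform continuity of $f$ in $w$ together with the compact support of $X_i$ yield Assumption \ref{diff_assump} on a neighborhood of $\Theta_0(P)$ (which is all that is needed after consistency is established).

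For Assumption \ref{x0_assump}, I would fix $\theta \notin \Theta_0(P)$ close to the identified set and imitate the construction from the proof of Theorem \ref{oneside_thm}: set $\theta_{0,-1}(\theta,P) := \theta_{-1}$ and $\theta_{0,1}(\theta,P) := \inf_x [q(x) - x'\theta_{-1}]$. Assumption \ref{reg_param_assump} puts $\theta_0 = (\theta_{0,1},\theta_{-1})$ in $\Theta$, and by construction it lies on the boundary of $\Theta_0(P)$. Compactness of the support of $X_i$ plus continuity of $q$ (from Assumption \ref{quantile_holder_assump} or \ref{quantile_2diff_assump}) yields a tangent point $x_0=x_0(\theta,P)$ at which $q(x_0) = \theta_{0,1} + x_0'\theta_{0,-1}$. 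Since $\theta \notin \Theta_0(P)$ and $\theta_0$ shares $\theta$'s slope, $\theta - \theta_0 = (\theta_1 - \theta_{0,1},0)$ with $\theta_1 - \theta_{0,1} > 0$, and evaluating the derivative above at $(\theta_0,x_0)$ gives
\begin{equation*}
\bar m_\theta(\theta_0,x_0,P)(\theta - \theta_0) = -f(q(x_0) \mid x_0)(\theta_1 - \theta_{0,1}) \le -\underline f\,\|\theta - \theta_0\|,
\end{equation*}
while $\bar m_1(\theta_0,x_0,P) = \tau - F(q(x_0) \mid x_0) = 0$, delivering the first display and the tangency condition of Assumption \ref{x0_assump}.

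For the Holder-type bound in $x$, let $h(x) := q(x) - \theta_{0,1} - x'\theta_{0,-1}$, which is nonnegative on the support of $X_i$ with $h(x_0)=0$. For $x$ close to $x_0$, both $q(x)$ and $\theta_{0,1} + x'\theta_{0,-1}$ lie in the $\eta$-window of $q(x)$, so the mean value theorem combined with the density bound yields $|\bar m_1(\theta_0,x,P)| = F(q(x) \mid x) - F(\theta_{0,1} + x'\theta_{0,-1} \mid x) \le \overline f\, h(x)$. Under Assumption \ref{quantile_holder_assump} I use $|q(x) - q(x_0)| \le C\|x - x_0\|^\alpha$ together with the trivial bound $|(x - x_0)'\theta_{0,-1}| \le \|\theta_{0,-1}\|\,\|x - x_0\|^\alpha$ (valid for $\|x-x_0\|\le 1$ since $\alpha \le 1$) to get $h(x) \le C'\|x - x_0\|^\alpha$. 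Under Assumption \ref{quantile_2diff_assump}, part (ii) places $x_0$ in the interior of the support, so the first-order condition $\nabla h(x_0) = 0$ holds for the nonnegative $h$, and a second-order Taylor expansion using the uniform bound from (i) gives $h(x) \le M\|x - x_0\|^2$.

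The main obstacle is the interplay between the tangent point $x_0$ and the domain on which Assumption \ref{quantile_density_assump} actually pins down the density: the mean value theorem step requires both $q(x)$ and $\theta_{0,1} + x'\theta_{0,-1}$ to remain in the $\eta$-window of $q(x)$, which holds for $x$ near $x_0$ by continuity of $q$ but must be tracked carefully so that the proof only quantifies over such $x$. The more delicate point, handled by Assumption \ref{quantile_2diff_assump}(ii) in the $\alpha=2$ case, is ruling out $x_0$ on the boundary of the support of $X_i$, since otherwise the tangency identity $\nabla h(x_0) = 0$ need not hold and only a one-sided Taylor argument would be available. Beyond these points the proof is a direct adaptation of that of Theorem \ref{oneside_thm}, with $f(\cdot \mid x)$ playing the role of a linearizing Jacobian that converts bounds on the conditional quantile into bounds on $\bar m$.
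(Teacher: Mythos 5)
Your proposal is correct and follows essentially the same route as the paper's proof: the same construction of $\theta_0$ by shifting the intercept to $\inf_x[q(x)-x'\theta_{-1}]$ with tangency point $x_0$, the same chain-rule identity $\bar m_\theta(\theta,x,P)=-f_{W_i^H|X_i}(\theta_1+x'\theta_{-1}|x)(1,x')$ giving Assumption \ref{diff_assump} and the $-\underline f\|\theta-\theta_0\|$ bound, and the same use of the density bounds to convert the H\"older or second-derivative condition on the conditional quantile into the bound on $\bar m$ (including the interior-tangency argument via Assumption \ref{quantile_2diff_assump}(ii) in the $\alpha=2$ case). No gaps beyond those the paper itself leaves implicit.
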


\subsection{Interval Quantile Regression with a Scalar Regressor}
  \label{quantile_int_reg_sec}

Now consider a quantile regression model with two sided interval data in
which, in addition to $W_i^H$, we observe a variable $W_i^L$ that is known
to satsify $W_i^L\le W_i^*$.  This leads to
$E_P[I(W_i^L\le \theta_1+X_i'\theta_{-1})-\tau|X_i]
\ge E_P[I(W_i^*\le \theta_1+X_i'\theta_{-1})-\tau|X_i]
= 0$ so that the interval quantile regression fits into the conditional
moment inequality framework with $W_i=(X_i,W_i^L,W_i^H)$ and
$m(W_i,\theta)=(\tau-I(W_i^H\le \theta_1+X_i'\theta_{-1})
  ,I(W_i^L\le \theta_1+X_i'\theta_{-1})-\tau)$.

As with the case of mean regression, the condition on the angle of the
derivative and path in Assumption \ref{x0_assump} will not hold in general
in the quantile regression model with two sided interval data because of
cases where alternatives are closest to a point in the identified set
where the regression line is rotated around a contact point.  Sufficient
conditions to rule this out in the case of a scalar regressor are similar
as well.  Bounding the conditional quantiles of the upper and lower
endpoints of the interval away from each other rules out these cases when
the regressors include only a constant and a scalar.  The next assumption
is the same as Assumption \ref{int_reg_assump}, but with conditional
expectations replaced by conditional $\tau$th quantiles.

\begin{assumption}\label{quantile_int_reg_assump}
(i) The support of $X_i$ is bounded uniformly in $P\in\mathcal{P}$.
(ii) The absolute value of the slope parameter $\theta_2$ is bounded
uniformly on the identified sets $\Theta_0(P)$ of $P\in\mathcal{P}$.
(iii) $q_{\tau,P}(W_i^H|X_i=x)-q_{\tau,P}(W_i^L|X_i=x)$ is bounded away
from zero uniformly in $x$ and $\mathcal{P}$.
\end{assumption}

The next theorem states that KS statistic based set estimators will have
the same rate of convergence as in the one sided model with a scalar
regressor under these conditions, and the assumption stated earlier on the
density of the observed variables.  The proof is similar to the proof of
the analogous result for mean regression, Theorem \ref{int_reg_thm}, but
with additional steps to translate conditions on quantiles and densities
into conditions on the conditional mean of the objective function.

\begin{theorem}\label{quantile_int_reg_thm}
In the interval regression example with $d_X=1$, suppose that Assumptions
\ref{quantile_density_assump} and \ref{quantile_int_reg_assump} hold, and
that Assumption \ref{quantile_density_assump} also holds with $W_i^H$
replaced by $W_i^L$.
Then, if Assumption \ref{quantile_holder_assump} holds as stated and with
$W_i^H$ replaced by $W_i^L$,
Assumptions \ref{diff_assump} and \ref{x0_assump} will hold for $\alpha$
specified in Assumption \ref{quantile_holder_assump} (and $d_X=1$).  If
Assumption \ref{quantile_2diff_assump} holds as stated and with $W_i^H$
replaced with $W_i^L$, Assumptions \ref{diff_assump} and \ref{x0_assump}
will hold for $\alpha=2$ (and $d_X=1$).
\end{theorem}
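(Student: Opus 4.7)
The plan is to adapt the proof of Theorem \ref{int_reg_thm} to the quantile setting by using Assumption \ref{quantile_density_assump} to translate smoothness of the conditional quantile functions $q_{\tau,P}(W_i^H|X_i=x)$ and $q_{\tau,P}(W_i^L|X_i=x)$ into smoothness of the conditional mean $\bar m(\theta,x,P)$ of the moment function. Writing $\bar m_1(\theta,x,P)=\tau-F_{W^H|X}(\theta_1+x\theta_2|x)$ and $\bar m_2(\theta,x,P)=F_{W^L|X}(\theta_1+x\theta_2|x)-\tau$, the derivatives $\bar m_{1,\theta}(\theta,x,P)=-f_{W^H|X}(\theta_1+x\theta_2|x)(1,x)'$ and $\bar m_{2,\theta}(\theta,x,P)=f_{W^L|X}(\theta_1+x\theta_2|x)(1,x)'$ exist on a neighborhood of the boundary of $\Theta_0(P)$, and they are continuous in $\theta$ uniformly in $(x,P)$ by the assumed uniform continuity of the densities in Assumption \ref{quantile_density_assump} combined with Assumption \ref{quantile_int_reg_assump}(i)--(ii). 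This verifies Assumption \ref{diff_assump}.

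To verify Assumption \ref{x0_assump}, given $\theta\in\Theta\setminus\Theta_0(P)$ sufficiently close to $\Theta_0(P)$, I take $\theta_0(\theta,P)$ to be a nearest point of $\Theta_0(P)$ under the Euclidean metric. Since $d_X=1$, $\Theta_0(P)$ consists of parameters whose line $y=\theta_1+x\theta_2$ lies between the two conditional quantile functions on the support of $X_i$. By continuity of these quantile functions (implied by either smoothness assumption) and compactness of the support (Assumption \ref{quantile_int_reg_assump}(i)), the line corresponding to $\theta_0$ must touch one of the quantile functions at some point $x_0(\theta,P)$; take $j_0=1$ when the contact occurs with $q_{\tau,P}(W_i^H|X_i=\cdot)$ and $j_0=2$ otherwise. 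Assumption \ref{quantile_2diff_assump}(ii) (or the analogous edge-of-support condition inherited from the Holder case as in Theorem \ref{oneside_thm}) places $x_0$ in the interior of the support of $X_i$, so that classical tangency applies.

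The main obstacle is the angle condition $\bar m_{\theta,j_0}(\theta_0,x_0,P)(\theta-\theta_0)\le-\eta\|\theta-\theta_0\|$. Since $\bar m_{\theta,j_0}(\theta_0,x_0,P)=\pm f_{\cdot|X}(q_{\tau,P}(\cdot|X_i=x_0)|x_0)(1,x_0)$ and the density is bounded below by $\underline f>0$ from Assumption \ref{quantile_density_assump}, it suffices to show that $(1,x_0)(\theta-\theta_0)$ is bounded below in absolute value, with the correct sign, by a constant multiple of $\|\theta-\theta_0\|$. This is exactly where Assumption \ref{quantile_int_reg_assump}(iii) (quantiles bounded apart) together with Assumption \ref{quantile_int_reg_assump}(ii) (bounded slope) enters: absent a gap between the quantile functions, $\theta_0$ could admit simultaneous tangencies with both quantile functions at distinct points, so that $\theta-\theta_0$ would correspond to a pure rotation about $x_0$ and $(1,x_0)(\theta-\theta_0)$ would vanish, as in the counterexample of Section \ref{slope_examp_sec}. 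With the gap and the uniform slope bound, a compactness-and-continuity argument parallel to the one used in Theorem \ref{int_reg_thm} shows that the projection residual $\theta-\theta_0$ must have a non-degenerate component in the $(1,x_0)$ direction with the appropriate sign, uniformly in $P\in\mathcal{P}$.

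The second display of Assumption \ref{x0_assump} follows from the density translation. Since tangency at $x_0$ gives $\bar m_{j_0}(\theta_0,x_0,P)=0$, the mean value theorem applied to the CDF together with Assumption \ref{quantile_density_assump} yields
\[
|\bar m_{j_0}(\theta_0,x,P)-\bar m_{j_0}(\theta_0,x_0,P)|=|\bar m_{j_0}(\theta_0,x,P)|\le \overline f\cdot|\theta_{0,1}+x\theta_{0,2}-q_{\tau,P}(\cdot|X_i=x)|,
\]
and the last factor is bounded by $C|x-x_0|^\alpha$ by tracking value-tangency at $x_0$: under Assumption \ref{quantile_holder_assump} the Holder bound on the quantile and $\alpha\le 1$ absorb the linear term $\theta_{0,2}(x-x_0)$; under Assumption \ref{quantile_2diff_assump} with $\alpha=2$, differentiability at the interior tangency forces the line's slope to match the quantile's slope, so a Taylor expansion gives the quadratic bound. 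This completes the verification and hence the theorem.
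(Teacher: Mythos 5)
Your overall plan (use Assumption \ref{quantile_density_assump} to turn quantile smoothness into smoothness of $\bar m$, verify Assumption \ref{diff_assump} via $\bar m_{\theta,j}=\mp f_{\cdot|X}(\theta_1+x\theta_2|x)(1,x)$, and get the second display of Assumption \ref{x0_assump} by a mean value argument in the CDF) matches the paper, which simply imports these steps from the proof of Theorem \ref{quantile_oneside_thm}. The problem is the first display of Assumption \ref{x0_assump}, which is the only part this theorem actually has to prove anew, and there your argument has a genuine gap. You define $\theta_0(\theta,P)$ as the Euclidean projection onto $\Theta_0(P)$ and then speak of ``the'' contact point $x_0$, choosing $j_0$ by which quantile curve is touched. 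But the difficult case is precisely when $\theta_2$ lies outside $[\underline\theta_2,\overline\theta_2]$, so that the relevant boundary point of $\Theta_0(P)$ is a corner whose line touches \emph{both} quantile curves, at two points $x_{0,2}\le x_{0,1}$; then $\theta-\theta_0$ can be (nearly) orthogonal to $(1,x_{0,1})$ while having a large component along $(1,x_{0,2})$, so the choice of $(j_0,x_0)$ must depend on the direction $\theta-\theta_0$, not on ``where the contact occurs.'' Your proposal never makes this selection, and the step you defer to — ``a compactness-and-continuity argument parallel to Theorem \ref{int_reg_thm}'' — does not transfer as stated, because that argument is built around a different choice of $\theta_0$ (the extreme-slope point $(\theta_1',\overline\theta_2)$), uses \emph{both} directional derivatives $\bar m_{\theta,1}(\theta,x_{0,1},P)(\theta-\theta_0)$ and $\bar m_{\theta,2}(\theta,x_{0,2},P)(\theta-\theta_0)$ weighted by the inverse densities, reduces their sum to $-(x_{0,1}-x_{0,2})(\theta_2-\overline\theta_2)/\|\theta-\theta_0\|$, bounds $x_{0,1}-x_{0,2}$ away from zero via parts (ii) and (iii) of Assumption \ref{quantile_int_reg_assump}, and handles the regime where $|\theta_1-\theta_1'|/|\theta_2-\overline\theta_2|$ is large by a separate direct bound. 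None of that two-moment, two-case machinery appears in your write-up, so the key inequality is asserted rather than proved.

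Relatedly, your geometric explanation of the role of the gap is inverted: you write that ``absent a gap \ldots $\theta_0$ could admit simultaneous tangencies with both quantile functions at distinct points, so that $\theta-\theta_0$ would correspond to a pure rotation about $x_0$.'' In fact, with the gap the extreme-slope line \emph{does} contact both curves at distinct, well-separated points, and this separation is what saves the argument: a rotation about one contact point moves the line at the other contact point by an amount proportional to $(x_{0,1}-x_{0,2})$ times the slope change, so at least one of the two moments detects it. The pathological configuration (as in Section \ref{slope_examp_sec}, with $E_P(W_i^H|X_i=x)=x^2$ and $E_P(W_i^L|X_i=x)=-x^2$) is when the two curves touch, forcing the contact points to coincide so rotations about that common point are undetected. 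If you want to keep your nearest-point choice of $\theta_0$, you would need to argue explicitly that $\theta-\theta_0$ lies in the normal cone generated by the active normals $\{(1,x)\}$ and $\{-(1,x')\}$ and that the separation of cross-contact points (plus the bounded slope and bounded support) prevents near-cancellation among these normals, so that some single active constraint satisfies the angle bound uniformly in $P$ — an argument you have not supplied; otherwise you should follow the paper's explicit construction.
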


\subsection{Selection Model and Identification at the Boundary}
  \label{selection_sec}

In this section, I treat a class of models in which the conditional moment
inequalities give the most identifying information when conditioning on a
set where $X_i$ may not have a density that is bounded away from zero and
infinity.  That is, as $\theta$ approaches the identified set, the moment
inequality $E_P(m(W_i,\theta)|X_i=x)\ge 0$ is violated on a region in
which the density of $X_i$ goes to zero or infinity, or in which $X_i$
does not have a density with respect to the Lebesgue measure.
This covers cases of conditional moment inequalities leading to
point or set identification at infinity or at a finite boundary.  While I
motivate the conditions in this section with a selection model, the
results apply more generally to other cases of set identification at the
boundary.

The selection model is
particularly interesting in that it leads naturally to different shapes of
the conditional mean of $m(W_i,\theta)$ and distribution of $X_i$, since
set identification at the boundary of the support of $X_i$ appears to be a
common case.  For cases where the conditioning variable has a density
function that goes to zero or infinity near a (possibly infinite) support
point, a transformation of the conditioning variable leads to a model for
which the smoothness assumptions for rates of convergence given in this
paper can be verified.  The resulting value of the Holder constant
$\alpha$ depends on the shape of both the density and the conditional
mean.

This is related to cases of point identification at infinity, such
as the estimator proposed by \citet{andrews_semiparametric_1998} for a
selection model similar to the one treated in this section, but under
conditions that lead to point identification.  As with the estimator
proposed in that paper, the estimators I consider based on KS statistics
for conditional moment inequalities and possible set identification have
rates of convergence that depend on the tail behavior of the random
variables in the model.  The behavior of distributions of random variables
at the tails determines which functions in $\mathcal{G}$ correspond
to the region of the tail of the conditioning variable with the most
identifying power.  The truncated variance weighting I propose allows the
KS statistic to automatically find these functions.

We are interested in the marginal distribution of a random variable
$Y_i^*$, but we do not always observe this variable.  Instead, we observe
$(Y_i,D_i)$ where $D_i$ is an indicator for being observed in the sample
and $Y_i\equiv Y_i^*\cdot D_i$.
For example, suppose we are interested in the distribution of wage offers
for a population of individuals, but we only observe wages of people who
decide to work.
In this case, $Y_i^*$ is the wage individual $i$ is offered, and $D_i$ is
an indicator for employment.
In what follows, $Y_i$ and $D_i$ are scalars, but the results described
below can be extended to multiple partially observed outcomes.  In the
treatment effects literature, potential outcomes under different treatment
programs are typically modeled as latent variables, with the observed
variable being the actual treatment.  In this case, we can consider each
possible treatment separately, each time defining $Y_i^*$ and $D_i$ to be
potential outcomes and indicators for the treatment group in question.
Bounds on the marginal distribution for each treatment will follow from
methods described in this section, and these bounds can be combined to
give bounds on treatment effects defined as differences between
statistics of the unobserved distribution of each outcome.

If $Y_i^*$ is not independent of $D_i$ and $D_i=0$ with positive
probability, the distribution of $Y_i$ will be different from the
distribution of $Y_i^*$ conditional on entry.  However, it is often
possible to obtain informative bounds.  Suppose that we observe a random
variable $X_i$ that shifts participation in the sample, but is exogenous
to outcomes in the sense that $Y_i^*$ is independent of $X_i$.  If $Y_i$
is known to lie in some interval $[\underline Y,\overline Y]$, we can
bound the distribution of $Y_i^*$ following
\citet{manski_nonparametric_1990}.  In this section, I consider estimation
of bounds for the mean of the distribution of $Y_i^*$, but bounds on
quantiles can be estimated using similar methods.  For the same reasons as
those described in Section \ref{quantile_oneside_sec}, bounds on quantiles
will often be tighter than bounds on the mean when the difference
between $\underline Y$ and $\overline Y$ is large or infinite.

To see how this model fits into the framework of this paper, note that
$Y_i\cdot D_i+\underline Y\cdot (1-D_i)
 \le Y_i^* \le Y_i\cdot D_i+\overline Y\cdot (1-D_i)$, so that,
letting $\gamma=E_P(Y_i^*)=E_P(Y_i^*|X)$, we have
$E_P(Y_i\cdot D_i+\underline Y\cdot (1-D_i)|X)\le \gamma
  \le E_P(Y_i\cdot D_i+\overline Y\cdot (1-D_i)|X)$.
Define $W_i^L=Y_i\cdot D_i+\underline Y\cdot (1-D_i)$ and
$W_i^H=Y_i\cdot D_i+\overline Y\cdot (1-D_i)$.  The problem of
estimating the identified set for $\gamma$ fits into the framework of this
paper with $W_i=(W_i^L,W_i^H,X_i)$ and
$m(W_i,\gamma)=(\gamma-W_i^L,W_i^H-\gamma)'$.

Typically, the best upper and lower bounds on $\gamma$ will come from
values of $X_i$ for which the probability of participation is high.  If
participation is monotonic, these points will be near the support of
$X_i$.  The support of $X_i$ could be infinite or finite, and there is
typically no reason to impose any conditions on how the distribution of
$X_i$ behaves near its support points (whether it has a density, whether
the density approaches zero, infinity, a positive constant, or oscillates
wildly) or how $E_P(W_i^H|X_i)$ and $E_P(W_i^L|X_i)$ behave near these
points.  In addition, while identification at the boundary of the support
seems likely, it is best not to impose this either.

The results in this section show that estimates of the identified set
using weighted KS statistics defined above are robust to all of these
types of set identification in the sense of controlling the probability
that the set estimate fails to contain the identified set uniformly in a
set of underlying distributions that contains these types of distributions
and many more.  In addition, for a wide variety of shapes of the density
and conditional mean, the weighted KS statistic based set estimate obtains
a better rate of convergence than estimates that do not weight the KS
statistic.

Uniform coverage of the identified set follows immediately from Theorem
\ref{coverage_thm}, and is stated in the next theorem.  Throughout this
section, $\Theta_0(P)$ denotes the
identified set for $\gamma$ in the selection model under $P$, and
$\mathcal{C}_n(\hat c_n)$ denotes an estimate of this set as described
above.

\begin{theorem}\label{coverage_thm_selection}
Let $\mathcal{P}$ be any class of probability measures on the random
variables in the selection model described above such that $W_i^H$ and
$W_i^L$ are bounded uniformly over $P\in\mathcal{P}$.  If the class of
functions $\mathcal{G}$, the function $S$, and the sequences $\hat a_n$
and $\hat c_n$ are chosen so that Assumptions \ref{g_pos_assump},
\ref{covering_assump}, \ref{S_assump} and \ref{cutoff_assump} hold with
$\hat a_n$ and $\hat c_n$ chosen so that the assumptions of Theorem
\ref{coverage_thm} hold, then
\begin{align*}
\inf_{P\in\mathcal{P}} P(\Theta_0(P)\subseteq \mathcal{C}_n(\hat c_n))
  \stackrel{n\to\infty}{\to} 1.
\end{align*}
\end{theorem}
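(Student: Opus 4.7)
The plan is to derive Theorem \ref{coverage_thm_selection} as a direct corollary of Theorem \ref{coverage_thm}. The hypotheses of the selection-model theorem already postulate Assumptions \ref{g_pos_assump}, \ref{covering_assump}, \ref{S_assump}, and \ref{cutoff_assump} together with the conditions on $\hat a_n$ and $\hat c_n$ needed by Theorem \ref{coverage_thm}. The only assumption from that theorem not explicitly carried over is Assumption \ref{bdd_assump}, so the entire substantive content of the proof is to verify uniform boundedness of $|m_j(W_i,\gamma) g_j(X_i)|$ under the structure of the selection model.

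To check Assumption \ref{bdd_assump}, I would recall that in the selection model $m_1(W_i,\gamma) = \gamma - W_i^L$ and $m_2(W_i,\gamma) = W_i^H - \gamma$. By hypothesis $W_i^L$ and $W_i^H$ are uniformly bounded over $P \in \mathcal{P}$, say $|W_i^L|, |W_i^H| \le M$ a.s.\ for a constant $M$ not depending on $P$. The parameter $\gamma$ is the mean of a bounded random variable, so the natural parameter space $\Theta$ is a bounded subset of $\mathbb{R}$ (contained in $[-M, M]$); hence $|m_j(W_i,\gamma)|$ is bounded uniformly in $\gamma\in\Theta$ and $P\in\mathcal{P}$. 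The class $\mathcal{G}$ is uniformly bounded, either by direct specification or as a consequence of the polynomial covering-number hypothesis in Assumption \ref{covering_assump}, which forces bounded envelopes for the associated classes $\mathcal{F}_{j,1}$ and $\mathcal{F}_{j,2}$. Multiplying the two bounds establishes Assumption \ref{bdd_assump} with some constant $\overline Y$ independent of $P$.

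With all five assumptions of Theorem \ref{coverage_thm} in hand, the conclusion $\inf_{P\in\mathcal{P}} P(\Theta_0(P) \subseteq \mathcal{C}_n(\hat c_n)) \to 1$ follows immediately by invoking that theorem. There is no real obstacle beyond the bookkeeping described above; the point of isolating this statement is to emphasize that the coverage guarantee is genuinely robust to the potentially pathological shapes of the marginal distribution of $X_i$ (possibly non-existent density, vanishing or exploding density near the boundary of its support, identification at infinity, etc.) and of $E_P(W_i^H|X_i)$ and $E_P(W_i^L|X_i)$ near those boundary points. These features enter only into the rate-of-convergence analysis and do not affect the uniform control of type I error.
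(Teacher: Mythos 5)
Your proposal is correct and matches the paper's own argument, which simply states that the result follows immediately from Theorem \ref{coverage_thm}; your additional bookkeeping verifying Assumption \ref{bdd_assump} from the uniform bounds on $W_i^H$, $W_i^L$, the (bounded) parameter space for $\gamma$, and the boundedness of $\mathcal{G}$ is exactly the implicit content of that one-line proof.
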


Rates of convergence to the identified set will depend on the shape of the
conditional mean and the distribution of $X_i$.  Note, however, that the
set estimate based on the standard deviation weighted KS statistic can be
calculated in the same manner regardless of these aspects of the data, so
the researcher does not have to impose any restrictions on the shapes of
these objects when performing inference.  In this sense, inference based
on these statistics adapts to the shapes of the conditional means of
$W_i^H$ and $W_i^L$ and the distribution of $X_i$.  In what follows, I
consider several alternative assumptions.  These include different types
of set identification at the boundary, as well as set identification on a
positive probability set.

In the following assumptions, $[\underline \gamma,\overline \gamma]$ is the
identified set for $\gamma$, so that it is implicitly assumed that
$E_P(W_i^H|X_i)\ge \overline \gamma$ and $E_P(W_i^L|X_i)\le \underline \gamma$
with probability one.  Here, $\underline \gamma$ and $\overline \gamma$
could be equal, leading to point identification.  This will be the case
when the probability of selection into the sample conditional on $X_i=x$
converges to one as $x$ approaches some point on the support of $X_i$.
These assumptions are stated so that the same type
of identification holds for the upper and lower support of the identified
set, but the same results will hold (with possibly different rates of
convergence to the upper and lower support points) if different types of
identification hold for the upper and lower support.
When these assumptions are invoked for a class of
probability distributions $\mathcal{P}$, the constants $C$, $K_X$, and
$\eta_X$ are assumed not to depend on $P$.

\begin{assumption}[Set Identification at Infinity with Polynomial Tails]
  \label{ident_inf_assump}
$d_X=1$ and, for some positive constants $K_X$ and $C$,
we have, for all $x\ge K_X$,
(i) $E_P(W_i^H|X_i=x)-\overline \gamma
      \le C x^{-\phi_m}$ and
(ii) $X_i$ has a density $f_X(x)$ such that $f_X(x)\ge x^{-\phi_x}/C$
for some $\phi_m>0$ and $\phi_x>1$.
In addition, part (i) holds with $W_i^H-\overline \gamma$ replaced by
$\underline \gamma -W_i^L$.
\end{assumption}

\begin{assumption}[Set Identification at Finite Support with Polynomial
  Tails] \label{ident_fin_assump}
For some $x_0\in\mathbb{R}^{d_X}$ and $\eta_X>0$, we have, for
$x_0-\eta_X\iota\le x\le x_0$ (where $\iota$ is a vector of ones and $\le$
is elementwise if $d_X>1$)
(i) $E_P(W_i^H|X_i=x)-\overline \gamma
      \le C |x_0-x|^{\phi_m}$ and
(ii) $X_i$ has a density $f_X(x)$ such that
$f_X(x)\ge \prod_{k=1}^{d_X} |x_{0,k}-x_k|^{\phi_x}/C$
for some $\phi_m>0$ and some $\phi_x>-1$.
In addition, parts (i) and (ii) hold with $W_i^H-\overline \gamma$ replaced
by $\underline \gamma -W_i^L$ for some possibly different $x_0$.
\end{assumption}

\begin{assumption}[Set Identification on a Positive Probability Set]
  \label{ident_pos_assump}
For some interval $[\underline x,\overline x]$, $E_P(W_i^H|X_i)-\overline
\gamma=0$ $P$-a.s. for all $P\in\mathcal{P}$
and $P(\underline x\le X_i\le \overline x)$ is bounded away from zero
uniformly in $P\in\mathcal{P}$.
In addition, the same assumption holds with with $W_i^H-\overline \gamma$
replaced by $\underline \gamma -W_i^L$ for some possibly different interval
$[\underline x,\overline x]$.
\end{assumption}

All cases of Assumption \ref{ident_inf_assump} and \ref{ident_fin_assump}
can be transformed into Assumption \ref{ident_fin_assump} with $\phi_x=0$
and some $\phi_m$ by monotonic transformations of each element of $X_i$.
The case where Assumption \ref{ident_fin_assump} holds with $\phi_x=0$
fits into the framework of Theorem \ref{rate_thm_alpha}, so this can be
applied to the transformed model.

\begin{theorem}\label{rate_thm_selection}
Let $\mathcal{P}$ be any class of probability measures on the random
variables in the selection model described above such that $W_i^H$ and
$W_i^L$ are bounded uniformly over $P\in\mathcal{P}$.  Suppose that the
class of functions $\mathcal{G}$, the function $S$, and the sequences
$\hat a_n$ and $\hat c_n$ are chosen so that Assumptions
\ref{g_pos_assump}, \ref{covering_assump}, \ref{S_assump} and
\ref{cutoff_assump} hold with $\hat a_n$ and $\hat c_n$ chosen so that the
assumptions of Theorem \ref{coverage_thm} hold, and Assumption
\ref{g_rate_assump} holds.

If, in addition to these conditions, one of Assumptions
\ref{ident_inf_assump} or \ref{ident_fin_assump}
holds, then, for some $B$,
\begin{align*}
\sup_{P\in\mathcal{P}}
  P\left( \left(\frac{n}{\hat c_n^2\log n}\right)^{\alpha/(d_X+2\alpha)}
    d_H(\mathcal{C}_n(\hat c_n),\Theta_0(P)) > B\right)
\stackrel{n\to\infty}{\to} 0
\end{align*}
where $\alpha=\phi_m/(\phi_x+1)$ if Assumption \ref{ident_fin_assump}
holds and $\alpha=\phi_m/(\phi_x-1)$ (and $d_X=1$) if Assumption
\ref{ident_inf_assump} holds.  If Assumption \ref{ident_pos_assump}
holds, then, for some $B$,
\begin{align*}
\sup_{P\in\mathcal{P}}
  P\left( \left(\frac{n}{\hat c_n^2\log n}\right)^{1/2}
    d_H(\mathcal{C}_n(\hat c_n),\Theta_0(P)) > B\right)
\stackrel{n\to\infty}{\to} 0.
\end{align*}

\end{theorem}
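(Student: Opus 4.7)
The plan is to handle each of the three assumptions separately, in each case verifying the high level Assumption \ref{rate_assump} with appropriate values of $\gamma$ and $\psi$, and then invoking Theorem \ref{rate_thm} with these values to get the stated rate. The two polynomial-tail cases are reduced by a change of variables to a setting where Theorem \ref{rate_thm_alpha} applies, while the positive-probability case is verified by hand.

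For Assumption \ref{ident_pos_assump}, the plan is to verify Assumption \ref{rate_assump} with $\gamma=1$ and $\psi=1$ directly. Consider $\theta=\gamma>\overline\gamma$, for which the relevant moment is $m_2(W_i,\gamma)=W_i^H-\gamma$. Using Assumption \ref{g_rate_assump}, choose a ball $B(s,t)$ contained in $[\underline x,\overline x]$ such that $P(X_i\in B(s,t))$ is bounded away from zero uniformly in $\mathcal{P}$, and let $g\in\mathcal{G}$ sandwich its indicator. Because $E_P(W_i^H|X_i)=\overline\gamma$ on $[\underline x,\overline x]$, one obtains $\mu_{P,2}(\gamma,g)\le -c\,(\gamma-\overline\gamma)=-c\,d_H(\gamma,\Theta_0(P))$ while $\sigma_{P,2}(\gamma,g)$ is bounded above by $\overline Y$ via Assumption \ref{bdd_assump}. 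The ratio is therefore at most a negative constant times $d_H(\gamma,\Theta_0(P))$, which is Assumption \ref{rate_assump} with $\gamma=1$. The case $\gamma<\underline\gamma$ is symmetric, using $m_1$ and the second interval in Assumption \ref{ident_pos_assump}. Applying Theorem \ref{rate_thm} gives the $(n/(\hat c_n^2\log n))^{1/2}$ rate.

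For Assumptions \ref{ident_inf_assump} and \ref{ident_fin_assump}, I would introduce a coordinatewise monotonic transformation of $X_i$. Under Assumption \ref{ident_fin_assump}, define $\tilde X_{i,k}=\mathrm{sign}(X_{i,k}-x_{0,k})|X_{i,k}-x_{0,k}|^{\phi_x+1}$, so that $\tilde X_i$ has density bounded away from zero near $\tilde x_0=0$, and the conditional-mean bound translates into $E_P(W_i^H|\tilde X_i=\tilde x)-\overline\gamma\le C|\tilde x-\tilde x_0|^{\phi_m/(\phi_x+1)}$. Under Assumption \ref{ident_inf_assump} (with $d_X=1$) define $\tilde X_i=-X_i^{1-\phi_x}/(\phi_x-1)$ on $[K_X,\infty)$ (extended monotonically elsewhere), so that $\tilde X_i$ accumulates at $\tilde x_0=0$ from below with density bounded away from zero, and the conditional mean is bounded by a constant times $|\tilde x-\tilde x_0|^{\phi_m/(\phi_x-1)}$. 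Since the moments $\mu_{P,j}(\theta,g)$ and $\sigma_{P,j}(\theta,g)$ only depend on the joint law of $(W_i,X_i)$, which is unchanged by relabeling $X_i$, the estimator is unchanged and it suffices to verify the hypotheses of Theorem \ref{rate_thm_alpha} in the transformed coordinates. Assumption \ref{diff_assump} is immediate from linearity of $m$ in $\gamma$; Assumption \ref{x0_assump} holds with $j_0=2$, $\theta_0=\overline\gamma$, $x_0=\tilde x_0$, since $\bar m_\gamma\equiv -1$ gives its first part and the displayed H\"older bound gives its second part with $\alpha=\phi_m/(\phi_x+1)$ or $\phi_m/(\phi_x-1)$; Assumption \ref{support_assump} holds by construction of the transformation; and Assumption \ref{g_rate_assump}, which is assumed in the original coordinates, transfers to the new coordinates because the transformation is a smooth bijection between neighborhoods of $x_0$ and $\tilde x_0$, so indicators of balls in one coordinate system sandwich indicators of balls in the other up to constants absorbed into $C_{\mathcal{G},1}$ and $C_{\mathcal{G},2}$. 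Theorem \ref{rate_thm_alpha} then gives Assumption \ref{rate_assump} with $\gamma=2\alpha/(d_X+2\alpha)$, and Theorem \ref{rate_thm} yields the stated rate. The argument for $\gamma<\underline\gamma$ is symmetric using $m_1$ and the corresponding tangency point.

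The main obstacle is the geometric bookkeeping in the transformation step: the transformation is not globally Lipschitz near $x_0$ (its derivative vanishes or blows up there), so one has to verify carefully that balls in the new coordinates, whose centers may lie arbitrarily close to $\tilde x_0$, are sandwiched between images of balls in the original coordinates, and that the constants picked up in this sandwiching are uniform in $P\in\mathcal{P}$ so that the eventual constants in Assumption \ref{rate_assump} do not depend on $P$. A related nuisance is extending the transformation monotonically to the entire support of $X_i$ while keeping Assumptions \ref{covering_assump}--\ref{bdd_assump} intact, but this is routine since the transformation is fixed (not data-dependent) and can be chosen bounded away from singularity off a neighborhood of $x_0$.
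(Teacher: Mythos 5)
Your architecture is the same as the paper's: under Assumption \ref{ident_pos_assump} verify the high-level condition of Theorem \ref{rate_thm} directly with a fixed $g$ supported in $[\underline x,\overline x]$ (giving $\gamma=1$), and under Assumptions \ref{ident_fin_assump}/\ref{ident_inf_assump} apply a coordinatewise monotone (power, resp.\ inverse-power) transformation of $X_i$ that makes the density bounded below near the contact point and turns the tail bound on the conditional mean into a H\"older bound with exponent $\phi_m/(\phi_x+1)$ or $\phi_m/(\phi_x-1)$, then invoke Theorem \ref{rate_thm_alpha}. Your exponent arithmetic and the observation that relabeling $X_i$ leaves the estimator unchanged are both correct and match the paper (the paper's infinity-to-finite map is $V_i=K_X+1-1/(X_i-K_X+1)$ rather than your $-X_i^{1-\phi_x}/(\phi_x-1)$, but this is immaterial).

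The one genuine gap is the step you yourself flag as the ``main obstacle'': why Assumption \ref{g_rate_assump} holds in the transformed coordinates. Your stated justification --- that the transformation is a smooth bijection so balls sandwich balls up to constants --- fails exactly where it is needed, since the derivative of the power map vanishes or blows up at $x_0$, so a ball in the new coordinates centered near (but not at) $\tilde x_0$ does not pull back to something sandwiched between concentric balls with a uniform ratio. The paper closes this differently: Assumption \ref{g_rate_assump} may equivalently be stated with the supremum norm (norm equivalence only changes $C_{\mathcal{G},1},C_{\mathcal{G},2}$), and then the sets involved are $d_X$-dimensional boxes, and the class of boxes is invariant under coordinatewise monotone transformations (with the extra remark, for the identification-at-infinity case, that one takes $t$ large enough that the box covers the largest sample values of $X_i$). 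An alternative repair that also works with your setup is to note that the proof of Theorem \ref{rate_thm_alpha} only ever uses balls centered at the contact point $x_0(\theta,P)$; since the power map fixes $x_0$ and sends a radius-$u$ ball at $x_0$ (intersected with the support) to a radius-$u^{\phi_x+1}$ ball at $\tilde x_0$, the sandwich transfers with the fixed constants $C_{\mathcal{G},1}$ and $C_{\mathcal{G},2}^{\phi_x+1}$, uniformly in $P$. Without one of these arguments your reduction to Theorem \ref{rate_thm_alpha} is incomplete; with either one, your proof coincides with the paper's.
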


The rate of convergence in Theorem \ref{rate_thm_selection} shows that,
for a given selection process conditional on $X_i$, the rate of
convergence will be faster when $X_i$ has more mass near the point $x_0$
or region $[\underline x,\overline x]$
where the conditional moment inequalities give the most identifiying
information.  The rate of convergence is fastest ($((\log n)/n)^{1/2}$)
under Assumption \ref{ident_pos_assump}, when this region has a positive
probability. Under identification at a finite point (Assumption
\ref{ident_fin_assump}), the rate of convergence depends on whether the
density of $X_i$ approaches infinity, zero, or a finite nonzero value.  If
$-1<\phi_x<0$, the density will approach infinity at a rate that is faster
when $\phi_x$ is closest to $-1$ ($\phi_x$ must be strictly greater than
$-1$ in order for the density to integrate to a finite number).  For
$\phi_x=0$, the density approaches a finite nonzero value, and, for
$\phi_x>0$ the density approaches zero at a rate that is faster for larger
values of $\phi_x$.  The rate of convergence under Assumption
\ref{ident_fin_assump} will always be slower than $((\log n)/n)^{1/2}$,
but it will be arbitrarily close to this rate when $\phi_x$ is close to
$-1$ (when the density approaches infinity at close to the fastest
possible rate).  Under identification at infinity (Assumption
\ref{ident_inf_assump}), the rate of convergence will be faster for
thicker tails (smaller $\phi_x$), and will be close to $((\log
n)/n)^{1/2}$ for $\phi_x$ close to $1$ (in this case, $\phi_x$ must be
greater than one in order for the density to integrate to a finite
number).

\section{Rates of Convergence for Other Estimators}\label{other_est_sec}

In order to compare the estimators based on KS statistics with increasing
variance weights proposed in this paper to estimation procedures based on
kernels or KS statistics with bounded weights, we need rates of
convergence for these estimators as well.  Since these results are not
available in the literature
(with the exception of the results of \citet{andrews_inference_2009} and
\citet{kim_kyoo_il_set_2008} for the local power of KS statistics with
bounded weights, which apply to the model in Section \ref{selection_sec}
under the positive probability set identification condition, Assumption
\ref{ident_pos_assump}, but not the other models or conditions in this
paper),
I derive these results in this section.

Under upper bounds on the
smoothness of the data generating process that correspond to the lower
bounds in Assumptions \ref{diff_assump} and \ref{x0_assump}, I show that
estimators
based on KS statistics with bounded weight functions converge at a
$n^{\alpha/(2d_X+2\alpha)}$ rate, slower than the
$(n/\log n)^{\alpha/(d_X+2\alpha)}$ rate of convergence derived in Section
\ref{rate_sec} for the estimator based on the truncated variance weighting
with the sequence of truncation points increasing quickly enough.  %
\citet{kim_kyoo_il_set_2008} shows that the rate of convergence of a
similar estimator will be $n^{1/2}$ under conditions similar to Assumption
\ref{ident_pos_assump} in which local alternatives violate the conditional
moment inequality on a positive probability set.  In these situations, the
increasing sequence of weights for the KS statistic proposed in this paper
will lead to a $(n/\log n)^{1/2}$ rate of convergence for the set
estimate.  For estimators of the identified set based on kernel estimates
of the conditional mean, if the sequence of bandwidth parameters is chosen
properly, I show that the set estimate will converge at the same
$(n/\log n)^{\alpha/(d_X+2\alpha)}$ rate as the variance weighted KS
statistic based estimates, but the rate of convergence can be much slower
if the bandwidth is chosen suboptimally.  However, with the optimal
sequence of bandwidths, power against local alternatives that approach the
identified set at this rate will likely be greater for kernel based
estimates.  Thus, the results in this section show that the weighted KS
statistic based estimates proposed in this paper do almost as well as
an infeasible procedure that uses prior knowledge of the data generating
process to choose the best from a set of other estimators.

While the results in this section show that the truncated variance
weighting allows KS statistic based estimates to adapt to a broad class of
smoothness conditions, these statistics will not achieve the optimal rate
of convergence when more than two derivatives are imposed on the
conditional mean (although the results in Section \ref{selection_sec} show
that KS statistics with the weighting in this paper also adapt to a broad
class of tail behavior in cases of set identification at the boundary).
The reason is that the KS statistics considered in this
paper integrate the conditional mean against nonnegative functions, which
prevents them from taking advantage of higher order smoothness
conditions.  Estimation methods based on higher order kernels or sieves
would likely perform better in some of these situations, although some of
these methods would fail to control the size of these tests when these
smoothness conditions fail.

\subsection{Bounded Weight Functions}

Consider a set estimate based on a KS statistic similar to the ones
considered so far, but with the weight function $1/(\hat
\sigma(\theta,g)\vee \sigma_n)$ replaced by some bounded weight function
$\omega_n(\theta,g)=(\omega_{n,1}(\theta,g),\ldots,\omega_{n,d_Y}(\theta,g))$.
Here, $\omega_n(\theta,g)$ is unrestricted, except for the requirement
that, for some $\overline \omega$ we have
$\|\omega_n(\theta,g)\|\le \overline \omega$ for all $n$, $\theta$, and
$g$.  Define
\begin{align*}
T_{n,\omega}(\theta)\equiv
\sup_{g\in\mathcal{G}}
S\left(\omega_{n,1}(\theta,g)\hat\mu_{n,1}(\theta,g)
  ,\ldots,\omega_{n,d_Y}(\theta,g)\hat\mu_{n,d_Y}(\theta,g)\right).
\end{align*}
Following \citet{andrews_inference_2009} (with additional conditions to
control the complexity of $m_j(W_i,\theta)g_j(X_i)$ over $\theta$ as well
as $g$), $T_{n,\omega}(\theta)$ will converge at a $\sqrt{n}$ rate, so
define the estimate of the identified set for critical value $\hat c_n$ to
be
\begin{align*}
\mathcal{C}_{n,\omega}(\hat c_n)
\equiv \left\{\theta\in\Theta
  \bigg |\sqrt{n}T_{n,\omega}(\theta)\le \hat c_n\right\}.
\end{align*}

Under upper bounds on the smoothness of the conditional mean that
correspond to the lower bounds given in Section \ref{rate_sec}, upper
bounds on the rate of convergence of set estimates based on KS statistics
with bounded weights can be derived.  These conditions are stated in the
following assumption.

\begin{assumption}\label{smoothness_upper_assump}
For some $\theta_0\in \delta\Theta_0(P)$ such that $\theta_0$ is in the
interior of $\Theta$, the following holds for %
some neighborhood $B(\theta_0)$ of $\theta_0$.
(i) $\bar m(\theta,x,P)$ is differentiable in $\theta$ with derivative
$\bar m_{\theta}(\theta,x,P)$ bounded over $\theta\in B(\theta_0)$.
(ii) For some $\eta>0$, we have, for all $\theta_0'\in
(\delta\Theta_0(P))\cap B(\theta_0)$, the set $\mathcal{X}_0(\theta_0')$
of points $x_0$ such that
$\min_k \bar m_k(\theta_0',x_0,P)=0$
satisfies
\begin{align*}
|\bar m_{j}(\theta_0',x,P)
-\bar m_{j}(\theta_0',x_0,P)|
\ge \eta\left(\|x-x_0\|^\alpha\wedge \eta\right),
\end{align*}
for all $j$,
and the number of elements in $\mathcal{X}_0(\theta_0')$ is bounded
uniformly over $\theta_0'$.
(iii) $X_i$ has finite support and a bounded density on its support.
(iv) There exists a path $t\mapsto \theta_t$ such that
$\theta_t\to\theta_0$ as $t\to 0$ and $t\to d(\theta_t,\theta_0)$ is
continuous for $t$ in a neighborhood of $0$.
\end{assumption}

Assumption \ref{smoothness_upper_assump} gives an upper bound on the
smoothness of the conditional mean similar to the lower bound of
Assumption \ref{x0_assump}.  It states that $\alpha$ is the best
(greatest) possible value of $\alpha$ for which Assumption \ref{x0_assump}
can hold.  Without this assumption, rates of convergence derived using
Assumption \ref{x0_assump} and some value of $\alpha$ could be
conservative, since the same assumption could also hold with a larger
value of $\alpha$.  The next theorem uses this condition to get an upper
bound on the rate of convergence of the set estimator
$\mathcal{C}_{n,\omega}(\hat c_n)$ when the sequence of weight functions
is uniformly bounded.

\begin{theorem}\label{upper_rate_bdd_thm}
Under Assumptions \ref{g_pos_assump}, \ref{covering_assump},
\ref{bdd_assump}, \ref{S_assump}
and \ref{smoothness_upper_assump}, if $\hat c_n$ is bounded away from zero
and $g(X_i)$ and $m(W_i,\theta)$ are uniformly bounded, then, for some
$\varepsilon>0$,
\begin{align*}
P\left(
n^{\alpha/(2d_X+2\alpha)}
d_H\left( \mathcal{C}_{n,\omega}(\hat c_n), \Theta_0(P) \right)
\ge \varepsilon
\right)
\stackrel{n\to\infty}{\to} 1.
\end{align*}
\end{theorem}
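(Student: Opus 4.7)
The plan is to exhibit, for a sufficiently small constant $\varepsilon>0$, a sequence $\theta_n\in\Theta$ with $d(\theta_n,\Theta_0(P))\gtrsim \varepsilon r_n$ (where $r_n:=n^{-\alpha/(2d_X+2\alpha)}$) that lies in $\mathcal{C}_{n,\omega}(\hat c_n)$ with probability tending to one. Because coverage (Theorem \ref{coverage_thm}) gives $\sup_{b\in\Theta_0(P)}\inf_{a\in\mathcal{C}_{n,\omega}(\hat c_n)}d(a,b)=0$ on an event of probability approaching one, this forces $d_H(\mathcal{C}_{n,\omega}(\hat c_n),\Theta_0(P))\ge\varepsilon r_n$. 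The construction is the following: pick a direction along the path of Assumption \ref{smoothness_upper_assump}(iv) that moves $\theta_0\in\delta\Theta_0(P)$ out of the identified set, and set $\theta_n:=\theta_{t_n}$ with $t_n$ chosen so that $d(\theta_n,\theta_0)\asymp\varepsilon r_n$; the local geometry of $\Theta_0$ near the boundary point $\theta_0$ then yields $d(\theta_n,\Theta_0(P))\gtrsim\varepsilon r_n$.

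The key technical step is a uniform lower bound on the drift $\mu_{P,j}(\theta_n,g)$. Taylor-expanding via Assumption \ref{smoothness_upper_assump}(i),
\begin{align*}
\mu_{P,j}(\theta_n,g)
=E_P[\bar m_j(\theta_0,X_i,P)g_j(X_i)]
+(\theta_n-\theta_0)' E_P[\bar m_{\theta,j}(\theta_n^\ast,X_i,P)g_j(X_i)],
\end{align*}
the first term is non-negative since $\theta_0\in\Theta_0(P)$. For $g$ with mass concentrated in a ball $B(x_0,h)$ around a tangency point $x_0\in\mathcal{X}_0(\theta_0)$, Assumption \ref{smoothness_upper_assump}(ii) combined with the bounded density in part (iii) gives the first term $\gtrsim h^{\alpha+d_X}$, while boundedness of $\bar m_\theta$ and $\|\theta_n-\theta_0\|$ bounds the second term in magnitude by $C\varepsilon r_n h^{d_X}$. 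Optimizing $h\asymp r_n^{1/\alpha}$ yields $\mu_{P,j}(\theta_n,g)\ge -C_1\varepsilon^{(\alpha+d_X)/\alpha}n^{-1/2}$; for $g$ whose support avoids the (finite) set $\mathcal{X}_0(\theta_0)$, the positive first term already dominates, giving an even better bound. Hence $\inf_{g,j}\mu_{P,j}(\theta_n,g)\ge -C_1\varepsilon^{(\alpha+d_X)/\alpha}n^{-1/2}$.

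Combining this drift bound with a uniform empirical-process bound $\sqrt n\sup_{g,j}|\hat\mu_{n,j}(\theta_n,g)-\mu_{P,j}(\theta_n,g)|=O_P(1)$ (from Assumptions \ref{covering_assump} and \ref{bdd_assump}), the uniform bound on $\omega_n$, and the implication $S(t)\le\max_j(-t_j)_+/K_{S,1}$ from Assumption \ref{S_assump}, one obtains
\begin{align*}
\sqrt n\,T_{n,\omega}(\theta_n)\le C_2\bigl(\varepsilon^{(\alpha+d_X)/\alpha}+M_n\bigr),\qquad M_n=O_P(1).
\end{align*}
Taking $\varepsilon$ small relative to the lower bound on $\hat c_n$ renders the right-hand side at most $\hat c_n$ on a high-probability event, so $\theta_n\in\mathcal{C}_{n,\omega}(\hat c_n)$.

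The main obstacle is promoting this inequality from high probability to probability tending to one. Since $M_n$ is only $O_P(1)$ (not $o_P(1)$), straightforward domination by a fixed $\hat c_n$ gives a probability bounded below but not converging to one. Overcoming this requires either that $\hat c_n$ diverge at some slow rate, as is typical in implementations that control the familywise error rate uniformly over $\Theta$, or that the deterministic $\theta_n$ be replaced by a data-adaptive direction on the sphere $\{d(\theta,\theta_0)=\varepsilon r_n\}$ chosen to minimize $T_{n,\omega}$, absorbing the leading empirical-process fluctuation by using continuity of $T_{n,\omega}$ in $\theta$ together with coverage at $\theta\in\Theta_0(P)$ near $\theta_0$.
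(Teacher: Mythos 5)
Your overall strategy---construct $\theta_n$ at distance $\varepsilon n^{-\alpha/(2d_X+2\alpha)}$ from $\Theta_0(P)$ using Assumption \ref{smoothness_upper_assump}(iv), bound the drift $\mu_{P,j}(\theta_n,g)$ below by $-C\varepsilon^{(\alpha+d_X)/\alpha}n^{-1/2}$, and conclude that $\theta_n$ cannot be excluded from $\mathcal{C}_{n,\omega}(\hat c_n)$---matches the paper's proof, and the drift calculation is essentially right (though to cover \emph{all} $g\in\mathcal{G}$, not just kernels localized at scale $h$ or supported away from the contact set, you need the paper's pointwise bound: $\bar m_j(\theta_n,x,P)\ge -2K\varepsilon n^{-\alpha/(2d_X+2\alpha)}$ everywhere and is negative only on a neighborhood of $\mathcal{X}_0(\theta_0'(n))$ of radius $O\bigl(\varepsilon^{1/\alpha}n^{-1/(2d_X+2\alpha)}\bigr)$, which you then integrate against an arbitrary bounded nonnegative $g$; your two-case treatment does not cover $g$ with mass both near and far from the contact points at intermediate scales). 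Incidentally, the appeal to Theorem \ref{coverage_thm} is both unnecessary and misplaced: that theorem concerns the truncated-variance-weighted statistic, not $\mathcal{C}_{n,\omega}$, and since $d_H$ is a maximum of two one-sided distances, $\theta_n\in\mathcal{C}_{n,\omega}(\hat c_n)$ already forces $d_H(\mathcal{C}_{n,\omega}(\hat c_n),\Theta_0(P))\ge d(\theta_n,\Theta_0(P))$ without any coverage statement.

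The genuine gap is the step you yourself flag and then leave unresolved: with only $\sqrt n\sup_{g,j}|\hat\mu_{n,j}-\mu_{P,j}|=O_P(1)$ you cannot get the probability to tend to one when $\hat c_n$ is merely bounded away from zero, and your two proposed escapes do not work as stated---requiring $\hat c_n\to\infty$ changes the hypotheses of the theorem, and the ``data-adaptive direction'' idea is asserted without any argument. The paper closes this gap differently, and this is the missing idea: after the reductio gives some $(j,g)$ with $n^{1/2}\bigl\{\hat\mu_{n,j}(\theta_n,g)-[\mu_{P,j}(\theta_n,g)\wedge 0]\bigr\}$ below a fixed negative constant, split $\mathcal{G}$ according to whether $\mu_{P,j}(\theta_n,g)>r$ or $\le r$. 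For $g$ with drift exceeding $r$, the positive mean dominates the $O_P(n^{-1/2})$ fluctuation, so $\hat\mu_{n,j}(\theta_n,g)>0$ w.p.a.\ one. For $g$ with drift at most $r$, the geometry forces $E_Pg(X_i)$, and hence $var_P[m_j(W_i,\theta_n)g(X_i)]$, to be smaller than any prescribed $\delta_3$ (small drift means $g$ puts almost all its mass on the shrinking neighborhood of the contact set), and then stochastic asymptotic equicontinuity of $\sqrt n(E_n-E_P)m_j(W_i,\theta)g_j(X_i)$ with respect to the covariance semimetric makes the centered fluctuation over this low-variance subclass smaller in magnitude than any fixed $\delta>0$ with probability approaching one. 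It is this variance-localization plus equicontinuity argument---not a crude uniform $O_P(1)$ bound---that converts the contradiction into an event of probability tending to one while keeping $\hat c_n$ only bounded away from zero; without it your proof establishes at best a probability bounded away from zero.
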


Under the smoothness conditions of Section \ref{rate_sec}, this slower
rate of convergence can be achieved (up to an arbitrarily slow rate of
growth of the critical value) using bounded weights with an estimated set
that contains $\Theta_0(P)$ with probability approaching one.

\begin{theorem}\label{rate_thm_bddweight}
Suppose that Assumptions 
\ref{g_pos_assump}, \ref{covering_assump}, \ref{bdd_assump},
\ref{S_assump}, \ref{consistency_assump},
\ref{diff_assump}, \ref{x0_assump}, \ref{support_assump} and
\ref{g_rate_assump} hold.  Let the weight function $\omega_n(\theta,g)$
satsify $\underline \omega\le \omega_n(\theta,g)\le \overline \omega$ for
some $0<\underline \omega \le \overline \omega <\infty$, and suppose that
$\hat c_n\to\infty$ with $\hat c_n/\sqrt{n}\to 0$.  Then
\begin{align*}
\inf_{P\in\mathcal{P}} P(\Theta_0(P)
  \subseteq \mathcal{C}_{n,\omega}(\hat c_n))
  \stackrel{n\to\infty}{\to} 1
\end{align*}
and, for $B$ large enough,
\begin{align*}
\sup_{P\in\mathcal{P}}
  P\left( \left(n/\hat c_n^2\right)^{\alpha/(2d_X+2\alpha)}
    d_H(\mathcal{C}_n(\hat c_n),\Theta_0(P)) > B\right)
\stackrel{n\to\infty}{\to} 0.
\end{align*}
\end{theorem}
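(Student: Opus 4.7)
The plan is to follow the same two-step template used in Theorems \ref{coverage_thm} and \ref{rate_thm_alpha}, specialized to the case where the weights are pinched between $\underline\omega$ and $\overline\omega$ uniformly in $(n,\theta,g)$. Replacing the truncated inverse-variance weight by a bounded weight removes the need for the $\sigma_n$ cutoff, converts the normalization from $\sqrt{n/\log n}$ to $\sqrt{n}$, and gives a slower drift--variance trade-off which produces the rate $(n/\hat c_n^2)^{\alpha/(2d_X+2\alpha)}$ instead of the $(n/\log n)^{\alpha/(d_X+2\alpha)}$ rate of the variance-weighted procedure.

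For the coverage statement, take $\theta\in\Theta_0(P)$. Assumption \ref{g_pos_assump} and $\theta\in\Theta_0(P)$ give $\mu_{P,j}(\theta,g)\ge 0$, so $\omega_{n,j}(\theta,g)\hat\mu_{n,j}(\theta,g)\ge -\overline\omega\,|(E_n-E_P)m_j(W_i,\theta)g_j(X_i)|$. Assumption \ref{S_assump}(ii) implies that $\sqrt{n}\,T_{n,\omega}(\theta)>\hat c_n$ forces some coordinate $\omega_{n,j}\hat\mu_{n,j}$ below $-K_{S,1}\hat c_n/\sqrt{n}$, so
\begin{align*}
\bigl\{\Theta_0(P)\not\subseteq\mathcal{C}_{n,\omega}(\hat c_n)\bigr\}
\subseteq\Bigl\{\sup_{\theta,g,j}\sqrt{n}\,|(E_n-E_P)m_j(W_i,\theta)g_j(X_i)|\ge K_{S,1}\hat c_n/\overline\omega\Bigr\}.
\end{align*}
Assumptions \ref{covering_assump} and \ref{bdd_assump} let the same maximal inequality used to prove Theorem \ref{coverage_thm} bound the supremum on the right by a term that is $O_P(1)$ uniformly in $P\in\mathcal{P}$, and since $\hat c_n\to\infty$ the event has probability tending to zero uniformly.

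For the rate statement, fix $\theta$ with $\delta\equiv d_H(\theta,\Theta_0(P))$ small (so $\delta\to 0$ thanks to $\hat c_n/\sqrt{n}\to 0$), and let $(\theta_0,x_0,j_0)$ be produced by Assumption \ref{x0_assump}. Pick $g\in\mathcal{G}$ as in Assumption \ref{g_rate_assump}, sandwiched between indicators of balls around $x_0$ of radii $C_{\mathcal{G},2}h$ and $h$, with $h$ to be chosen. Writing $\bar m_{j_0}(\theta,x,P)$ along the segment from $\theta_0$ to $\theta$ using the directional derivative bound in Assumption \ref{x0_assump} together with its uniform continuity from Assumption \ref{diff_assump}, adding the H\"older bound for $\bar m_{j_0}(\theta_0,\cdot,P)$ at $x_0$, and using Assumption \ref{support_assump} for the $P$-mass of balls around $x_0$, the same calculation underlying Theorem \ref{rate_thm_alpha} yields positive constants $c_1,c_2$ uniform in $P$ with
\begin{align*}
\mu_{P,j_0}(\theta,g)\le -c_1\,\delta\, h^{d_X}+c_2\, h^{\alpha+d_X}.
\end{align*}
Minimizing over $h$ at $h\asymp\delta^{1/\alpha}$ gives $\mu_{P,j_0}(\theta,g)\le -c_3\,\delta^{(\alpha+d_X)/\alpha}$. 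Combining $\omega_{n,j_0}\ge\underline\omega$, the uniform $O_P(n^{-1/2})$ empirical-process bound from the coverage step, and Assumption \ref{S_assump}(ii), one obtains $\sqrt{n}\,T_{n,\omega}(\theta)\ge c_4\sqrt{n}\,\delta^{(\alpha+d_X)/\alpha}-O_P(1)$ uniformly in $P$ on $\{\theta:\delta\le\bar\delta\}$. This exceeds $\hat c_n$ whenever $\delta\ge B(\hat c_n/\sqrt{n})^{\alpha/(d_X+\alpha)}=B(n/\hat c_n^2)^{-\alpha/(2d_X+2\alpha)}$ for $B$ large, which together with the coverage part gives the stated Hausdorff rate.

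The main technical obstacle is keeping every constant ($c_1,\ldots,c_4$, the $O_P(1)$ absorbed term, and the implicit constants in the maximal inequality) uniform in $P\in\mathcal{P}$, but this is precisely what Assumptions \ref{covering_assump}, \ref{bdd_assump}, \ref{support_assump}, \ref{x0_assump}, and the uniform continuity in Assumption \ref{diff_assump} were designed to guarantee, and these are the same uniformity issues already handled in Theorems \ref{coverage_thm} and \ref{rate_thm_alpha}. No genuinely new maximal inequality is needed, since the bounded-weight KS statistic stays in the classical Donsker regime where $\sqrt{n}(E_n-E_P)m_j(W_i,\theta)g_j(X_i)$ is stochastically bounded over $\mathcal{F}_{j,1}$ in $L_1$-covering norm.
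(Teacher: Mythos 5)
Your proposal is correct and follows essentially the same route as the paper: coverage from the uniform $\mathcal{O}_P(1)$ bound on $\sup_{\theta,g,j}\sqrt{n}|(E_n-E_P)m_j(W_i,\theta)g_j(X_i)|$ together with $\hat c_n\to\infty$, and the rate from verifying the polynomial drift bound $\inf_{g,j}\mu_{P,j}(\theta,g)\le -C\,d_H(\theta,\Theta_0(P))^{(d_X+\alpha)/\alpha}$ exactly as in the computation leading to (\ref{mu_bound_eq}) in Theorem \ref{rate_thm_alpha}, then comparing drift, bounded weights, and the $S$-function constants to the critical value — which is precisely what the paper packages as Lemma \ref{rate_lemma_bddweight} with $\gamma=\alpha/(d_X+\alpha)$, including the consistency step you gesture at to restrict to $d_H(\theta,\Theta_0(P))\le\bar\delta$. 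The only quibble is your intermediate display $\mu_{P,j_0}(\theta,g)\le -c_1\delta h^{d_X}+c_2h^{\alpha+d_X}$, whose second term tacitly uses an upper bound on the local mass of $X_i$ that is not assumed; the paper's form $\mu_{P,j_0}(\theta,g)\le\bigl(-(\eta/2)\delta+Ch^{\alpha}\bigr)E_Pg(X_i)$ with $h$ a small constant times $\delta^{1/\alpha}$ reaches the same endpoint $-c_3\delta^{(\alpha+d_X)/\alpha}$ using only Assumptions \ref{support_assump} and \ref{g_rate_assump}, so nothing in your argument breaks.
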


The $n^{\alpha/(2d_X+2\alpha)}$ rate of convergence for the estimator
using bounded weights is slower than the
$(n/\log n)^{\alpha/(d_X+2\alpha)}$ rate of convergence derived in Section
\ref{rate_sec} for the estimator using the truncated variance weights.
The rate of convergence is slower because sequences of local alternatives
violate a shrinking set of moment inequalities.  This leads to sequences
of functions in $\mathcal{G}$ with the most power having a shrinking
sequence of variances, so that a bounded weighting function cannot give
them enough weight.  While the examples in Section \ref{applications_sec}
show that this case is likely to be common in practice, bounded weight
functions will have advantages in other cases.
Under conditions such as Assumption
\ref{ident_pos_assump} for the selection model in Section
\ref{selection_sec}, sequences of local alternatives lead to
a single function
in $\mathcal{G}$ with
positive variance
having
power.  In this case, using a bounded sequence of weight functions does
not cause such a problem, and the increasing sequence of truncation points
does worse by a power of $\log n$ because of the larger critical value
needed for the KS statistic.

\subsection{Kernel Methods}

Suppose that we estimate the conditional mean
$E_P(m_j(W_i,\theta)|X_i=x)=\bar m_j(\theta,x,P)$ using the kernel estimate
\begin{align*}
\hat{\bar m}_j(\theta,x)
\equiv \frac{E_n m_j(W_i,\theta)k((X_i-x)/h_n)}{E_nk((X_i-x)/h_n)}
\end{align*}
for some sequence $h_n\to 0$.  \citet{chernozhukov_intersection_2009} and
\citet{ponomareva_inference_2010} propose methods for inference on
conditional moment inequalities based on this estimate of the
conditional mean.
Following \citet{chernozhukov_intersection_2009} %
this estimate of the conditional mean will
converge at a $\sqrt{nh^{d_X}/\log n}$ rate uniformly over $x$.  Using the
results in this paper, this rate can be shown to be uniform over $\theta$
as well, so that the statistic
\begin{align*}
T^{\text{kern}}_{n,k,h_n}(\theta)
  \equiv \sup_{x\in\text{supp}_P(X_i)} S(\hat{\bar m}(\theta,x))
\end{align*}
can be used to form an estimate
\begin{align*}
\mathcal{C}_n^{\text{kern}}(\hat c_n)
  \equiv\left\{\theta\in \Theta
    \bigg|\frac{\sqrt{nh^{d_X}}}{\sqrt{\log n}}
      T^{\text{kern}}_{n,k,h_n}(\theta)\le \hat c_n\right\}
\end{align*}
that will contain the identified set with probability approaching one for
$\hat c_n$ large enough.

I place the following conditions on the choice of kernel function $k$.
All of these conditions are fairly mild regularity conditions, except for
the requirement that $k$ be positive, which rules out higher order
kernels.  Ruling out higher order kernels is important.  Since the class
of KS statistics used in this paper integrate the conditional moment
inequality against positive functions, these statistics cannot take
advantage of smoothness conditions of more than two derivatives, while
higher order kernels with a properly chosen bandwidth can.

\begin{assumption}\label{kernel_assump}
(i) $k$ is nonnegative
(ii) $k$ integrates to one, is bounded and square integrable over
$\mathbb{R}^{d_X}$ and $k(t)$ is bounded away from zero for $t$ in
some neighborhood of $0$
(iii) Assumption \ref{covering_assump} holds with $\mathcal{G}$
replaced by the class of functions $t\mapsto k((t-x)/h)$ where $x$ and $h$
vary.
\end{assumption}

As with set estimators based on KS statistics with bounded weights, the
upper bounds on the smoothness of the conditional mean in Assumption
\ref{smoothness_upper_assump} lead to upper bounds on the rate of
convergence of estimates of the identified set based on kernel estimates.
For the first order kernel estimates described above, estimates of the
identified set will converge no faster than estimates based on variance
weighted KS statistics, and will only achieve the same rate if the tuning
parameter $h_n$ is chosen to go to zero at the proper rate.  Although this
means that properly weighted KS statistics will generally do at least as
well as first order kernel estimates and sometimes better in terms of
rates of convergence, kernel estimates with a properly chosen sequence
$h_n$ may do better against alternatives that approach the identified set
at a given rate.

The upper bound on rates of convergence for kernel based estimators is
stated in the following theorem.  In this theorem, the requirements that
the critical value $\hat c_n$ be large and that the bandwidth $h_n$
not shrink too quickly ensure that the procedure controls the probability
of false rejection.  If these conditions do not hold, we may have
$\Theta_0(P)\not\subseteq \mathcal{C}_n^{\text{kern}}(\hat c_n)$ with
high probability asymptotically.

\begin{theorem}\label{kernel_rate_upper_thm}
Suppose that Assumptions \ref{support_assump},
\ref{smoothness_upper_assump} and \ref{kernel_assump} hold.  If $\hat
c_n$ is chosen large enough, and if $h_n^{d_X}n/\log n\ge a$ for $a$
large enough, then, for some $\varepsilon>0$,
\begin{align*}
P\left(
\left(\frac{\sqrt{n h_n^{d_X}}}{\sqrt{\log n}}
  \wedge h_n^{-\alpha} \right)
  d_H(\mathcal{C}_n^{\text{kern}}(\hat c_n),\Theta_0(P))\ge \varepsilon
\right)
\stackrel{n\to\infty}{\to} 1.
\end{align*}
\end{theorem}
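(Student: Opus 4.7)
The plan is to exhibit a sequence $\theta_n$ that lies in $\mathcal{C}_n^{\text{kern}}(\hat c_n)$ with probability approaching one while staying at distance at least $\varepsilon r_n$ from $\Theta_0(P)$, where $r_n \equiv \sqrt{\log n/(nh_n^{d_X})} \vee h_n^{\alpha}$. By Assumption \ref{smoothness_upper_assump} I pick the boundary point $\theta_0$, a contact point $x_0 \in \mathcal{X}_0(\theta_0)$, and the path $t\mapsto \theta_t$, and set $\theta_n = \theta_{a_n}$ with $a_n = c' r_n$ for a small constant $c'$ to be chosen. Continuity of $t\mapsto d(\theta_t, \theta_0)$ together with $\theta_0 \in \delta \Theta_0(P)$ and the path leaving the identified set yield $d(\theta_n, \Theta_0(P)) \geq c'' a_n$ for $n$ large, so the distance claim reduces to showing $\theta_n \in \mathcal{C}_n^{\text{kern}}(\hat c_n)$ with probability tending to one.

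The core task is to establish, with probability approaching one, the uniform lower bound $\hat{\bar m}_j(\theta_n, x) \geq -K_{S,2} \hat c_n \sqrt{\log n/(nh_n^{d_X})}$ for every $j$ and $x \in \text{supp}_P(X_i)$: by Assumption \ref{S_assump}(ii) this forces $S(\hat{\bar m}(\theta_n, x)) \leq \hat c_n \sqrt{\log n/(nh_n^{d_X})}$ and hence $\theta_n \in \mathcal{C}_n^{\text{kern}}(\hat c_n)$. I decompose $\hat{\bar m}_j(\theta_n, x)$ into its kernel-weighted population expectation and a stochastic error. Because $\theta_0 \in \Theta_0(P)$, $\bar m_j(\theta_0, \cdot, P) \geq 0$ pointwise, so the kernel-weighted expectation of $\bar m_j(\theta_0, \cdot, P)$ is nonnegative at every $x$. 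Near $x_0$, Assumption \ref{smoothness_upper_assump}(ii) combined with this nonnegativity yields $\bar m_j(\theta_0, y, P) \geq \eta (\|y - x_0\|^{\alpha} \wedge \eta)$, and a change of variables $u = (y - x_0)/h_n$ (using the density bound in Assumption \ref{smoothness_upper_assump}(iii) and the positivity of $k$ on a neighborhood of the origin from Assumption \ref{kernel_assump}(ii)) converts this into a kernel-weighted expectation of at least $c h_n^{\alpha}$ at $x_0$ for some $c > 0$. Differentiability in $\theta$ (Assumption \ref{smoothness_upper_assump}(i)) gives $\bar m_j(\theta_n, y, P) \geq \bar m_j(\theta_0, y, P) - M a_n$ uniformly in $y$, so the kernel-weighted expectation at $\theta_n$ is bounded below by $c h_n^{\alpha} - M a_n$ at $x_0$ and by $-M a_n$ elsewhere. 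Uniform kernel convergence under Assumptions \ref{kernel_assump}(iii) and \ref{covering_assump} then gives a stochastic error of order $\sqrt{\log n/(nh_n^{d_X})}$ uniformly in $(\theta, j, x)$ (the condition $h_n^{d_X} n/\log n \geq a$ for $a$ large ensures this is sharp).

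Combining the pieces, with high probability $\hat{\bar m}_j(\theta_n, x)$ is bounded below by $c h_n^{\alpha} - M a_n - C \sqrt{\log n/(nh_n^{d_X})}$ for $x$ in a neighborhood of $x_0$, and by $-M a_n - C \sqrt{\log n/(nh_n^{d_X})}$ for $x$ bounded away from $\mathcal{X}_0(\theta_0)$ (using that $\bar m_j(\theta_0, \cdot, P)$ is bounded away from zero off the contact set). Choosing $c'$ small enough that $M c' < c/2$ and then $\hat c_n$ larger than a constant depending only on $M$, $C$, $c'$, and $K_{S,2}$ delivers the required uniform bound in both regimes: the noise-dominated case $h_n^{\alpha} \leq \sqrt{\log n/(nh_n^{d_X})}$ uses the simple $-M a_n - C\sqrt{\log n/(nh_n^{d_X})}$ bound, while the bias-dominated case $h_n^{\alpha} > \sqrt{\log n/(nh_n^{d_X})}$ uses the positive $h_n^{\alpha}$ kernel bias to absorb the linear drift.

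The hardest step will be the kernel bias calculation, namely turning the one-sided local growth bound $\bar m_j(\theta_0, y, P) \geq \eta \|y - x_0\|^{\alpha}$ into a uniform positive lower bound $c h_n^{\alpha}$ on the kernel-weighted expectation around $x_0$. Because Assumption \ref{smoothness_upper_assump}(ii) constrains only the finitely many contact points $\mathcal{X}_0(\theta_0)$, one must patch the local bound near each contact point with positivity away from the contact set, which follows from continuity, the uniform finiteness of $\mathcal{X}_0(\theta_0')$, and compactness of $\text{supp}_P(X_i)$. Extending uniform kernel convergence jointly in $\theta$ and $x$ is also delicate, but follows from standard empirical process bounds under the covering number assumption combined with the Lipschitz control on $\bar m_j(\theta, y, P)$ in $\theta$ provided by the bounded derivative in Assumption \ref{smoothness_upper_assump}(i).
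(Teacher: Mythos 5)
Your proposal is correct and takes essentially the same route as the paper: the paper simply argues the contrapositive --- if $d_H(\mathcal{C}_n^{\text{kern}}(\hat c_n),\Theta_0(P))<\varepsilon r_n$ then the path point $\theta_n$ at distance $\varepsilon r_n$ would be rejected, which is ruled out for small $\varepsilon$ by the same two ingredients you use, namely uniform kernel convergence at rate $\sqrt{\log n/(n h_n^{d_X})}$ (Lemmas \ref{kernel_scale_lemma} and \ref{kernel_var_lemma}) and a population kernel-bias bound from Assumption \ref{smoothness_upper_assump} (Lemma \ref{kernel_bias_lemma}), split into the same bias- and noise-dominated regimes. Two small repairs: acceptance of $\theta_n$ follows from the contrapositive of the first part of Assumption \ref{S_assump}, so the uniform lower bound you need is $\hat{\bar m}_j(\theta_n,x)>-K_{S,1}\hat c_n\sqrt{\log n/(n h_n^{d_X})}$ (constant $K_{S,1}$, not $K_{S,2}$), and in the bias-dominated regime the lower bound $c h_n^{\alpha}$ on the kernel-weighted average must be established uniformly over all $x$ within a fixed neighborhood of the contact set, including $x$ at distance of order $h_n$ from it (the annulus/density argument you sketch, using the ball-mass lower bound of Assumption \ref{support_assump}, the bounded density, and the bounded number of contact points), which is indeed the step you rightly flag as the hardest.
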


The upper bound on the rate of convergence in Theorem
\ref{kernel_rate_upper_thm} is the slower of
$\frac{\sqrt{n h_n^{d_X}}}{\sqrt{\log n}}$, which comes from a variance
term, and $h_n^{-\alpha}$, which comes from a bias term.  The optimal rate
of convergence for estimates based on first order kernels will be achieved
only when these terms are of the same order of magnitude, which
corresponds to
$h_n^{-\alpha}
=\mathcal{O}\left(\frac{\sqrt{n h_n^{d_X}}}{\sqrt{\log n}}\right)$
or
$h_n
=\mathcal{O}\left(\frac{\log n}{n}\right)^{1/(d_X+2\alpha)}$.  Thus,
choosing the optimal $h_n$ requires knowing or estimating the Holder
constant $\alpha$.  While kernel based estimates may give more power when
$h_n$ is chosen optimally, variance weighted KS statistics give the same
rate of convergence as kernel based estimates with the optimally chosen
$h_n$ without knowing $\alpha$.
If $h_n$ is chosen to go to zero at a different rate from the optimal rate
for a given data generating process, kernel based estimates of the
identified set will converge more slowly than estimates based on variance
weighted KS statistics.  If the choice of $h_n$ is far enough off from the
optimal choice (i.e. if the researcher is wrong enough about the
smoothness of the data generating process), even the rate of convergence
for unweighted KS statistics in Theorem \ref{rate_thm_bddweight} will be
better than the rate of convergence of the kernel based estimate.

\section{Monte Carlo}\label{monte_carlo_sec}

To examine the finite sample properties of the set estimates proposed in
this paper, and to illustrate their implementation, I perform a monte
carlo study.  I apply the weighted KS statistic based set estimates to a
quantile regression model with missing data
on the outcome variable, where no additional assumptions are imposed on
the process generating the missing values.  Letting $W_i^*$ be the true
value of the outcome variable, I simulate from a model where the median of
$W_i^*$ given $X_i=x$ is given by $\theta_1+\theta_2x$, but $W_i^*$ is not
always observed.  This falls into the framework of the interval quantile
regression model described in Section \ref{quantile_int_reg_sec}, with
$W_i^H=W_i^L=W_i^*$ when the outcome variable is observed, and
$W_i^H=\infty$ and $W_i^L=-\infty$ when the outcome variable is
unobserved.  The identified set contains all values of
$(\theta_1,\theta_2)$ that are consistent with the median regression model
and some, possibly endogenous, censoring mechanism generating the missing
values.

I generate data as follows.  For $X_i$ and $U_i^*$ generated as
independent variables with
$X_i\sim \text{unif}(-3,3)$ and
$U_i^*\sim \text{unif}(-1,1)$ and
$(\theta_{1,*},\theta_{2,*})=(1/4,1/2)$,
I set $W_i^*=\theta_{1,*}+\theta_{2,*}X_i+U_i^*$.  Then, %
I set $W_i^*$ to be missing (that is,
$(W_i^L,W_i^H)=(-\infty,\infty)$) with probability
$1/5-X_i^2/20+X_i^4/200$, and
observed ($W_i^L=W_i^H=W_i^*$) with the remaining probability
$1-(1/5-X_i^2/20+X_i^4/200)$.  Note that, while the data are generated by
taking a particular point $(\theta_{1,*},\theta_{2,*})$ in the identified
set and using a censoring process that satisfies
the missing at random assumption (that the event of $W_i^*$ not being
observed is independent of $U_i^*$ conditional on $X_i^*$), the identified
set for this model is larger than a single point, and contains all values
of $(\theta_1,\theta_2)$ that are consistent with median regression and
any form of censoring, including those where the probability of not
observing $W_i^*$ depends on the outcome $W_i^*$ itself.

\begin{figure}[h!]
  \begin{center}
  \includegraphics[width=3.5in]{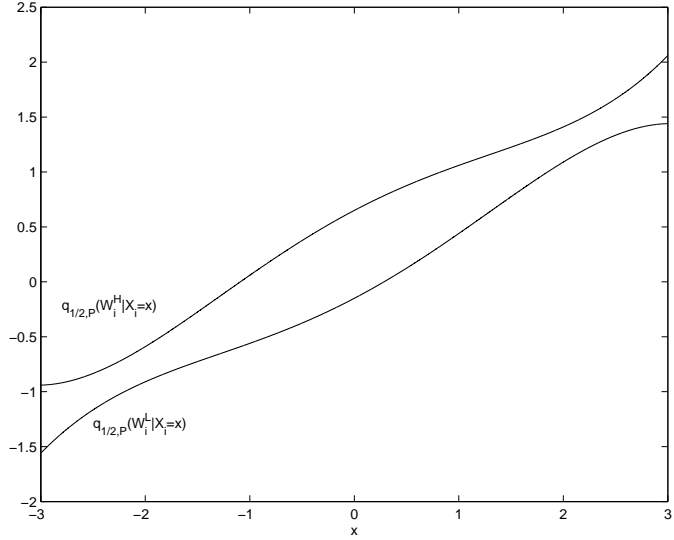}
  \end{center}
  \caption{Conditional Medians of $W_i^H$ and $W_i^L$ for Quantile
    Regression Model}
  \label{endpoint_cond_quants_fig}
\end{figure}

\begin{figure}[h!]
  \begin{center}
  \includegraphics[width=3.5in]{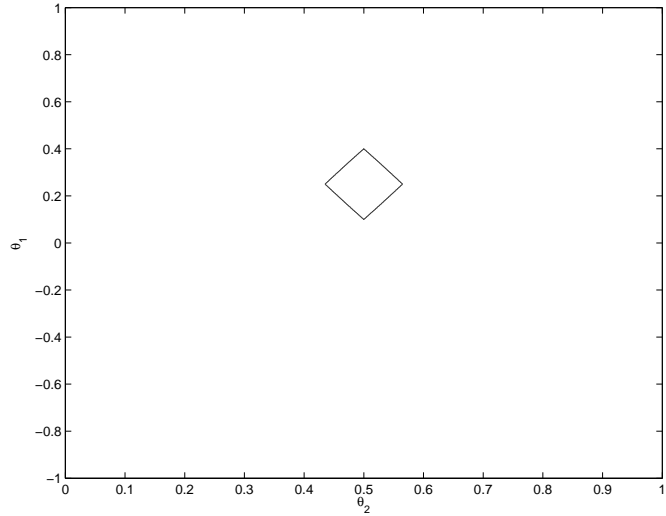}
  \end{center}
  \caption{Identified Set for Quantile Regression Model}
  \label{id_set_quant_fig}
\end{figure}

Figure \ref{endpoint_cond_quants_fig} shows the true conditional medians
$q_{1/2,P}(W_i^H|X_i=x)$ and $q_{1/2,P}(W_i^L|X_i=x)$ as a function of $x$
for this example.  The true identified set $\Theta_0(P)$ for this example
is the set of parameter values $(\theta_1,\theta_2)$ such that the line
$\theta_1+\theta_2x$ is between these two conditional medians for every
value of $x$ on the support of $X_i$.  Figure \ref{id_set_quant_fig} plots
the boundary of this identified set.  The identified set consists of all
points outlined by the shape in this figure.

To illustrate the implementation of the set estimates in this paper
applied to this model, I present a contour plot of the KS statistic
evaluated at different values of the parameters for a single data set
drawn from this data generating process.  For a given choice of the
critical value $\hat c_n$, the set estimate $\mathcal{C}_n(\hat c_n)$ is
then given by the set of points $(\theta_1,\theta_2)$ such that the KS
statistic $T_n(\theta)$ given in this plot is less than or equal to $\hat
c_n \sqrt{(\log n)/n}$.  In other words, each of the level sets in this
plot gives the boundary of $\mathcal{C}_n(\hat c_n)$ for some choice of
$\hat c_n$, with the level sets for larger values of the KS statistic
corresponding to larger (more conservative) choices of $\hat c_n$.

\begin{figure}[h!]
  \begin{center}
  \includegraphics[width=3.5in]{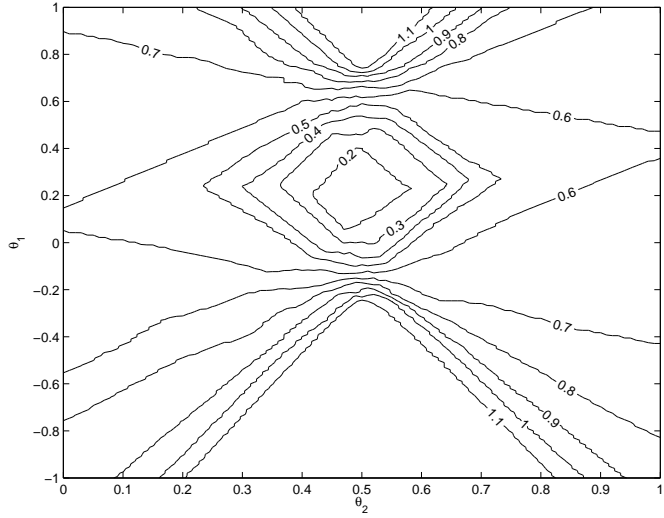}
  \end{center}
  \caption{Contours of KS Statistic for Quantile
    Regression Model}
  \label{ks_stat_contour_quant_fig}
\end{figure}

The contour plot is shown in Figure \ref{ks_stat_contour_quant_fig}.  This
plot was formed from a single data set of $n=500$ observations drawn from
the data generating process described above.  For the set of functions
$\mathcal{G}$, I used the set of indicator functions for intervals
$I(s<X<t)$.  For the truncation point $\sigma_n$ for the standard
deviation weights, I multiply $1/2$, the standard deviation of a single
$\text{Bernoulli}(1/2)$ variable, by $\sqrt{(\log n)(\log \log n)/n}$, a
sequence that converges to zero more slowly than
$\sqrt{\log n/n}$ as required.  I set $S(t_1,t_2)=\max(t_1,t_2,0)$.

For the monte carlo, I use the same choices of $\mathcal{G}$, $\sigma_n$,
and $S$.  For the critical value $\hat c_n$, I use the slowly increasing
sequence $2\sqrt{\log \log n}$.
Setting the critical value to $2$ regardless of $n$ leads to a
somewhat conservative critical value for the case where $\mathcal{G}$
contains a single function.  As $n$ increases, functions $I(s<X_i<t)$ with
$s$ close to $t$ are given increasing weight, so that the KS statistic
behaves like the maximum of an increasing number of standard normal
variables, and the critical value increases appropriately (or slightly
faster than needed).
I generate monte carlo data sets with the data generating process
described above and $n$ equal to $200$, $500$, and $1000$ observations.
I use 1000 replications for each monte carlo design.

\begin{table}[h!]
\centering
\begin{tabular}{c|ccccc|c}
\multicolumn{1}{c}{} & \multicolumn{5}{c}{quantiles 
  } &  \\
$n$&.25&.5&.75&.9&.95 & coverage \\\hline
200&0.45&0.5&0.54&0.59&0.62 & 100\%\\
500&0.34&0.36&0.39&0.41&0.43 & 100\%\\
1000&0.27&0.28&0.3&0.32&0.33 & 100\%\\\hline
\end{tabular}

\caption{Summary Statistics for Hausdorff Distances for Monte Carlo}
\label{mc_dists_table_quant}
\end{table}

\begin{table}[h!]
\centering
\begin{tabular}{c|c|ccccc}
 \multicolumn{2}{c}{} & \multicolumn{5}{c}{quantiles}  \\
&$n$&.25&.5&.75&.9&.95\\\hline
&200&0.45&0.49&0.53&0.57&0.59\\
$\theta_1$&500&0.34&0.36&0.38&0.41&0.43\\
&1000&0.26&0.28&0.3&0.32&0.33\\\hline
&200&0.33&0.36&0.41&0.48&0.52\\
$\theta_2$&500&0.24&0.25&0.27&0.28&0.29\\
&1000&0.19&0.2&0.21&0.22&0.22\\\hline
\end{tabular}

\caption{Summary Statistics for Hausdorff Distances for Individual
  Parameters for Monte Carlo}
\label{mc_proj_dists_table_quant}
\end{table}

\begin{table}[h!]
\centering
\begin{tabular}{c|c|ccccc|ccccc}
 \multicolumn{2}{c}{}  & \multicolumn{5}{c}{quantiles of $\hat \ell_{\theta_i}$}
    & \multicolumn{5}{c}{quantiles of $\hat u_{\theta_i}$}  \\
&$n$&.05&.1&.25&.5&.75&.25&.5&.75&.9&.95\\\hline
&200&-0.47&-0.45&-0.4&-0.35&-0.31&0.81&0.86&0.9&0.95&0.98\\
$\theta_1$&500&-0.31&-0.3&-0.26&-0.23&-0.2&0.71&0.74&0.77&0.79&0.81\\
&1000&-0.22&-0.2&-0.18&-0.16&-0.14&0.64&0.67&0.69&0.71&0.72\\\hline
&200&-0.05&0&0.05&0.1&0.14&0.87&0.9&0.95&1.01&1.06\\
$\theta_2$&500&0.16&0.17&0.18&0.2&0.22&0.78&0.8&0.82&0.83&0.84\\
&1000&0.22&0.23&0.24&0.25&0.27&0.73&0.75&0.76&0.77&0.78\\\hline
\end{tabular}

\caption{Summary Statistics for Set Estimates for Individual
  Parameters for Monte Carlo}
\label{mc_proj_table_quant}
\end{table}

Results are reported in Tables \ref{mc_dists_table_quant},
\ref{mc_proj_dists_table_quant} and \ref{mc_proj_table_quant}.
In Table \ref{mc_dists_table_quant}, I report, for each sample size, the
coverage probability (the proportion of the monte carlo replications for
which the estimate contains the identified set) and quantiles of
the Hausdorff distance between the estimate and the identified set.
Even for the smallest sample size of $n=200$, the set estimate contains
the identified set for every monte carlo replication.
For the data generating process used in these monte carlos, %
Theorem \ref{rate_thm_alpha} holds with $\alpha=2$, giving a rate of
convergence of
$((\hat c_n^2 \log n)/n)^{\alpha/(d_X+2\alpha)}
  =((\hat c_n^2 \log n)/n)^{2/5}$
(this follows from Theorem \ref{quantile_int_reg_thm}, since the
upper and lower conditional medians are bounded away from each other and
have smooth second derivatives, and regression lines corresponding to
parameters on the boundary of the identified set are tangent to one of the
conditional medians on the interior of the support of $X_i$).
According to this result, the distance of the set estimate to the
identified set should decrease by a factor of
$.77$ going from $n=200$ to $n=500$, and should decrease again by a factor
of
$.81$ going from $n=500$ to $n=1000$.
These asymptotic results give a decent approximation of the monte carlo
results reported in Table \ref{mc_dists_table_quant}, although the
Hausdorff distances in this table decrease slightly more quickly.

The Hausdorff distance to the identified set summarizes the accuracy of
the set estimator, but can be difficult to interpret, since it combines
the accuracy of the estimate across both coordinates.  Tables
\ref{mc_proj_dists_table_quant} and \ref{mc_proj_table_quant} report
monte carlo results for the projections of the set estimate
$\mathcal{C}_n(\hat c_n)$ onto each coordinate.
Define
$%
\Theta_{1,\text{proj}}(P)=\{\theta|(\theta,\theta_2)\in\Theta_0(P)
  \text{ some $\theta_2$}\}$
and
$\Theta_{2,\text{proj}}(P)=\{\theta|(\theta_1,\theta)\in\Theta_0(P)
  \text{ some $\theta_1$}\}$
to be the projections of the identified set onto each coordinate and
$\mathcal{C}_{n,1,\text{proj}}(\hat c_n)
  =\{\theta|(\theta,\theta_2)\in\mathcal{C}_n(\hat c_n)
    \text{ some $\theta_2$}\}$
and
$\mathcal{C}_{n,2,\text{proj}}(\hat c_n)
  =\{\theta|(\theta_1,\theta)\in\mathcal{C}_n(\hat c_n)
    \text{ some $\theta_1$}\}$
to be the corresponding projections of the set estimate.
Let $[\hat\ell_{\theta_i},\hat u_{\theta_i}]$ be the smallest interval
containing $\mathcal{C}_{n,i,\text{proj}}(\hat c_n)$ for $i=1,2$.
In Table \ref{mc_proj_dists_table_quant}, I report
quantiles of the realizations of
$d_H(\mathcal{C}_{n,1,\text{proj}}(\hat c_n),\Theta_{1,\text{proj}}(P))$
and
$d_H(\mathcal{C}_{n,2,\text{proj}}(\hat c_n),\Theta_{2,\text{proj}}(P))$
for the monte carlo replications.
In Table \ref{mc_proj_table_quant}, I report quantiles of
$\hat\ell_{\theta_i}$ and $\hat u_{\theta_i}$.

Table \ref{mc_proj_dists_table_quant} reveals that the set estimate for
the slope parameter $\theta_2$ has less sampling error in this case than
the intercept parameter $\theta_1$.  Indeed, the Hausdorff distances in
Table \ref{mc_dists_table_quant} appear to be driven mostly by the
intercept parameter.  While estimation error in $\theta_2$ is generally
smaller, than estimation error for the intercept parameter, the estimate
for $\theta_2$ appears to be shrinking towards the identified set for
$\theta_2$ at a similar rate.

Table \ref{mc_proj_table_quant} summarizes the finite sample behavior of
the confidence intervals $[\hat\ell_{\theta_i},\hat u_{\theta_i}]$ for
each $\theta_i$ generated from the set estimate $\mathcal{C}_n(\hat
c_n)$.  While these confidence intervals for individual coordinates
contain less information than the confidence region $\mathcal{C}_n(\hat
c_n)$ for the identified set, they are less cumbersome to report and
summarize.  For comparison, the projection of the true identified set onto
the intercept coordinate $\theta_2$ is
$\Theta_{1,\text{proj}}(P)=[.17,.33]$, and the projection of the true
identified set onto the slope coordinate $\theta_2$ is
$\Theta_{2,\text{proj}}(P)=[.47,.53]$.  Note that the confidence interval
for the slope parameter $\theta_2$ contains only positive values for 90\%
of the monte carlo replications even with the smallest sample size of 200
observations.  Thus, one would correctly conclude that $\theta_2$ is
positive for an overwhelming proportion of realizations of the data even
with a relatively small sample size, despite the conservative nature of
the estimate $\mathcal{C}_n(\hat c_n)$.

\section{Conclusion}\label{conclusion}

This paper proposes estimates of the identified set in conditional
moment inequality models based on variance weighted KS statistics.  I
derive rates of convergence of these and other set estimators to the
identified set under conditions that apply to many models of practical
interest.  In many settings, the rate of convergence of the set
estimator I propose is the fastest among those available, and, in settings
where other estimators are better, the improvement in rate of convergence
is no more than a factor of $\log n$.  While, in most cases, there is some
other estimator that does slightly better, choosing the correct one
requires knowledge of smoothness and shape conditions on the data
generating process, and guessing incorrectly about these conditions can
lead the researcher to use an estimator with a much slower rate of
convergence.  The advantage of the estimator proposed in this paper is
that it performs well under a variety of conditions without prior
knowledge of which of these conditions hold.

In settings where local
alternatives violate the conditional moment inequalities on a shrinking
set, the weights I propose for KS statistics give the statistics more power
against local alternatives than bounded weights.  The examples in Section
\ref{applications_sec} show that this
situation is common in practice.  When sequences of local alternatives
violate the conditional moment inequalities on a fixed, positive
probability set, the larger critical values required by the increasing
sequence of weight functions lead to a loss in power, but only by a factor
of $(\log n)^{1/2}$.  This provides a theoretical justification for
variance weighting in this context.  Under certain conditions, weighting
the KS statistic objective function by a truncated inverse of the
estimated variance increases the rate of convergence of the corresponding
estimator of the identified set.

\appendix
\section{Appendix}

This appendix collects several results not stated in the body of the
paper.
In Section \ref{unif_conv_sec}, I state and prove uniform convergence
results for classes of functions weighted by truncated standard
deviations.  These results are used later in the appendix in proving some
of the results stated in the body of the paper.
In Section \ref{exact_rate_sec}, I provide sufficient conditions for the
rate of convergence to be strictly faster than $\sqrt{n}$.
In Section \ref{slope_examp_sec}, I provide an example of a data
generating process for an interval regression where low power against
local alternatives when the slope parameter varies leads to a slower rate
of convergence to the identified set.
In Section \ref{covering_sec}, I state conditions under which Assumption
\ref{covering_assump} holds and verify them for the applications described
in Section \ref{applications_sec}.
Section \ref{proofs_sec} contains proofs of the theorems stated in the
body of the paper.

\subsection{Uniform Convergence Lemma}\label{unif_conv_sec}

The following lemma is useful in deriving some of these results.  Applied
to mean zero functions, the lemma says that any sequence of classes of
functions that is not too complex converges uniformly at a
$\sqrt{n/\log n}$ rate when scaled by the standard deviation if the
minimum standard deviation does not go to zero too fast.

\begin{lemma}\label{sd_rate_lemma}
Let $Z_1,\ldots,Z_n$ be iid observations and let $\mathcal{P}$ be a set of
probability distributions and $\mathcal{F}_{n,P}$ a set of classes of
functions indexed by $n\in \mathbb{N}$ and $P\in\mathcal{P}$
such that, for some $\overline f$,
$f(Z_i)\le \overline f$ with $P$-probability one for $P\in\mathcal{P}$ and
$f\in\mathcal{F}_{n,P}$ for each $n$.
Let $\mu_{2,P}(f)=(E_Pf(Z_i)^2)^{1/2}$ and let $\mu_{2,n}$ be a sequence
such that $\mu_{2,n}\sqrt{n/\log n}$ is bounded away from zero.
Let $\mathcal{G}_{n,P}=\{f\mu_{2,n}/(\mu_{2,P}(f)\vee
\mu_{2,n})|f\in\mathcal{F}_{n,P}\}$ and suppose that
\begin{align*}
\sup_{P\in \mathcal{P}}\, \sup_{n\in\mathbb{N}}\, \sup_{Q}
N(\varepsilon,\mathcal{G}_{n,P},L_1(Q))
\le A\varepsilon^{-W}
\end{align*}
for $0<\varepsilon<1$ where the supremum over $Q$ is over all probability
measures. Then
for some $B$ that does not depend on $N$,
\begin{align*}
\sup_{P\in\mathcal{P}}
P\left(
\frac{\sqrt{n}}{\sqrt{\log n}}
\sup_{f\in\mathcal{F}_{n,P}}
\left|(E_n-E_P)\frac{f(Z_i)}{\mu_{2,P}(f)\vee \mu_{2,n}}\right|
\ge B
\, \text{some $n\ge N$}\right)
\stackrel{N\to\infty}{\to} 0.
\end{align*}
\end{lemma}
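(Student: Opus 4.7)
The first step is to rewrite the supremum in terms of $g = f\mu_{2,n}/(\mu_{2,P}(f)\vee\mu_{2,n})\in\mathcal{G}_{n,P}$. Two properties follow immediately: (i) $|g(Z_i)|\le \overline f$ almost surely, uniformly in $P$, $n$ and $g$; and (ii) $E_P g(Z_i)^2\le \mu_{2,n}^2$ (equality in the case $\mu_{2,P}(f)\ge\mu_{2,n}$, and $E_P g^2=\mu_{2,P}(f)^2<\mu_{2,n}^2$ otherwise). Since $(E_n-E_P)[f/(\mu_{2,P}(f)\vee\mu_{2,n})]=(E_n-E_P)g/\mu_{2,n}$, the conclusion of the lemma reduces to showing that, for some $B$,
\begin{align*}
\sup_{P\in\mathcal{P}} P\left(\sup_{g\in\mathcal{G}_{n,P}}|(E_n-E_P)g|\ge B\mu_{2,n}\sqrt{\log n/n}\;\text{for some }n\ge N\right)\xrightarrow{N\to\infty}0.
\end{align*}

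\textbf{Pointwise Bernstein bound.} For a single $g\in\mathcal{G}_{n,P}$, Bernstein's inequality yields
\begin{align*}
P\bigl(|(E_n-E_P)g|>t\bigr)\le 2\exp\!\left(-\frac{nt^2}{2\mu_{2,n}^2+(2\overline f/3)\,t}\right).
\end{align*}
Setting $t=B\mu_{2,n}\sqrt{\log n/n}$ and using the hypothesis that $\mu_{2,n}\sqrt{n/\log n}\ge c>0$, we get $\overline f t\le(\overline f/c)\,\mu_{2,n}^2$, so the denominator is $O(\mu_{2,n}^2)$ and the bound collapses to $n^{-K_1 B^2}$ for a constant $K_1$ depending only on $\overline f$ and $c$. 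The essential work of the lemma is therefore done pointwise by the variance-sensitive Bernstein bound; what remains is to upgrade from a single $g$ to the supremum.

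\textbf{Uniform bound via chaining.} To pass to the supremum I would apply a Pollard-style maximal inequality (cf.\ \citet{pollard_convergence_1984}, Ch.\ II), using the polynomial covering-number bound $N(\varepsilon,\mathcal{G}_{n,P},L_1(Q))\le A\varepsilon^{-W}$ after first converting to an $L_2$ covering (losing at most a factor of $\overline f$ in $\varepsilon$, since the envelope is $\overline f$). Symmetrization followed by chaining on the finite set $\{Z_1,\dots,Z_n\}$ and applying Bernstein at each scale yields, for $B$ large enough,
\begin{align*}
P\!\left(\sup_{g\in\mathcal{G}_{n,P}}|(E_n-E_P)g|>B\mu_{2,n}\sqrt{\log n/n}\right)\le A'\bigl(\mu_{2,n}\sqrt{\log n/n}\bigr)^{-W}n^{-K_1 B^2/C}.
\end{align*}
Because $\mu_{2,n}\sqrt{n/\log n}$ is bounded away from zero, $(\mu_{2,n}\sqrt{\log n/n})^{-W}$ is at most polynomial in $n$, so the right-hand side is $n^{-(K_2 B^2-K_3)}$ for constants $K_2,K_3$ not depending on $P$ or $n$. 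Choosing $B$ so that $K_2 B^2-K_3>1$, the tail sum $\sum_{n\ge N}n^{-(K_2 B^2-K_3)}$ is convergent and tends to zero as $N\to\infty$, which gives the conclusion after a union bound over $n\ge N$.

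\textbf{Main obstacle.} The principal difficulty is keeping all constants uniform in $P\in\mathcal{P}$ throughout the chaining argument. This is tractable here because every input to the argument is uniform in $P$ by hypothesis: the envelope $\overline f$, the covering constants $A$ and $W$, the variance bound $\mu_{2,n}^2$, and the lower bound on $\mu_{2,n}\sqrt{n/\log n}$. A secondary subtlety is that the factor $(\mu_{2,n}\sqrt{\log n/n})^{-W}$ in the chaining bound must be absorbed by choosing $B$ large enough to make the exponent $K_2 B^2-K_3$ strictly exceed one; this is the only place where it matters that $\mu_{2,n}$ cannot shrink faster than $\sqrt{\log n/n}$.
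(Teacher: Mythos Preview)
Your proposal is correct and takes essentially the same approach as the paper: the reformulation step (passing to $g\in\mathcal{G}_{n,P}$ with $E_P g^2\le\mu_{2,n}^2$) is exactly the paper's proof of the lemma, which then invokes a maximal inequality stated separately as Theorem~\ref{sd_scale_thm}. Your sketch of that maximal inequality differs only cosmetically---the paper follows Pollard's Theorem~37 route (symmetrization, then Hoeffding conditional on the sample with a separate Lemma~33 control of $\sup_g E_n g^2$), whereas you invoke Bernstein directly and chain; both use the polynomial $L_1$ covering bound to get summable tail probabilities in $n$, and both rely on $\mu_{2,n}\sqrt{n/\log n}$ bounded below to absorb the $(\mu_{2,n}\sqrt{\log n/n})^{-W}$ factor into the choice of $B$.
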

\begin{proof}
The result follows by applying the following theorem to the classes of
functions $\mathcal{G}_{n,P}$.  For $g=f\mu_{2,n}/(\mu_{2,P}(f)\vee \mu_{2,n})\in
\mathcal{G}_{n,P}$,
$E_Pg(Z_i)^2=E_Pf(Z_i)^2\mu_{2,n}^2/(\mu_{2,P}(f)^2\vee \mu_{2,n}^2)
=\mu_{2,P}(f)^2\mu_{2,n}^2/(\mu_{2,P}(f)^2\vee \mu_{2,n}^2)\le \mu_{2,n}^2$, so the
theorem applies with the same $\mu_{2,n}$.
\end{proof}

Specialized to a class $\mathcal{P}$ of probability distributions with a
single element $P$, this says that the sequence in the probability
statement in the last display of the lemma is bounded by $B$ with
$P$-probability one.  The conclusion of the lemma implies that this scaled
sequence is $\mathcal{O}_P(1)$ uniformly in $P\in\mathcal{P}$, but is
slightly stronger.

The proof of the lemma uses the following theorem, which is a slightly
stronger version of Theorem 37 in \citet{pollard_convergence_1984}, with
the conditions stated in a slightly different way.
The following theorem basically follows the arguments of
the proof of Theorem 37 in \citet{pollard_convergence_1984}, but changes a
few things to get a slightly stronger result.  Note that the notation
$\mu_{2,P}^2$ is used for the raw second moment of functions rather than
their variance, although the distinction is often not important since
applications typically involve the raw second moment going to zero at the
same rate as the variance.

\begin{theorem}\label{sd_scale_thm}
Let $Z_1,\ldots,Z_n$ be iid observations and let
$\mathcal{P}$ be a set of probability measures and
$\mathcal{F}_{n,P}$ a
set of classes of functions indexed by $n\in \mathbb{N}$ and
$P\in\mathcal{P}$
such that, for some $\overline f$,
$f(Z_i)\le \overline f$ $P$-a.s. for $f\in\mathcal{F}_{n,P}$ for
$P\in\mathcal{P}$ for each $n$ and, for some
positive constants $A$ and $W$,
\begin{align*}
\sup_{P\in\mathcal{P}}\, \sup_{n\in \mathbb{N}}\, \sup_{Q}
N(\varepsilon,\mathcal{F}_{n,P},L_1(Q))
\le A\varepsilon^{-W}
\end{align*}
for $0< \varepsilon< 1$ where the supremum over $Q$ is over all
probability measures.  Suppose that, for some sequence $\mu_{2,n}$,
$E_Pf(Z_i)^2\le \mu_{2,n}^2$ for all $f\in \mathcal{F}_{n,P}$ for all
$P\in\mathcal{P}$ for all $n$.
Then, if $\mu_{2,n}\sqrt{n/\log n}$ is bounded away from zero
we will have, for some $B$ that does not depend on $N$,
\begin{align*}
\sup_{P\in \mathcal{P}} P\left(
\frac{\sqrt{n}}{\mu_{2,n}\sqrt{\log n}}
\sup_{f\in\mathcal{F}_{n,P}} \left|(E_n-E_P)f(Z_i)\right|
\ge B
\, \text{some $n\ge N$}\right)
\stackrel{N\to\infty}{\to} 0.
\end{align*}
\end{theorem}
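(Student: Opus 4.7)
The plan is to follow the strategy of Pollard's (1984) Theorem 37, but with two key modifications: first, I will derive an exponential tail inequality (rather than just a moment bound) so that a Borel-Cantelli argument can produce the uniformity over $n \ge N$; and second, I will track constants carefully to make sure that the resulting bound depends only on $A$, $W$, $\overline f$, and the rate $\mu_{2,n}$, and not on anything else specific to $P$, which will deliver the uniformity over $P \in \mathcal{P}$.

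First I would symmetrize: by a standard argument (Lemma II.3.1 in Pollard), $\sup_f |(E_n-E_P)f|$ is controlled up to a constant factor by the symmetrized process $\sup_f |E_n \epsilon_i f(Z_i)|$ with iid Rademacher $\epsilon_i$ independent of the data. Conditional on $Z_1,\ldots,Z_n$, this symmetrized process is sub-Gaussian in $f$ with respect to the random $L_2(P_n)$ pseudometric, so Hoeffding combined with a chaining argument over $\mathcal{F}_{n,P}$ using the $L_2(P_n)$ covering numbers gives, for $t$ large,
\begin{align*}
P\!\left(\sup_{f\in\mathcal{F}_{n,P}}|E_n\epsilon_i f(Z_i)| > t \,D_n \sqrt{\tfrac{\log n}{n}} \,\Big|\, Z_1,\ldots,Z_n\right)
\le C_1 \exp(-C_2 t^2 \log n),
\end{align*}
where $D_n^2 = \sup_{f\in\mathcal{F}_{n,P}} E_n f(Z_i)^2$ is the random envelope of the $L_2(P_n)$ diameter. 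The polynomial covering number bound $A\varepsilon^{-W}$ (which, by the uniform boundedness $|f|\le \overline f$, transfers from $L_1(Q)$ to $L_2(Q)$ up to constants) makes the chaining integral finite, and the constants $C_1, C_2$ depend only on $A$, $W$, $\overline f$.

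Second, I would control $D_n^2$ by $\mu_{2,n}^2$. Applying the same symmetrization-plus-chaining device to the class $\{f^2 : f\in\mathcal{F}_{n,P}\}$ (which has comparable covering numbers since $f^2 - g^2 = (f-g)(f+g)$ and the functions are bounded) gives $\sup_f |E_n f^2 - E_P f^2| = O_P(\overline f \sqrt{\log n / n})$, and combined with $E_P f^2 \le \mu_{2,n}^2$ and the condition $\mu_{2,n}\sqrt{n/\log n}\ge c>0$, this yields $D_n \le C_3 \mu_{2,n}$ on an event of probability at least $1 - C_4 n^{-c_5}$. On this event, the conditional tail bound above gives an unconditional tail bound
\begin{align*}
P\!\left(\sup_{f\in\mathcal{F}_{n,P}}|(E_n-E_P)f(Z_i)| > B\,\mu_{2,n}\sqrt{\tfrac{\log n}{n}}\right)
\le C_6 n^{-C_7 B^2} + C_4 n^{-c_5},
\end{align*}
with constants depending only on $A, W, \overline f$ and the lower bound on $\mu_{2,n}\sqrt{n/\log n}$.

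Third, for $B$ large enough this tail probability is summable in $n$, so a union bound over $n \ge N$ gives
\begin{align*}
\sup_{P\in\mathcal{P}} P\!\left(\tfrac{\sqrt n}{\mu_{2,n}\sqrt{\log n}}\sup_f|(E_n-E_P)f| \ge B \text{ some } n\ge N\right) \le \sum_{n\ge N}(C_6 n^{-C_7 B^2}+C_4 n^{-c_5}) \to 0
\end{align*}
as $N\to\infty$. The main obstacle is the second step: controlling the random $L_2(P_n)$ diameter $D_n$ uniformly over $\mathcal{F}_{n,P}$ by $\mu_{2,n}$, since a naive bound using the envelope $\overline f$ would be too crude. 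This is where the assumption that $\mu_{2,n}\sqrt{n/\log n}$ is bounded away from zero enters: it guarantees that the $\sqrt{\log n/n}$ fluctuations in $E_n f^2$ are small relative to $\mu_{2,n}^2$, so $D_n$ really is of order $\mu_{2,n}$ with high probability rather than dominated by lower-order terms. Uniformity in $P$ is free throughout because every constant is determined by $A$, $W$, $\overline f$, and the scalar sequence $\mu_{2,n}$.
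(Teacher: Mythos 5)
Your overall architecture mirrors the paper's proof of this result (which adapts Pollard's Theorem 37): symmetrize, apply a covering-number-based exponential bound conditional on the data with the random quantity $D_n^2=\sup_f E_nf(Z_i)^2$ in the exponent, show $D_n\lesssim \mu_{2,n}$ with high probability, and then sum the resulting polynomially decaying tail bounds over $n\ge N$ to get the ``some $n\ge N$'' statement uniformly in $P$. The problem is the step you yourself flag as the main obstacle. Your control of $D_n$ rests on the additive bound $\sup_f|E_nf^2-E_Pf^2|=O_P(\overline f\sqrt{\log n/n})$, and you then claim that $E_Pf^2\le\mu_{2,n}^2$ together with $\mu_{2,n}\sqrt{n/\log n}\ge c$ gives $D_n\le C_3\mu_{2,n}$. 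That inference is false in the regime the hypothesis allows: the assumption only guarantees $\sqrt{\log n/n}\le \mu_{2,n}/c$, i.e.\ the fluctuation is small relative to $\mu_{2,n}$, not relative to $\mu_{2,n}^2$. In the boundary case $\mu_{2,n}\asymp\sqrt{\log n/n}$ (exactly the case that matters for the paper, where $\sigma_n$ is taken to shrink at essentially this rate) one has $\mu_{2,n}^2\asymp (\log n)/n\ll\sqrt{\log n/n}$, so $\mu_{2,n}^2+O_P(\sqrt{\log n/n})$ is of order $\sqrt{\log n/n}$, not $\mu_{2,n}^2$, and your conditional exponent $n\varepsilon_n^2/D_n^2$ collapses. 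The envelope-scaled additive bound is precisely the ``naive'' bound you said you wanted to avoid.

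The paper closes this step with Pollard's Lemma 33, a relative (multiplicative) deviation inequality for the empirical second moment: with $\delta_n=a\mu_{2,n}/8$ it gives $P(\sup_f E_nf^2>a^2\mu_{2,n}^2)\le 4E[N(\delta_n,\mathcal{F}_{n,P},L_2(\mathbb{P}_n))]\exp(-n\delta_n^2)$, and since $n\delta_n^2\gtrsim a^2\log n$ this decays at an arbitrarily fast polynomial rate by choosing $a$ large, which is also what makes the tail bounds summable over $n$. You could alternatively repair your route by replacing the sub-Gaussian/Hoeffding treatment of the squared class with a Bernstein-type (variance-sensitive) chaining bound, using $E_P(f^2)^2\le\overline f^2\mu_{2,n}^2$ as the variance proxy; that yields $\sup_f|E_nf^2-E_Pf^2|=O_P(\mu_{2,n}\sqrt{\log n/n}+(\log n)/n)=O_P(\mu_{2,n}^2)$ under the assumption, which does suffice. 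Either fix is needed; as written, the second step does not go through, and you should also verify that the exceptional-event probability decays fast enough (faster than $n^{-1}$) to be summable, which the relative-deviation route delivers automatically.
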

\begin{proof}
The proof is a slight modification of the proof of Theorem 37 in
\citet{pollard_convergence_1984}.  The sequence $\mu_{2,n}$ corresponds to
$\delta_n$ in that theorem, and, in contrast to the theorem from
\citet{pollard_convergence_1984} which defines a sequence $\alpha_n$ that
must satisfy certain conditions, this theorem corresponds to using the
best $\alpha_n$ sequence possible, and noting that $\alpha_n$ need not be
nonincreasing as long as it is bounded.

Without loss of of generality, assume that $\overline f=1$.
Fix $B$ (conditions on how large $B$ has to be will be stated throughout
the theorem) and set
$\varepsilon_n=\frac{B \mu_{2,n}\sqrt{\log n}}{8\sqrt{n}}$.  Since
$var_P((E_n-E_P)f(Z_i))/(4\varepsilon_n^2)
  \le (\mu_{2,n}^2/n)/(4 B^2\mu_{2,n}^2(\log n)/(64n))
  = 16/(B^2\log n) \le 1/2$
for $n$ greater than some number that does not depend on $P$, the
inequality (30) in \citet{pollard_convergence_1984} will eventually imply
\begin{align*}
&P\left(\frac{\sqrt{n}}{\mu_{2,n}\sqrt{\log n}}
\sup_{f\in\mathcal{F}_{n,P}} \left|(E_n-E)f(Z_i)\right|
\ge B\right)
=P\left(\sup_{f\in\mathcal{F}_{n,P}} \left|(E_n-E)f(Z_i)\right|
\ge 8\varepsilon_n\right)  \\
&\le 4 (P\times\nu) \left(\sup_{f\in\mathcal{F}_{n,P}}
  \left|\mathbb{P}_n^\circ f(Z_i)\right|
\ge 2\varepsilon_n\right)
\end{align*}
for all $P\in\mathcal{P}$
where $\mathbb{P}_n^\circ f(Z_i) = \frac{1}{n}\sum_{i=1}^nf(Z_i)\cdot s_i$
and $s_1,\ldots,s_n$ are iid random variables that take on values $\pm 1$
each with probability one half drawn independent of $Z_1,\ldots,Z_n$ and
$\nu$ denotes the probability measure of $s_1,\ldots,s_n$.
Conditional on the data, this
is bounded by
\begin{align*}
(P\times \nu)
\left(\sup_{f\in\mathcal{F}_n} \left|\mathbb{P}_n^\circ f(Z_i)\right|
\ge 2\varepsilon_n\bigg|Z_1,\ldots,Z_n\right)
\le 2N(\varepsilon_n,\mathcal{F}_{n,P},L_1(\mathbb{P}_n))
  \exp\left[-\frac{1}{2}
    \frac{n\varepsilon_n^2}
      {\left(\sup_{f\in\mathcal{F}_{n,P}} E_nf(Z_i)^2\right)}\right].
\end{align*}
For any constant $a>0$,
on the event that
\begin{align}\label{sup_En_ineq}
\sup_{f\in\mathcal{F}_{n,P}} E_nf(Z_i)^2\le a^2\mu_{2,n}^2,
\end{align}
the previous display will be bounded by
\begin{align*}
&2N(\varepsilon_n,\mathcal{F}_{n,P},L_1(\mathbb{P}_n))
  \exp\left(-\frac{1}{2}
    \frac{n\varepsilon_n^2}{a^2\mu_{2,n}^2}\right)
\le 2A\varepsilon_n^{-W}
  \exp\left(-\frac{1}{2}
    \frac{n\varepsilon_n^2}{a^2\mu_{2,n}^2}\right)  \\
&= 2A\exp\left[-\frac{1}{2}
  \cdot n \cdot \frac{B^2 \mu_{2,n}^2\log n}{64 n}
  \cdot \frac{1}{a^2\mu_{2,n}^2}
  - W \log \frac{B \mu_{2,n}\sqrt{\log n}}{8\sqrt{n}}\right]  \\
&= 2A\exp\left[-\frac{B^2 \log n}{128 a^2}
  - W \log \frac{B}{8}
  - W \log \frac{\mu_{2,n}\sqrt{\log n}}{\sqrt{n}}\right]  %
\end{align*}
The condition that $\mu_{2,n}\sqrt{n/\log n}$ is bounded away from zero is
more than enough to guarantee that the term in the last logarithm is
bounded from below by a fixed power of $n$.
Thus, the expression in the last display can be
made to go to zero at any polynomial rate for any $a$ by choosing $B$ to
be large enough (in a way that depends on $a$ but not $n$ or $P$).

For any $P\in\mathcal{P}$, the $P$-probability of (\ref{sup_En_ineq}) failing to hold can be bounded using
Lemma 33 in \citet{pollard_convergence_1984} with $\delta_n=a\mu_{2,n}/8$
(the lemma holds for $a\ge 8$):
\begin{align*}
&P\left(\sup_{f\in\mathcal{F}_{n,P}} E_nf(Z_i)^2> a^2\mu_{2,n}^2\right)
=P\left(\sup_{f\in\mathcal{F}_{n,P}} E_nf(Z_i)^2> 64\delta_n^2\right)
\le 4 E_P[N(\delta_n,\mathcal{F}_{n,P},L_2(\mathbb{P}_n))]
  \exp(-n\delta_n^2)  \\
&\le 4 A(\delta_n/2)^{-W}\exp(-n\delta_n^2)
= 4\cdot 2^W A \exp(-n\delta_n^2-W \log \delta_n)  \\
&= 4\cdot 2^W A \exp\left[-n a^2\mu_{2,n}^2/64
  -W \log \frac{a}{8}
  -W \log \mu_{2,n}\right]  \\
&\le 4\cdot 2^W A \exp\left[-\frac{n a^2}{64}\frac{c\log n}{n}
  -W \log \frac{a}{8}
  -\frac{W}{2} \log \frac{c\log n}{n}\right]  %
\end{align*}
where $\sqrt{c}$ is a lower bound for $\mu_{2,n}\sqrt{n/\log n}$.  This can
be made to go to zero at any polynomial rate by choosing $a$ large.

Thus, if we choose $a$ and $B$ large enough,
$\sup_{P\in \mathcal{P}} P\left(\frac{\sqrt{n}}{\mu_{2,n}\sqrt{\log n}}
\sup_{f\in\mathcal{F}_{n,P}} \left|(E_n-E_P)f(Z_i)\right|\ge B\right)$ will be
summable over $n$, so that
\begin{align*}
&\sup_{P\in \mathcal{P}} P\left(\frac{\sqrt{n}}{\mu_{2,n}\sqrt{\log n}}
\sup_{f\in\mathcal{F}_{n,P}} \left|(E_n-E_P)f(Z_i)\right|\ge B
\, \text{some $n\ge N$}\right)  \\
&\le \sum_{n\ge N} \sup_{P\in \mathcal{P}}
P\left(\frac{\sqrt{n}}{\mu_{2,n}\sqrt{\log n}}
\sup_{f\in\mathcal{F}_{n,P}} \left|(E_n-E_P)f(Z_i)\right|\ge B
\right)
\stackrel{N\to\infty}{\to} 0.
\end{align*}

\end{proof}

With this lemma in hand, we can get rates of convergence for classes of
functions weighted by their standard deviation under additional conditions
that allow the standard deviation to be consistently estimated.  In order
to get results for functions weighted by the standard deviation rather
than the raw second moment, I apply the previous results to classes of
functions of the form $f-E_Pf(Z_i)$.
Letting
$\hat\sigma(f)^2=E_n(f(Z_i))^2-(E_nf(Z_i))^2$ and
$\sigma_P(f)^2=E_P(f(Z_i))^2-(E_Pf(Z_i))^2$, rates of convergence for
\begin{align*}
\sup_{f\in\mathcal{F}_n}
\left|(E_n-E_P)\frac{f(Z_i)}{\hat\sigma(f)\vee\sigma_n}\right|
\end{align*}
will follow by applying the above results to the classes of functions
$f-E_Pf(Z_i)$ once we can bound
$\frac{\sigma_P(f)\vee\sigma_n}{\hat\sigma(f)\vee\sigma_n}$,
and for this it is sufficient to show that $\hat\sigma(f)/\sigma_P(f)$
converges to one uniformly over $\sigma_P(f)\ge \sigma_n$.  The following
lemma gives sufficient conditions for this.

\begin{lemma}\label{sd_est_lemma}
Let $Z_1,\ldots,Z_n$ be iid observations and let $\mathcal{F}_n$ be a
sequence of classes of functions
and $\mathcal{P}$ a set of probability distributions
such that, for some $\overline f$,
$f(Z_i)\le \overline f$ with $P$-probability one for $P\in\mathcal{P}$ and
$f\in\mathcal{F}_n$ for each $n$.
Let $\sigma_P(f)=(E_Pf(Z_i)^2-(E_Pf(Z_i))^2)^{1/2}$ and let $\sigma_n$ be
a sequence such that $\sigma_n\sqrt{n/\log n}$ is bounded away from zero.
Define
$\mathcal{G}_{n,P}^1
  =\{(f-E_P f(Z_i))\sigma_n/(\sigma_P(f)\vee\sigma_n)\}$
and
$\mathcal{G}_{n,P}^2=\{(f-E_P f(Z_i))^2\sigma_n
  /(\mu_{2,P}([f-E_P f(Z_i)]^2)\vee \sigma_n)\}$,
and suppose that, for some positive constants $A$ and $W$,
\begin{align*}
\sup_{P\in\mathcal{P}}\, \sup_{n\in\mathbb{N}}\, \sup_Q
N(\varepsilon,\mathcal{G}_{n,P}^i,L_1(Q))\le A\varepsilon^{-W}
\end{align*}
for $0<\varepsilon<1$ and $i=1,2$, where the supremum over $Q$ is over all
probability measures.

Then, for every $\varepsilon>0$, there exists a $c$ such that, if
$\sigma_n\sqrt{n/\log n}\ge c$ for all $n$,
\begin{align*}
\sup_{P\in\mathcal{P}}
P\left(
\sup_{f\in\mathcal{F}_n, \sigma_P(f)\ge \sigma_n}
  \left|\frac{\hat\sigma(f)}{\sigma_P(f)} - 1\right|
\ge \varepsilon
\, \text{some $n\ge N$}\right)
\stackrel{N\to\infty}{\to}0.
\end{align*}

\end{lemma}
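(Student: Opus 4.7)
The plan is to reduce control of $\hat\sigma(f)/\sigma_P(f)-1$ to control of the squared ratio $\hat\sigma(f)^2/\sigma_P(f)^2-1$. Since $\hat\sigma(f)^2\ge 0$ automatically, once we show that the latter can be made smaller than any preassigned $\varepsilon'>0$ uniformly over $f$ with $\sigma_P(f)\ge\sigma_n$ and $P\in\mathcal{P}$, the elementary inequality $|\sqrt{1+x}-1|\le |x|$ valid for $|x|\le 1/2$ delivers the conclusion.

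To analyze the squared ratio, set $\tilde f(Z_i)=f(Z_i)-E_Pf(Z_i)$. A short expansion of $\hat\sigma(f)^2$ in terms of $\tilde f$ (the cross and constant terms cancel) gives the identity
\begin{align*}
\hat\sigma(f)^2-\sigma_P(f)^2
  =(E_n-E_P)\tilde f(Z_i)^2-\bigl((E_n-E_P)f(Z_i)\bigr)^2.
\end{align*}
So dividing by $\sigma_P(f)^2$, the problem splits into bounding $|(E_n-E_P)\tilde f^2|/\sigma_P(f)^2$ and $((E_n-E_P)f)^2/\sigma_P(f)^2$ uniformly.

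For the second piece, I would apply Lemma \ref{sd_rate_lemma} directly to the mean-zero class $\{\tilde f:f\in\mathcal{F}_n\}$; the rescaled class is precisely $\mathcal{G}_{n,P}^1$, whose covering numbers are controlled by hypothesis. This yields $|(E_n-E_P)f|\le B_1\sqrt{\log n/n}\,(\sigma_P(f)\vee\sigma_n)$ uniformly in $P$ and eventually in $n$, so on $\{\sigma_P(f)\ge\sigma_n\}$ the squared term is bounded by $B_1^2(\log n)/n$, which tends to zero. For the first piece I would apply Lemma \ref{sd_rate_lemma} again, this time to the class $\{\tilde f^2:f\in\mathcal{F}_n\}$; the rescaled class is $\mathcal{G}_{n,P}^2$, again covered by hypothesis. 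Using the pointwise bound $|\tilde f|\le 2\overline f$, one gets $\mu_{2,P}(\tilde f^2)=(E_P\tilde f^4)^{1/2}\le 2\overline f\,\sigma_P(f)$, so on $\{\sigma_P(f)\ge\sigma_n\}$ the scaling factor $\mu_{2,P}(\tilde f^2)\vee\sigma_n$ is bounded by a multiple of $\sigma_P(f)$, and the resulting bound for the first piece is of order $\sqrt{\log n/n}/\sigma_P(f)\le 1/(\sigma_n\sqrt{n/\log n})\le 1/c$.

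Combining the two bounds shows $|\hat\sigma(f)^2/\sigma_P(f)^2-1|\le C_1/c+C_2(\log n)/n$ uniformly over $P$, $n$ large, and $f$ with $\sigma_P(f)\ge\sigma_n$, on an event whose complement has probability going to zero. Given $\varepsilon>0$, choose $c$ so large that $C_1/c$ is less than a convenient fraction of $\varepsilon^2$, then pass to square roots. The step that merits the most care is not an obstacle per se but the bookkeeping required so that the maximal inequalities from Lemma \ref{sd_rate_lemma} hold \emph{simultaneously} for all $n\ge N$ uniformly in $P\in\mathcal{P}$; the "some $n\ge N$" formulation of that lemma is what allows the final statement to be phrased uniformly over tails in $n$.
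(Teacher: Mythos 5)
Your proposal is correct and follows essentially the same route as the paper: the identity $\hat\sigma(f)^2-\sigma_P(f)^2=(E_n-E_P)(f-E_Pf)^2-((E_n-E_P)f)^2$, Lemma \ref{sd_rate_lemma} applied to the rescaled classes $\mathcal{G}_{n,P}^1$ and $\mathcal{G}_{n,P}^2$, the moment bound $\mu_{2,P}([f-E_Pf]^2)\le 2\overline f\,\sigma_P(f)$, and taking $c$ large to kill the leading term. The only (harmless) cosmetic difference is that you make the passage from the squared ratio to $|\hat\sigma/\sigma_P-1|$ explicit, which the paper leaves implicit.
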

\begin{proof}
We have
\begin{align}\label{sd_tri_ineq}
&\sup_{f\in\mathcal{F}_n, \sigma_P(f)\ge \sigma_n}
\left|\frac{\hat\sigma^2(f)-\sigma_P^2(f)}{\sigma_P^2(f)}\right|  \notag  \\
&=\sup_{f\in\mathcal{F}_n, \sigma_P(f)\ge \sigma_n}
\left|\frac{(E_n-E_P)(f(Z_i)-E_Pf(Z_i))^2-(E_nf(Z_i)-E_Pf(Z_i))^2}
  {\sigma_P^2(f)}\right|  \notag  \\
&\le \sup_{f\in\mathcal{F}_n, \sigma_P(f)\ge \sigma_n}
\left|\frac{(E_n-E_P)(f(Z_i)-E_Pf(Z_i))^2}
  {\sigma_P^2(f)}\right|
+\left|\frac{[(E_n-E_P)f(Z_i)]^2}
  {\sigma_P^2(f)}\right|.
\end{align}
The first term is equal to
\begin{align*}
\left|\frac{(E_n-E_P)(f(Z_i)-E_Pf(Z_i))^2}
  {\mu_{2,P}([f-E_Pf(Z_i)]^2)\vee \sigma_n}\right|
\frac{\mu_{2,P}([f-E_Pf(Z_i)]^2)\vee \sigma_n}
  {\sigma_P^2(f)}.
\end{align*}
We have
$\mu_{2,P}([f-E_Pf(Z_i)]^2)^2=E_P[f(Z_i)-E_Pf(Z_i)]^4
\le 4\overline f^2E_P[f(Z_i)-E_Pf(Z_i)]^2=4\overline f^2\sigma_P(f)^2$ so
that
\begin{align*}
\frac{\mu_{2,P}([f-E_Pf(Z_i)]^2)\vee \sigma_n}
  {\sigma_P^2(f)}
\le \frac{[2\overline f\sigma_P(f)]\vee \sigma_n}{\sigma_P^2(f)}
\le \frac{2\overline f\vee 1}{\sigma_P(f)}
\le \frac{2\overline f\vee 1}{\sigma_n}
\end{align*}
where the last two inequalities hold for $\sigma_P(f)\ge \sigma_n$.  Thus,
for any $\varepsilon>0$,
\begin{align*}
&\sup_{P\in\mathcal{P}}
P\left(
\sup_{f\in\mathcal{F}_n, \sigma_P(f)\ge \sigma_n}
\left|\frac{(E_n-E_P)(f(Z_i)-E_Pf(Z_i))^2}
  {\sigma_P^2(f)}\right|\ge \varepsilon
\, \text{some $n\ge N$}\right)  \\
&\le \sup_{P\in\mathcal{P}}
P\left(
\sup_{f\in\mathcal{F}_n, \sigma_P(f)\ge \sigma_n}
\frac{2\overline f\vee 1}{\sigma_n}
\left|\frac{(E_n-E_P)(f(Z_i)-E_Pf(Z_i))^2}
  {\left\{E[(f(Z_i)-E_Pf(Z_i))^2]^2\right\}^{(1/2)}\vee \sigma_n}\right|
\ge \varepsilon
\, \text{some $n\ge N$}\right)  \\
&\le \sup_{P\in\mathcal{P}}
P\left(
\sup_{f\in\mathcal{F}_n, \sigma_P(f)\ge \sigma_n}
\frac{\sqrt{n}}{\sqrt{\log n}}
\left|\frac{(E_n-E_P)(f(Z_i)-E_Pf(Z_i))^2}
  {\left\{E[(f(Z_i)-E_Pf(Z_i))^2]^2\right\}^{(1/2)}\vee \sigma_n}\right|
\ge c\varepsilon/(2\overline f\vee 1)
\, \text{some $n\ge N$}\right)
\end{align*}
where the last inequality holds for $\sigma_n\sqrt{n/\log n}\ge c$.  By
Lemma \ref{sd_rate_lemma}, this
will go to zero if $c$ is large enough so that $c\varepsilon/(2\overline
f\vee 1)$ is greater than the $B$ for which the conclusion of Lemma
\ref{sd_rate_lemma} holds for the class $\mathcal{G}_{n,P}^2$.

The probability that the second term in the last line of Equation
\ref{sd_tri_ineq} is greater than $\varepsilon>0$ for some $n\ge N$ goes
to zero uniformly in $P\in\mathcal{P}$ by Lemma \ref{sd_rate_lemma}
with the class $\{f-E_P f(Z_i)|f\in\mathcal{F}_n\}$ taking the place of $\mathcal{F}_{n,P}$ in that lemma.
\end{proof}

Combining these lemmas gives a consistency result for classes of functions
weighted by their standard deviations.  The conditions are the same as
those for Lemma \ref{sd_est_lemma}.

\begin{lemma}\label{sd_hat_rate_lemma}
Let $Z_1,\ldots,Z_n$ be iid observations and let $\mathcal{F}_n$ be a
sequence of classes of functions
and $\mathcal{P}$ a set of probability distributions
such that, for some $\overline f$,
$f(Z_i)\le \overline f$ with $P$-probability one for $P\in\mathcal{P}$ and
$f\in\mathcal{F}_n$ for each $n$.
Let $\sigma_P(f)=(E_Pf(Z_i)^2-(E_Pf(Z_i))^2)^{1/2}$.
Define
$\mathcal{G}_{n,P}^1
  =\{(f-E_P f(Z_i))\sigma_n/(\sigma_P(f)\vee\sigma_n)\}$
and
$\mathcal{G}_{n,P}^2=\{(f-E_P f(Z_i))^2\sigma_n
  /(\mu_{2,P}([f-E_P f(Z_i)]^2)\vee \sigma_n)\}$,
and suppose that, for some positive constants $A$ and $W$,
\begin{align*}
\sup_{P\in\mathcal{P}}\, \sup_{n\in\mathbb{N}}\, \sup_Q
N(\varepsilon,\mathcal{G}_{n,P}^i,L_1(Q))\le A\varepsilon^{-W}
\end{align*}
for $0<\varepsilon<1$ and $i=1,2$, where the supremum over $Q$ is over all
probability measures.

Then, for some $B$ and $c$ that do not depend on $N$ or $P$, if
$\sigma_n\sqrt{n/\log n}\ge c$ for all $n$,
\begin{align*}
\sup_{P\in\mathcal{P}}
P\left(
\sup_{f\in\mathcal{F}_n} \frac{\sqrt{n}}{\sqrt{\log n}} \left|
  \frac{f(Z_i)-E_P(f(Z_i))}{\hat\sigma(f)\vee \sigma_n}\right|
  \ge B
\, \text{some $n\ge N$}
\right)
\stackrel{N\to\infty}{\to} 0.
\end{align*}

\end{lemma}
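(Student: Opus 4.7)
The plan is to decompose
\begin{align*}
\frac{(E_n - E_P) f(Z_i)}{\hat\sigma(f) \vee \sigma_n}
= \frac{(E_n - E_P) f(Z_i)}{\sigma_P(f) \vee \sigma_n}
  \cdot \frac{\sigma_P(f) \vee \sigma_n}{\hat\sigma(f) \vee \sigma_n}
\end{align*}
and bound the two factors separately, using Lemma \ref{sd_rate_lemma} for the first factor and Lemma \ref{sd_est_lemma} for the second. Since both of those lemmas give conclusions of the ``some $n \ge N$'' form with probabilities vanishing uniformly in $P \in \mathcal{P}$ as $N \to \infty$, combining them via a union bound preserves that uniform-in-$P$, tail-in-$n$ conclusion.

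First I would apply Lemma \ref{sd_rate_lemma} to the class $\mathcal{F}_n$ (or, more precisely, to the mean-centered class $\{f - E_P f(Z_i) : f \in \mathcal{F}_n\}$), using the covering number hypothesis on $\mathcal{G}_{n,P}^1$, to obtain $B_1$ such that
\begin{align*}
\sup_{P \in \mathcal{P}} P\!\left( \frac{\sqrt{n}}{\sqrt{\log n}}
  \sup_{f \in \mathcal{F}_n}
    \left| \frac{(E_n - E_P) f(Z_i)}{\sigma_P(f) \vee \sigma_n} \right|
  \ge B_1 \text{ some } n \ge N \right) \stackrel{N \to \infty}{\to} 0.
\end{align*}
Next I would control the ratio $(\sigma_P(f) \vee \sigma_n)/(\hat\sigma(f) \vee \sigma_n)$ by splitting on whether $\sigma_P(f) \le \sigma_n$ or $\sigma_P(f) > \sigma_n$. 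In the first case the ratio is at most $1$ deterministically (numerator equals $\sigma_n \le \hat\sigma(f) \vee \sigma_n$). In the second case, $\sigma_P(f) \vee \sigma_n = \sigma_P(f)$ and $\hat\sigma(f) \vee \sigma_n \ge \hat\sigma(f)$, so the ratio is at most $\sigma_P(f)/\hat\sigma(f)$; Lemma \ref{sd_est_lemma} (applied with $\varepsilon = 1/2$, say, for a suitably large $c$) then forces $\hat\sigma(f)/\sigma_P(f) \ge 1/2$ uniformly in $f$ with $\sigma_P(f) \ge \sigma_n$, with probability tending to one uniformly in $P$ over all $n \ge N$. Hence on a joint ``good event'' the second factor is bounded by $2$ (say) uniformly over $f \in \mathcal{F}_n$ and $n \ge N$.

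Multiplying the two bounds yields the desired inequality with $B = 2 B_1$, provided $c$ is chosen at least as large as the constants required by both Lemma \ref{sd_rate_lemma} and Lemma \ref{sd_est_lemma}. The main obstacle is making sure that the two probability bounds, both of the ``some $n \ge N$'' form, are combined cleanly with a union bound while preserving uniformity in $P \in \mathcal{P}$; this is essentially bookkeeping, since each individual lemma already delivers the right mode of convergence. A minor subtlety is that Lemma \ref{sd_est_lemma} requires the covering hypothesis on both $\mathcal{G}_{n,P}^1$ and $\mathcal{G}_{n,P}^2$, and Lemma \ref{sd_rate_lemma} (applied to mean-centered functions) needs the hypothesis on $\mathcal{G}_{n,P}^1$; both are supplied by the hypotheses of the lemma, so no additional covering argument is needed.
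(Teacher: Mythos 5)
Your proposal is correct and takes essentially the same route as the paper's proof: the same factorization into $(E_n-E_P)f/(\sigma_P(f)\vee\sigma_n)$ times the ratio of denominators, the same appeal to Lemma \ref{sd_rate_lemma} for the first factor, the same observation that the ratio is trivially bounded when $\sigma_P(f)<\sigma_n$ and controlled via Lemma \ref{sd_est_lemma} (with $\varepsilon=1/2$) otherwise, and a union bound to combine the two tail events while preserving uniformity in $P$.
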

\begin{proof}
We have
\begin{align*}
&P\left(
\sup_{f\in\mathcal{F}_n} \frac{\sqrt{n}}{\sqrt{\log n}} \left|
  \frac{f(Z_i)-E_P(f(Z_i))}{\hat\sigma(f)\vee \sigma_n}\right|
  \ge B
\, \text{some $n\ge N$}
\right)  \\
&=P\left(
\sup_{f\in\mathcal{F}_n} \frac{\sqrt{n}}{\sqrt{\log n}} \left|
  \frac{f(Z_i)-E_P(f(Z_i))}{\sigma_P(f)\vee \sigma_n}\right|
  \frac{\sigma_P(f)\vee \sigma_n}{\hat\sigma(f)\vee \sigma_n}
  \ge B
\, \text{some $n\ge N$}
\right)  \\
&\le P\left(
\sup_{f\in\mathcal{F}_n} \frac{\sqrt{n}}{\sqrt{\log n}} \left|
  \frac{f(Z_i)-E_P(f(Z_i))}{\sigma_P(f)\vee \sigma_n}\right|
  \ge B/2
\, \text{some $n\ge N$}
\right)  \\
&+P\left(
\inf_{f\in\mathcal{F}_n}
\frac{\hat\sigma(f)\vee \sigma_n}{\sigma_P(f)\vee \sigma_n}
\le 1/2
\, \text{some $n\ge N$}
\right).
\end{align*}
The second to last line goes to zero uniformly in $P\in\mathcal{P}$ by
Lemma \ref{sd_rate_lemma} applied to the classes
$\{f-E_P(f)|f\in\mathcal{F}_n,P\in\mathcal{P}\}$ (here, $B$ must be chosen
large enough so that the conclusion of this lemma holds with $B$ replaced
by $B/2$).  Since $\frac{\hat\sigma(f)\vee \sigma_n}{\sigma_P(f)\vee \sigma_n}\ge 1>1/2$ when $\sigma_P(f)<\sigma_n$,
the last line is bounded by
\begin{align*}
P\left(
\inf_{f\in\mathcal{F}_n, \sigma_P(f)\ge \sigma_n}
\frac{\hat\sigma(f)}{\sigma_P(f)}
\le 1/2
\, \text{some $n\ge N$}
\right),
\end{align*}
which goes to zero uniformly in $P\in\mathcal{P}$ if
$\sigma_n\sqrt{n/\log n}\ge c$ for $c$ large enough by Lemma
\ref{sd_est_lemma}.

\end{proof}

\subsection{Conditions for Exact Rate of Convergence}
\label{exact_rate_sec}

If $\sigma_n$ is fixed, we will have a $\sqrt{n}$ rate of uniform
convergence for the KS statistic.  The $\sqrt{n/\log n}$ rate of
convergence results used in Theorem \ref{coverage_thm} do not rule this
out for the case where $\sigma_n$ goes to zero, but another argument shows
that the rate of convergence will be strictly slower than $\sqrt{n}$ in many situations.

\begin{assumption}\label{contact_set_assump}
For some $\theta\in\Theta_0(P)$, some $j$, and some open set
$\mathcal{X}$, the following hold.
(i) %
$E_P(m_j(W_i,\theta)|X_i)=0$ a.s. on $\mathcal{X}$ and
$X_i$ has a density $f_X(x)$ on $\mathcal{X}$ that is bounded from above
and from below away from zero.
(ii) $var(m(W_i,\theta)|X_i=x)$ is continuous as a function of $x$ and
bounded away from zero and infinity on $\mathcal{X}$.
(iii) $\mathcal{G}$ contains the function $t\mapsto k((t-x)/h)$ for all
$x$ and all $h$ less than some fixed positive constant where $k$ satisfies
Assumption \ref{kernel_assump} and is continuous at zero.
\end{assumption}

The assumption on the set of functions $\mathcal{G}$ covers many commonly used cases, including indicator sets for $d_X$ dimensional rectangles or boxes.

\begin{theorem}
If Assumption \ref{contact_set_assump} holds and $S$ satisfies Assumption
\ref{S_assump}, then, if $\sigma_n\to 0$, $\sqrt{n}T_n(\theta)$ will
diverge to $\infty$.
\end{theorem}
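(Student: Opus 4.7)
The strategy is to build a diverging family of kernel-based test moments whose normalized sample values behave like approximately independent standard normals on the contact set $\mathcal{X}$; the minimum of such a family tends to $-\infty$, forcing $\sqrt{n}\,T_n(\theta)\to\infty$ via Assumption \ref{S_assump}.

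Since $\sigma_n\to 0$, I choose a bandwidth $h_n\to 0$ with $h_n^{d_X/2}/\sigma_n\to\infty$, for example $h_n=\sigma_n^{1/d_X}$. Fix $R$ so that $\int_{\|u\|>R}k(u)\,du$ is small (one may let $R$ grow slowly so this tail vanishes). Using that $\mathcal{X}$ contains a ball, I select grid points $x_{n,1},\ldots,x_{n,K_n}\in\mathcal{X}$ with $K_n\to\infty$, pairwise distances at least $3Rh_n$, and distance at least $Rh_n$ from $\partial\mathcal{X}$. The functions $g_{n,k}(x)=k((x-x_{n,k})/h_n)$ lie in $\mathcal{G}$ by Assumption \ref{contact_set_assump}(iii). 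A change of variables combined with the bounded-away-from-zero assumptions on $f_X$ and $\mathrm{var}(m_j|X_i)$ and the positivity of $k$ near $0$ gives $\sigma_{P,j}(\theta,g_{n,k})$ of exact order $h_n^{d_X/2}$ uniformly in $k$, while $E_P(m_j|X_i)=0$ on $\mathcal{X}$ (together with the tail control on $k$) forces $E_P m_j(W_i,\theta)g_{n,k}(X_i)=o(\sigma_{P,j}/\sqrt{n})$. In particular $\sigma_{P,j}(\theta,g_{n,k})\gg\sigma_n$, the truncation in $\hat\sigma\vee\sigma_n$ is slack for every $k$, and a second-moment argument (e.g.\ Lemma \ref{sd_est_lemma} applied to the finite-dimensional family at hand) gives $\hat\sigma(\theta,g_{n,k})/\sigma_{P,j}(\theta,g_{n,k})\to 1$ uniformly in $k$.

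Setting $Z_{n,k}:=\sqrt{n}(\hat\mu_{n,j}(\theta,g_{n,k})-E_P\hat\mu_{n,j}(\theta,g_{n,k}))/\sigma_{P,j}(\theta,g_{n,k})$, the previous step yields $\sqrt{n}\,\hat\mu_{n,j}(\theta,g_{n,k})/(\hat\sigma(\theta,g_{n,k})\vee\sigma_n)=Z_{n,k}+o_p(1)$ uniformly in $k$. Because the supports of the $g_{n,k}$ are nearly disjoint, the cross-covariances $E_P g_{n,k}g_{n,k'}$ are negligible and, by the Lindeberg CLT for triangular arrays, any fixed finite-dimensional marginal of $(Z_{n,k})$ is asymptotically standard normal with identity covariance. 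Given $M>0$, take $K_n$ so large that among $K_n$ iid standard normals the minimum is at most $-M$ with probability close to one; replacing $k$ by its truncation $k\cdot \mathbf{1}_{\|u\|\leq R}$ (absorbing the remainder into an arbitrarily small tail term) makes the corresponding partial sums genuinely independent because each $X_i$ contributes to at most one grid cell. Consequently $\min_{k\leq K_n}Z_{n,k}\to-\infty$ in probability, and Assumption \ref{S_assump}(ii) converts this into $\sqrt{n}\,T_n(\theta)\geq M/K_{S,2}$ with probability tending to one.

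The principal obstacle is lifting the standard finite-dimensional CLT to control of the minimum over the diverging family $(Z_{n,k})_{k=1}^{K_n}$. The truncation of $k$ to $\|u\|\leq R$, together with the spacing condition on the grid, is the cleanest device: once each $X_i$ is deterministically assigned to at most one cell, the $K_n$ studentized sums become functions of disjoint subsamples and hence independent, reducing the claim to a minimum-of-iid-Gaussians calculation that standard extreme-value facts handle directly.
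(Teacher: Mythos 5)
Your proof uses the same basic mechanism as the paper's: build kernel-indexed moments centered at points of the contact set, choose $h_n\to 0$ with $h_n^{d_X/2}/\sigma_n\to\infty$ so that the truncation becomes slack and the statistic is genuinely studentized, apply a CLT to show asymptotic normality of the studentized moments, and conclude by noting that the minimum of many independent Gaussians is very negative. Where you diverge is in how the ``many'' is handled, and this is the spot where your write-up introduces unnecessary complications and a subtle gap.

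The paper fixes an arbitrary \emph{finite} set of points $x_1,\ldots,x_\ell\in\mathcal{X}$ that does not depend on $n$, applies the ordinary Lindeberg CLT to the fixed $\ell$-dimensional vector $(\sqrt{n}Z_{n,1},\ldots,\sqrt{n}Z_{n,\ell})$, and concludes that $\min_k \sqrt{n}Z_{n,k}$ converges in distribution to the minimum of $\ell$ iid standard normals, which can be made as small as desired by taking $\ell$ large. Because the target statement — $\sqrt{n}T_n(\theta)\to\infty$ in probability — is a claim that must be verified one level $M$ at a time, a fixed configuration of size $\ell=\ell(M)$ is all that is needed; nothing about the construction needs to grow with $n$. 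You instead select a grid $x_{n,1},\ldots,x_{n,K_n}$ with $K_n\to\infty$ and aim directly for $\min_{k\le K_n}Z_{n,k}\to-\infty$. This is a strictly stronger claim, and the finite-dimensional CLT you invoke does not yield it: controlling a minimum over a diverging number of approximately Gaussian quantities requires a quantitative (e.g.\ Berry--Esseen or moderate-deviation) argument, not merely weak convergence of each fixed subvector. You recognize this and propose truncating $k$ to $\|u\|\le R$ so that the $K_n$ partial sums become genuinely independent, but this introduces two further wrinkles that your sketch does not resolve: the truncated kernel is not one of the functions $t\mapsto k((t-x)/h)$ guaranteed to be in $\mathcal{G}$ by Assumption \ref{contact_set_assump}(iii), so the truncation error must be controlled uniformly over the growing grid, and the uniform-in-$k$ convergence $\hat\sigma/\sigma_P\to 1$ you want from Lemma \ref{sd_est_lemma} must now hold over a family of size $K_n\to\infty$ rather than a fixed one. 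None of these obstacles is fatal, but they are all self-imposed; if you simply take $\ell$ fixed (depending only on $M$), as the paper does, the argument closes in one step and the extra machinery evaporates.
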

\begin{proof}
Fix any points $x_1,\ldots,x_\ell\in\mathcal{X}$.  For $k$ from $1$ to
$\ell$, let $g_{n,k}(t)=k((t-x_k)/h_n)$
\begin{align*}
Z_{n,k}=\frac{1}{\hat\sigma_{n,j}(\theta,g_{n,k})\vee \sigma_n}
E_nm_j(W_i,\theta)k((X_i-x_k)/h_n)
\end{align*}
where $h_n$ is a sequence going to zero
such that $h_n^{d_X/2}/\sigma_n$ goes to infinity and $h_n^{d_X/2}\ge
n^{-\alpha}$ for some $\alpha<1$.  By the assumption on $S$,
$\sqrt{n}T_n(\theta)$ will diverge to $\infty$ if
$\inf_{x,h,j} \frac{1}{\hat\sigma_{n,j}(\theta,x,h)\vee \sigma_n}
E_nm_j(W_i,\theta)k((X_i-x)/h)$
diverges to $-\infty$, and, for this, it is sufficient to show that
$\min_k Z_{n,k}$ can be made arbitrarily small asymptotically by making
$\ell$ large enough.
Using standard arguments, it can be shown that
$\hat\sigma_{n,j}(\theta,g_{n,k})
  /\sigma_{P,j}(\theta,g_{n,k})$
converges in probability to one, and, since
$\sigma_{P,j}(\theta,g_{n,k})/h_n^{d/2}$ converges to a constant
under these assumptions, we also have that
\begin{align*}
Z_{n,k}=\frac{1}{\hat\sigma_{n,j}(\theta,g_{n,k})}
E_nm_j(W_i,\theta)k((X_i-x_k)/h_n)
\end{align*}
with probability approaching one.  By the Lindeberg central limit theorem,
defining
\begin{align*}
\tilde Z_{n,k}\equiv \frac{1}{\sigma_{P,j}(\theta,g_{n,k})}
  E_nm_j(W_i,\theta)k((X_i-x_k)/h_n)
\end{align*}
$(\sqrt{n}\tilde Z_{n,1},\ldots, \sqrt{n}\tilde Z_{n,\ell})$ converges to
a vector of independent standard normal variables, so, since each
$Z_{n,k}$ is eventually equal to $\tilde Z_{n,k}$ times something that
converges to one, $(\sqrt{n}Z_{n,1},\ldots, \sqrt{n}Z_{n,\ell})$ also
converges to a vector of independent standard normal variables.  Thus,
$\min_k \sqrt{n}Z_{n,k}$ converges to the minimum of $\ell$ independent
standard normal variables, which can be made arbitrarily small by making
$\ell$ large.

\end{proof}

\subsection{Rates of Convergence for Slope Parameters}
\label{slope_examp_sec}

In this section of the appendix, I present a counterexample that shows
that a condition along the lines of part (iii) of Assumption
\ref{int_reg_assump} is necessary to obtain the rate of convergence in
Theorem \ref{int_reg_thm}.  As discussed below, a similar counterexample
shows that a condition on the parameter space $\Theta$ such as Assumption
\ref{reg_param_assump} is necessary in Theorem \ref{oneside_thm}.
These counterexamples also show that the first display in
Assumption \ref{x0_assump} cannot be replaced with an assumption that only
takes into account the magnitude of the derivative vector.
Consider an example where
$E_P(W_i^H|X_i=x)=x^2$, $E_P(W_i^L|X_i=x)=-x^2$,
$var(W_i^H|X_i)=var(W_i^L|X_i)=1$, and $X_i$ is has a uniform
distribution on $[-1/2,1/2]$.  Suppose that we use the set of functions
$\{I(s<X_i<s+t)|s\in\mathbb{R}, t\ge 0\}$.

In this case, the identified set is a single point $(0,0)$.  Consider the
sequence of local alternatives given by $\theta_n=(0,b_n)$.  We have, for
all $s,t$ with $-1/2\le s\le s+t\le 1/2$,
\begin{align*}
&E_P[(W_i^H-b_nX_i)I(s<X_i<s+t)]
=E_P[(X_i^2-b_nX_i)I(s<X_i<s+t)]  \\
&=\int_s^{s+t} (x^2-b_n x) \, dx
=\int_s^{s+t} [(x-b_n/2)^2-b_n^2/4] \, dx  \\
&\ge \int_{-t/2}^{t/2} [u^2-b_n^2/4] \, du
=2\left[\frac{1}{3}u^3-\frac{b_n^2}{4}u\right]_{u=0}^{t/2}
=2t\left[\frac{1}{24}t^2-\frac{b_n^2}{8}\right]
\end{align*}
and
\begin{align*}
&var_P[(W_i^H-b_nX_i)I(s<X_i<s+t)]
\ge E_P\{var_P[(W_i^H-b_nX_i)I(s<X_i<s+t)|X_i]\}  \\
&= E_P[I(s<X_i<s+t)]
= t.
\end{align*}
Thus, for $s,t$ such that $E_P[(W_i^H-b_nX_i)I(s<X_i<s+t)]$ is negative,
\begin{align*}
\left|\frac{E_P[(W_i^H-b_nX_i)I(s<X_i<s+t)]}
  {\{var_P[(W_i^H-b_nX_i)I(s<X_i<s+t)]\}^{1/2}}\right|
\le 2t^{1/2}\left|\frac{1}{24}t^2-\frac{b_n^2}{8}\right|_{-}
\le \frac{3^{1/4}}{4}b_n^{1/2}b_n^2.
\end{align*}
A symmetric argument applies to moments based on $W_i^L$.
For some constant $K$, this sequence of local alternatives will be in
$\mathcal{C}_n(\hat c_n)$ if $b_n^{5/2}\le K ((\log n)/n)^{1/2}$
iff. $b_n\le K ((\log n)/n)^{1/5}$.  In contrast, convergence to the
identified set for one sided regression will be at a $((\log n)/n)^{2/5}$
rate if the parameter space $\Theta$ is restricted so that the absolute
value of the slope parameter cannot be too large.

Now consider the one sided regression model of Section \ref{oneside_sec}
with $E_P(W_i^H|X_i=x)=x^2$ and the parameter space $\Theta$ given by
$[0,\infty)\times \mathbb{R}$.  That is, the parameter space $\Theta$
incorporates the prior knowledge that the intercept is nonnegative.
Again, the identified set is the point $(0,0)$, and the Hausdorff distance
between the set estimate $\mathcal{C}_n(\hat c_n)$ and the identified set
will be at least $b_n$ if $\mathcal{C}_n(\hat c_n)$ contains the point
$(0,b_n)$.  By the same argument used above, $(0,b_n)$ will be in
$\mathcal{C}_n(\hat c_n)$ for some sequence $b_n$ going to zero at a
$((\log n)/n)^{1/5}$ rate, so that the rate of convergence of
$\mathcal{C}_n(\hat c_n)$ to the identified set will be no faster than
$((\log n)/n)^{1/5}$, which is slower than the $((\log n)/n)^{2/5}$ rate
given by Theorem \ref{oneside_thm} when the intercept is not restricted.
Note that, in the case where the intercept parameter is not restricted a
priori, the sequence of local alternatives $(0,b_n)$ will still be in the
estimate $\mathcal{C}_n(\hat c_n)$, but the distance of these points to
the identified set will no longer be equal to $b_n$, since the identified
set will contain a point $(\theta'(b_n),b_n)$ for some $\theta'(b_n)$ that
is smaller in magnitued than $b_n$.

\subsection{Covering Number Conditions}\label{covering_sec}

In this section, I state some simple sufficient conditions under which
Assumption \ref{covering_assump} holds.  I first prove that Assumption
\ref{covering_assump} holds under individual bounds on the complexity of the
classes $\mathcal{G}$ and $\{w\mapsto m(w,\theta)|\theta\in\Theta\}$.  The
proof of this result uses Lemma \ref{prod_class_lemma}, stated and proved
at the end of the section.  I then provide examples of classes
$\mathcal{G}$ that satisfy these bounds, and show that the class
$\{w\mapsto m(w,\theta)|\theta\in\Theta\}$ satisfies these bounds in each
of the applications covered in Section \ref{applications_sec}.  Throughout
this section, I define $\mathcal{F}_m\equiv\{w\mapsto
m(w,\theta)|\theta\in\Theta\}$ to be the class of moment functions indexed
by $\theta$.

The following theorem translates bounds on the covering numbers of the
classes $\mathcal{G}$ and $\{w\mapsto m(w,\theta)|\theta\in\Theta\}$ to
the conditions of Assumption \ref{covering_assump}.

\begin{theorem}\label{covering_assump_thm}
Suppose that the classes
$\mathcal{F}_m\equiv\{w\mapsto m(w,\theta)|\theta\in\Theta\}$
and $\mathcal{G}$ are uniformly bounded
and satisfy
$\sup_Q N(\varepsilon,\mathcal{F}_m,L_1(Q))\le A\varepsilon^{-W}$ and
$\sup_Q N(\varepsilon,\mathcal{G},L_1(Q))\le A\varepsilon^{-W}$ for some
$A,W>0$ where the supremum is over all probability measures $Q$.  Then
Assumption \ref{covering_assump} holds.
\end{theorem}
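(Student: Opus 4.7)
My plan is to reduce Assumption \ref{covering_assump} to the covering number hypotheses on $\mathcal{F}_m$ and $\mathcal{G}$ individually by a sequence of standard ``algebra of covering numbers'' manipulations, using the auxiliary Lemma \ref{prod_class_lemma} (the product-class lemma referenced at the end of the section) to handle the nonlinear combinations.

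First I would observe that the classes of constants $\mathcal{S}=\{w\mapsto s\mid s\in [-(\overline Y\vee 1),\overline Y\vee 1]\}$ and $\mathcal{T}$ defined analogously are uniformly bounded and have $L_1$ covering numbers of order $\varepsilon^{-1}$ uniformly in $Q$, since any two constants with $|s-s'|\le\varepsilon$ are within $\varepsilon$ in $L_1(Q)$. Together with the hypotheses on $\mathcal{F}_m$ and $\mathcal{G}$, and the fact that $j$-th coordinate projections can only decrease covering numbers, the subclasses $\mathcal{F}_{m,j}=\{w\mapsto m_j(w,\theta)\}$ and $\mathcal{G}_j=\{x\mapsto g_j(x)\}$ are uniformly bounded with polynomial $L_1$ covering bounds uniformly in $Q$.

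Next I would apply Lemma \ref{prod_class_lemma} iteratively. For two uniformly bounded classes $\mathcal{A},\mathcal{B}$ with $\sup_Q N(\varepsilon,\mathcal{A},L_1(Q))\le A_1\varepsilon^{-W_1}$ and similarly for $\mathcal{B}$, the product class $\mathcal{A}\cdot\mathcal{B}=\{ab:a\in\mathcal{A},b\in\mathcal{B}\}$ satisfies $\sup_Q N(\varepsilon,\mathcal{A}\cdot\mathcal{B},L_1(Q))\le A'\varepsilon^{-W_1-W_2}$ because $|ab-a'b'|\le \|b\|_\infty|a-a'|+\|a'\|_\infty|b-b'|$. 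Applying this to $\mathcal{S}\cdot\mathcal{F}_{m,j}\cdot\mathcal{G}_j$ (three applications, or one product-of-three version of the same argument) bounds the covering numbers of the product subclass. A completely analogous sum bound, using $|a+t-a'-t'|\le|a-a'|+|t-t'|$, then handles the addition of the constant $t\in\mathcal{T}$, yielding the polynomial $L_1$ covering bound required for $\mathcal{F}_{j,1}$.

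For $\mathcal{F}_{j,2}$, I would use that squaring a uniformly bounded class is Lipschitz: if $|f|,|f'|\le M$ then $|f^2-(f')^2|\le 2M|f-f'|$, so covering $\mathcal{F}_{j,1}$ at scale $\varepsilon/(2M)$ in $L_1(Q)$ yields a cover of $\mathcal{F}_{j,2}=\{f^2:f\in\mathcal{F}_{j,1}\}$ at scale $\varepsilon$, which preserves the polynomial form with a larger constant and the same exponent. The main technical point — and the only real obstacle — is making sure Lemma \ref{prod_class_lemma} is applied in a form that gives a bound uniform in the probability measure $Q$, which is why I would phrase every step so that the covering bounds used on the component classes are themselves uniform in $Q$; aside from that, everything is bookkeeping of Lipschitz constants and exponents.
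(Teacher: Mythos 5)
Your proposal is correct and follows essentially the same route as the paper: the paper's proof is a one-line appeal to Lemma \ref{prod_class_lemma}, noting that $\mathcal{F}_{j,1}$ and $\mathcal{F}_{j,2}$ are sums and products of the uniformly bounded classes $\mathcal{F}_m$, $\mathcal{G}$, and bounded constant classes, all of which have polynomial uniform $L_1$ covering numbers. Your only cosmetic deviation is handling $\mathcal{F}_{j,2}$ via the Lipschitz property of squaring on a bounded range rather than treating the square as a product of $\mathcal{F}_{j,1}$ with itself under the lemma, which is the same computation in different dress.
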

\begin{proof}
The result follows immediately from Lemma \ref{prod_class_lemma}, since
the classes of functions in Assumption \ref{covering_assump} are sums and
products of these bounded classes and bounded classes of constant
functions, which also have polynomial uniform covering numbers.
\end{proof}

With this result in hand, we can verify Assumption \ref{covering_assump}
for a particular model and choice of $\mathcal{G}$
using results stated in \citet{pollard_convergence_1984},
\citet{vaart_weak_1996} and other sources.  For convenience, I do this
here for some choices of $\mathcal{G}$.

\begin{theorem}
Suppose that $\mathcal{F}_m\equiv\{w\mapsto m(w,\theta)|\theta\in\Theta\}$
and $\mathcal{G}$ are uniformly bounded
$\sup_Q N(\varepsilon,\mathcal{F}_m,L_1(Q))\le
A\varepsilon^{-W}$.  Then Assumption \ref{covering_assump} will hold for
the following classes of functions $\mathcal{G}$:
\begin{itemize}
\item[(i)] The class of indicator functions
  $\mathcal{G}=\{x\mapsto I(x\in V)|V\in\mathcal{V}\}$ for any VC class of
  sets $\mathcal{V}$.
\item[(ii)] The class of dilations of a kernel function $k$ given by
  $\mathcal{G}=\{x\mapsto k((x-t)/h)|
  x\in\mathbb{R}^{d_X},h\in\mathbb{R}_+\}$ for any kernel function $k$
  given by $k(x)=r(\|x\|)$ for a decreasing, bounded function $r$ on
  $\mathbb{R}_+$.
\end{itemize}
\end{theorem}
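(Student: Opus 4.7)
The plan is to apply the preceding Theorem~\ref{covering_assump_thm} in each case; since the uniform polynomial covering number bound for $\mathcal{F}_m$ is already assumed, the remaining task is to establish a bound of the form $\sup_Q N(\varepsilon,\mathcal{G},L_1(Q)) \le A'\varepsilon^{-W'}$ together with uniform boundedness of $\mathcal{G}$. Uniform boundedness is immediate in both cases: indicator functions are bounded by $1$ in case (i), and $r$ is assumed bounded in case (ii).

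Case (i) is classical. If $\mathcal{V}$ is a VC class of sets, then $\mathcal{G} = \{I_V : V \in \mathcal{V}\}$ is a VC subgraph class whose index is controlled by the VC dimension of $\mathcal{V}$, and the standard uniform covering number bound for VC subgraph classes (see, e.g., Pollard (1984, Ch.~II) or van der Vaart and Wellner (1996, Thm.~2.6.4)) yields the required polynomial bound. Theorem~\ref{covering_assump_thm} then gives Assumption~\ref{covering_assump}.

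For case (ii), the strategy is to show that $\mathcal{G}$ itself is a VC subgraph class; the same covering number bound will then apply. Fix $(t,h)$ with $h>0$, set $g_{t,h}(x)=r(\|x-t\|/h)$, and define $\tilde r(s) \equiv \sup\{u\ge 0 : r(u) > s\} \in [0,\infty]$. Because $r$ is decreasing, the slice $\{x : g_{t,h}(x) > s\}$ coincides with the open ball $\{x : \|x-t\| < h\,\tilde r(s)\}$, interpreted as $\varnothing$ when $\tilde r(s)=0$ and as $\mathbb{R}^{d_X}$ when $\tilde r(s)=\infty$. Consequently, for $s$ with $\tilde r(s) \in (0,\infty)$, the subgraph of $g_{t,h}$ is the negativity set of
\begin{align*}
\|x\|^2 - 2 t'x + \|t\|^2 - h^2 \tilde r(s)^2,
\end{align*}
which lies in the fixed finite dimensional vector space of functions of $(x,s)$ spanned by $1$, $x_1,\ldots,x_{d_X}$, $\|x\|^2$, and $\tilde r(s)^2$, with coefficients determined by $(t,h)$. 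Negativity sets of functions from a finite dimensional vector space form a VC class (Pollard (1984, Ch.~II)), and the trivial slices where $\tilde r(s) \in \{0,\infty\}$ (which lie entirely in or entirely out of every subgraph) do not affect the VC property. Hence the subgraphs of $\mathcal{G}$ form a VC class, $\mathcal{G}$ is a VC subgraph class, and the covering number bound follows as in case (i), after which Theorem~\ref{covering_assump_thm} finishes the proof.

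The main obstacle is the VC subgraph argument in case (ii): one must exploit the monotonicity of $r$ to reduce the superlevel sets of $g_{t,h}$ to Euclidean balls and then embed the subgraphs into negativity sets over a finite dimensional vector space of functions. Reconciling this with the possibly degenerate boundary values $\tilde r(s)\in\{0,\infty\}$ is a bookkeeping matter rather than a substantive difficulty, since those slices are automatically contained in or disjoint from every subgraph.
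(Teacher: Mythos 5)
Your argument is correct and, for part (i), follows essentially the same path as the paper, which cites Lemma~25 of Pollard (1984) (a VC class of sets has polynomial discrimination) and then applies Theorem~\ref{covering_assump_thm}; passing through the observation that indicators of VC sets form a VC subgraph class is a standard and equivalent phrasing of the same fact.

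For part~(ii) you take a genuinely different route: the paper simply cites Problem~18 of Chapter~2 in Pollard (1984), whereas you give a self-contained VC subgraph argument. Your reduction — using the monotonicity of $r$ to convert superlevel sets of $g_{t,h}(x)=r(\|x-t\|/h)$ into balls via the generalized inverse $\tilde r(s)$, and then embedding the subgraphs (on the slices where $\tilde r(s)\in(0,\infty)$) into negativity sets of the finite-dimensional function space spanned by $1$, $x_1,\ldots,x_{d_X}$, $\|x\|^2$, and $\tilde r(s)^2$ — is the natural way to do this directly and is likely close to what the cited problem's solution does, but you spell out the mechanics rather than invoking the reference. The tradeoff is that your version exposes the two bookkeeping issues you flag: (a) because $r$ is decreasing but not necessarily strictly so, the superlevel set is an open \emph{or} closed ball (a boundary ambiguity that does not change VC dimension but should be acknowledged when writing the negativity set with a strict inequality), and (b) slices with $\tilde r(s)\in\{0,\infty\}$ must be handled separately, which you correctly observe only contribute trivial (all-or-nothing) labelings for a finite sample and so cannot increase the shattering number. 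Neither is a gap in substance; your proof is correct, just more explicit where the paper defers to Pollard.
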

\begin{proof}
The covering number bound for $\mathcal{G}$ in Theorem
\ref{covering_assump_thm} holds by Lemma 25 in
\citet{pollard_convergence_1984} (since a VC class of sets has polynomial
discrimination) for part (i), and by problem 18 in Chapter 2 of
\citet{pollard_convergence_1984} for part (ii).
\end{proof}

See \citet{pollard_convergence_1984} for the definition of a VC class and
examples of VC classes of sets.
The class of all $d_X$ dimensional rectangles falls into this category.
The condition that the class of functions $\mathcal{F}_m=\{w\mapsto
m(w,\theta)|\theta\in\Theta\}$ satisfy the covering number bound $\sup_Q
N(\varepsilon,\mathcal{F}_m,L_1(Q))\le A\varepsilon^{-W}$ can be verified
on a case by case basis using general
results such as those in \citet{pollard_convergence_1984} and
\citet{vaart_weak_1996}.  I do this for the examples in this paper in the
next theorem.

\begin{theorem}
The class of moment functions
$\mathcal{F}_m=\{w\mapsto m(w,\theta)|\theta\in\Theta\}$ satisfies the
covering number bound $\sup_Q N(\varepsilon,\mathcal{F}_m,L_1(Q))\le
A\varepsilon^{-W}$ in all of the models of Section \ref{applications_sec}
as long as the data are bounded and $\Theta$ is compact in the conditional
mean models of Sections \ref{oneside_sec}, \ref{int_reg_sec} and
\ref{selection_sec}.
\end{theorem}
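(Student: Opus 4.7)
The plan is to verify the bound separately for the two qualitatively different categories of models: the conditional mean/selection models in Sections \ref{oneside_sec}, \ref{int_reg_sec} and \ref{selection_sec}, where each coordinate of $m(w,\theta)$ is affine in $\theta$, and the quantile models in Sections \ref{quantile_oneside_sec} and \ref{quantile_int_reg_sec}, where each coordinate is an indicator of a half-space in $(w,x)$-space determined by $\theta$. Since Assumption \ref{covering_assump} only requires the coordinate classes $\{w\mapsto m_j(w,\theta)\mid \theta\in\Theta\}$ to have polynomial uniform $L_1$ covering numbers, I would handle each coordinate $j$ separately and combine at the end.

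For the mean models, let $m_j(w,\theta)=a_j(w)+b_j(w)'\theta$ for some bounded functions $a_j,b_j$ determined by the relevant components of $w$ (e.g.\ $w=(x,w^H)$ in Section \ref{oneside_sec}). Because the data are bounded and $\Theta\subseteq\mathbb{R}^d$ is compact, there is a constant $L$ with $|m_j(w,\theta)-m_j(w,\theta')|\le L\|\theta-\theta'\|$ uniformly in $w$. Hence an $(\varepsilon/L)$-cover of $\Theta$ in Euclidean norm yields an $L_\infty$-cover (and therefore an $L_1(Q)$-cover for \emph{every} probability measure $Q$) of $\{w\mapsto m_j(w,\theta)\mid\theta\in\Theta\}$ of the same cardinality. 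Compactness of $\Theta$ gives such a cover of size $O(\varepsilon^{-d})$, producing the required bound with $W=d$ and $A$ depending on $L$ and the diameter of $\Theta$.

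For the quantile models, each coordinate has the form $c-I(w^H\le \theta_1+x'\theta_{-1})$ or $I(w^L\le\theta_1+x'\theta_{-1})-c$ with $c\in\{\tau,-\tau\}$ constant, so it suffices to bound the uniform covering numbers of the indicator class $\{(w,x)\mapsto I(w\le\theta_1+x'\theta_{-1})\mid\theta\in\Theta\}$. These are indicators of a subcollection of closed half-spaces in $\mathbb{R}^{d_X+1}$, and the collection of all closed half-spaces in $\mathbb{R}^{d_X+1}$ is a VC class of sets (of VC dimension $d_X+2$). By Lemma 25 of \citet{pollard_convergence_1984}, any VC class of sets has polynomial uniform $L_1$ covering numbers, yielding $\sup_Q N(\varepsilon,\cdot,L_1(Q))\le A\varepsilon^{-W}$ for suitable constants depending only on $d_X$.

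The main obstacle, which is really only a bookkeeping issue, is passing from coordinatewise bounds back to the joint class $\mathcal{F}_m$ and to the derived classes $\mathcal{F}_{j,1},\mathcal{F}_{j,2}$ of Assumption \ref{covering_assump}. This is handled by Lemma \ref{prod_class_lemma}: sums, products and compositions (with bounded Lipschitz maps such as squaring on a bounded interval) of finitely many uniformly bounded classes with polynomial uniform covering numbers again have polynomial uniform covering numbers, with the exponent $W$ at most the sum of the individual exponents. Applying this to the component classes verified above (together with the constant classes parameterized by $s,t\in[-(\overline Y\vee 1),\overline Y\vee 1]$) delivers the bound for $\mathcal{F}_m$ and for each $\mathcal{F}_{j,i}$ in every model of Section \ref{applications_sec}.
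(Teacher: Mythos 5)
Your proof is correct, but it takes a different and more explicit route than the paper. The paper's argument is a one-liner: for each model in Section \ref{applications_sec}, the class $\{w\mapsto m_j(w,\theta)\mid\theta\in\Theta\}$ is VC-subgraph (linear functions indexed by a bounded parameter in the conditional-mean and selection models; indicators of half-spaces in the quantile models), and then Lemma 25 of \citet{pollard_convergence_1984} gives the polynomial $L_1$ uniform covering number bound directly. Your proposal instead splits the argument in two: for the affine-in-$\theta$ cases you use the uniform Lipschitz constant (valid because the data are bounded and $\Theta$ is compact) to transfer an $\varepsilon/L$-net of $\Theta$ into an $L_\infty$-net of the function class, which dominates any $L_1(Q)$-net and gives $N\lesssim\varepsilon^{-d}$ with explicit constants; for the quantile cases you observe the functions are (up to the fixed constant $\tau$) indicators of half-spaces in $(w,x)$-space, a VC class, and then invoke Pollard's Lemma 25. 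The paper's unified VC-subgraph argument is shorter and also covers the affine case (the subgraph of an affine function of $(x,w^H)$ with bounded $\theta$ is a half-space in one extra dimension, hence VC), while your approach is more elementary in the affine case and makes the dependence of $A$ and $W$ on the model explicit. Both are valid. One small point of precision: the theorem asks only for a bound on $\mathcal{F}_m$ (interpreted coordinatewise); the transfer to the derived classes $\mathcal{F}_{j,1}$, $\mathcal{F}_{j,2}$ via Lemma \ref{prod_class_lemma} is the content of the earlier Theorem \ref{covering_assump_thm} rather than of the statement you are asked to prove, so that last paragraph overshoots slightly but introduces no error.
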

\begin{proof}
The class $\{w\mapsto m(w,\theta)|\theta\in\Theta\}$ has VC subgraph for
all of the models of Section \ref{applications_sec}, so the result follows
from Lemma 25 in \citet{pollard_convergence_1984}.
\end{proof}

The proof of Theorem \ref{covering_assump_thm} uses the following lemma,
which modifies an argument from \citet{vaart_weak_1996}.

\begin{lemma}\label{prod_class_lemma}
Let $\mathcal{F}$, $\mathcal{G}$ and $\mathcal{H}$ be classes of functions
bounded by a fixed constant $B$, and let
$\mathcal{F}\cdot\mathcal{G}+\mathcal{H}
  =\{f\cdot g+h|f\in\mathcal{F}, g\in\mathcal{G}, h\in\mathcal{H}\}$.
Suppose that, for some $A,W>0$,
$\sup_Q N(\varepsilon,\mathcal{F},L_1(Q))\le A\varepsilon^{-W}$, where the
supremum is taken over all probability measures, and that
the same statement holds with $\mathcal{F}$ replaced by $\mathcal{G}$ and
$\mathcal{H}$.  Then
$\sup_Q N(\varepsilon,\mathcal{F}\cdot\mathcal{G}+\mathcal{H},L_1(Q))
  \le A^3(2B+1)^{3W}\varepsilon^{-3W}$, where the
supremum is again taken over all probability measures.
\end{lemma}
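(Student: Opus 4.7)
The plan is to combine covers of the three ingredient classes using a triangle-inequality argument that exploits the uniform pointwise bound $B$.

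First, fix an arbitrary probability measure $Q$ and a scale $\delta>0$ to be chosen at the end. By the hypothesis, take minimal $\delta$-covers $\mathcal{F}_0\subseteq\mathcal{F}$, $\mathcal{G}_0\subseteq\mathcal{G}$, $\mathcal{H}_0\subseteq\mathcal{H}$ in $L_1(Q)$, each of cardinality at most $A\delta^{-W}$. Given any $f\in\mathcal{F}$, $g\in\mathcal{G}$, $h\in\mathcal{H}$, choose $f'\in\mathcal{F}_0$, $g'\in\mathcal{G}_0$, $h'\in\mathcal{H}_0$ with $\|f-f'\|_{L_1(Q)}\le\delta$, $\|g-g'\|_{L_1(Q)}\le\delta$, $\|h-h'\|_{L_1(Q)}\le\delta$.

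The key step is the estimate
\begin{align*}
\|(fg+h)-(f'g'+h')\|_{L_1(Q)}
&\le \|f(g-g')\|_{L_1(Q)}+\|(f-f')g'\|_{L_1(Q)}+\|h-h'\|_{L_1(Q)} \\
&\le B\|g-g'\|_{L_1(Q)}+B\|f-f'\|_{L_1(Q)}+\|h-h'\|_{L_1(Q)} \\
&\le (2B+1)\delta,
\end{align*}
where the second inequality uses the uniform pointwise bound $|f|\le B$ and $|g'|\le B$. Setting $\delta=\varepsilon/(2B+1)$ therefore makes $\mathcal{F}_0\cdot\mathcal{G}_0+\mathcal{H}_0$ an $\varepsilon$-cover of $\mathcal{F}\cdot\mathcal{G}+\mathcal{H}$ in $L_1(Q)$, of cardinality at most
\begin{align*}
|\mathcal{F}_0|\cdot|\mathcal{G}_0|\cdot|\mathcal{H}_0|
\le \left(A\delta^{-W}\right)^3
= A^3(2B+1)^{3W}\varepsilon^{-3W}.
\end{align*}
Since $Q$ was arbitrary, this gives the claimed bound on $\sup_Q N(\varepsilon,\mathcal{F}\cdot\mathcal{G}+\mathcal{H},L_1(Q))$.

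There is essentially no obstacle here; the only subtlety is remembering to use the pointwise bound $B$ (rather than an $L_1$ bound) to control the cross term $\|f(g-g')\|_{L_1(Q)}$, which is exactly why the hypothesis assumes uniform boundedness of all three classes by the same constant $B$. The constant $(2B+1)$ in the conclusion arises from choosing the three scales equal, which is slightly wasteful for $\mathcal{H}$ but simplifies bookkeeping and matches the stated bound.
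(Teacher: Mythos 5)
Your proof is correct and follows essentially the same argument as the paper: cover each of $\mathcal{F}$, $\mathcal{G}$, $\mathcal{H}$ at scale $\varepsilon/(2B+1)$, use the add-and-subtract decomposition of $fg-f'g'$ together with the uniform bound $B$ to get a $(2B+1)$-Lipschitz transfer of the covers, and multiply the cardinalities. The only cosmetic difference is which factor carries the bound $B$ in each cross term ($|f|$ and $|g'|$ in your version versus $|g|$ and $|f'|$ in the paper's), which changes nothing.
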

\begin{proof}
The result follows from an argument similar to the proof of Theorem
2.10.20 in \citet{vaart_weak_1996}.
Given $\varepsilon>0$ and a probability measure $Q$, let
$k_{\mathcal{F},Q}= N(\varepsilon,\mathcal{F},L_1(Q))
 \le \sup_{Q'} N(\varepsilon,\mathcal{F},L_1(Q'))$ and let
$f_{1,Q},\ldots, f_{k_{\mathcal{F},Q},Q}$ be such that, for all
$f\in\mathcal{F}$, there exists a $f_{i,Q}$ such that
$E_Q|f_{i,Q}(Z_i)-f(Z_i)|\le \varepsilon$ (here, the notation $E_Q f(Z_i)$
refers to the expectation $\int f(z)\, dQ(z)$ of $f(Z_i)$ for $Z_i$
a random variable with distribution $Q$).
Define $k_{\mathcal{G},Q}$,
$k_{\mathcal{H},Q}$, $g_{1,Q},\ldots, g_{k_{\mathcal{G},Q},Q}$ and
$h_{1,Q},\ldots, h_{k_{\mathcal{H},Q},Q}$ similarly.  For any
$fg+h\in \mathcal{F}\cdot\mathcal{G}+\mathcal{H}$, there is some
$j_{\mathcal{F}}$, $j_{\mathcal{G}}$ and $j_{\mathcal{H}}$
such that $E_Q|f_{j_{\mathcal{F}},Q}(Z_i)-f(Z_i)|\le \varepsilon$,
$E_Q|g_{j_{\mathcal{F}},Q}(Z_i)-g(Z_i)|\le \varepsilon$ and
$E_Q|h_{j_{\mathcal{H}},Q}(Z_i)-h(Z_i)|\le \varepsilon$.  We have, for all
$z$,
\begin{align*}
&|f(z)g(z)+h(z)-(f_{j_{\mathcal{F}},Q}(z)g_{j_{\mathcal{G}},Q}(z)+h_{j_{\mathcal{H}},Q}(z))|
  \\
&=|(f(z)-f_{j_{\mathcal{F}},Q}(z))g(z)
+(g(z)-g_{j_{\mathcal{G}},Q}(z))f_{j_{\mathcal{F}},Q}(z)
+h(z)-h_{j_{\mathcal{H}},Q}(z))|  \\
&\le |f(z)-f_{j_{\mathcal{F}},Q}(z)|\cdot |g(z)|
+|g(z)-g_{j_{\mathcal{G}},Q}(z)|\cdot |f_{j_{\mathcal{F}},Q}(z)|
+|h(z)-h_{j_{\mathcal{H}},Q}(z))|  \\
&\le |f(z)-f_{j_{\mathcal{F}},Q}(z)|\cdot B
+|g(z)-g_{j_{\mathcal{G}},Q}(z)|\cdot B
+|h(z)-h_{j_{\mathcal{H}},Q}(z))|
\end{align*}
so that
\begin{align*}
&E_Q|f(Z_i)g(Z_i)+h(Z_i)
  -(f_{j_{\mathcal{F}},Q}(Z_i)g_{j_{\mathcal{G}},Q}(Z_i)+h_{j_{\mathcal{H}},Q}(Z_i))|  \\
&\le (E_Q|f(Z_i)-f_{j_{\mathcal{F}},Q}(Z_i)|
+E_Q|g(Z_i)-g_{j_{\mathcal{G}},Q}(Z_i)|)B
+E_Q|h(Z_i)-h_{j_{\mathcal{H}},Q}(Z_i))|
\le (2B+1)\varepsilon.
\end{align*}
Since $Q$ was arbitrary, it follows that
$\sup_Q
  N((2B+1)\varepsilon,\mathcal{F}\cdot\mathcal{G}+\mathcal{H},L_1(Q))
  \le (\sup_Q N(\varepsilon,\mathcal{F},L_1(Q)))
      \cdot(\sup_Q N(\varepsilon,\mathcal{G},L_1(Q)))
      \cdot (\sup_Q N(\varepsilon,\mathcal{G},L_1(Q)))
  \le A^3\varepsilon^{-3W}$.  Replacing $\varepsilon$ with
$\varepsilon/(2B+1)$ gives the result.
\end{proof}

\subsection{Proofs}\label{proofs_sec}

This section of the appendix contains proofs of the results stated in the
body of the paper.

\begin{proof}[proof of Theorem \ref{coverage_thm}]
If $\Theta_0(P)\nsubseteq \mathcal{C}_n(\hat c_n)$, then, for some
$\theta_0\in\Theta_0(P)$, $\sqrt{n/\log n}T_n(\theta_0)\ge \hat c_n$ so
that for some $g\in\mathcal{G}$,
\begin{align*}
S\left(\frac{\hat\mu_{n,1}(\theta,g)}
    {\hat\sigma_{n,1}(\theta,g)\vee \sigma_n}
  ,\ldots,\frac{\hat\mu_{n,d_Y}(\theta,g)}
    {\hat\sigma_{n,d_Y}(\theta,g)\vee \sigma_n}\right)
\ge \frac{\hat c_n \sqrt{\log n}}{\sqrt{n}}
\end{align*}
so that, for some $j$,
$\frac{\hat\mu_{n,j}(\theta,g)}
    {\hat\sigma_{n,j}(\theta,g)\vee \sigma_n}
\le -\frac{\hat c_n \sqrt{\log n}}{\sqrt{n}} K_{S,1}$.
Since $\theta_0\in\Theta_0(P)$, $E_Pm(W_i,\theta_0)g(X_i)\ge 0$, so this
implies that
\begin{align*}
\frac{\sqrt{n}}{\sqrt{\log n}}
\frac{(E_n-E_P)m(W_i,\theta_0)g(X_i)}
  {\hat\sigma_{n,j}(\theta,g)\vee \sigma_n}
\le -\hat c_n K_{S,1}.
\end{align*}
Thus, $\Theta_0(P)\nsubseteq \mathcal{C}_n(\hat c_n)$ implies that the
above display holds for some $\theta_0$, $g$, and $j$.
If $K$ is large enough so that the conclusion of Lemma
\ref{sd_hat_rate_lemma} holds for $B=K\cdot K_{S,1}.$ and $c$ from that
lemma equal to $K$, the probability that there exist some
$\theta_0\in\Theta_0(P)$ and $g\in \mathcal{G}$
such that this event holds and $\hat c_n$ and $\hat a_n$ is greater
than $K$ will be bounded by a sequence that goes to
zero uniformly in $P\in\mathcal{P}$.

\end{proof}

\begin{proof}[proof of Theorem \ref{consistency_thm}]
If $d_H(\Theta_0(P),\mathcal{C}_n(\hat c_n))>\varepsilon$ and
$\Theta_0(P)\subseteq \mathcal{C}_n(\hat c_n)$, then there exists some
$\theta\in\mathcal{C}_n(\hat c_n)$ such that
$d_H(\theta,\Theta_0(P))>\varepsilon$.
Letting $\delta$ be such that, for all $P\in\mathcal{P}$,
$E_Pm_j(W_i,\theta)g_j(X_i)<-\delta$ for some $j$ and $g\in\mathcal{G}$,
this implies that, once $\hat\sigma_{n,j}(\theta',g)$ is bounded uniformly
in $(\theta',g)$ by some $\bar \sigma$ (this happens with probability
approaching one uniformly in $P\in\mathcal{P}$ by Lemma
\ref{sd_est_lemma}),
\begin{align*}
-T_n(\theta)\le \frac{1}{K_{S,2}\bar \sigma}
 \left(E_nm_j(W_i,\theta)g_j(X_i)\vee 0\right)
\le -\frac{1}{K_{S,2}\bar \sigma}\left(\delta
  -\sup_{\theta',g,k}\left|(E_n-E_P)m_k(W_i,\theta')g_k(X_i)\right|\right).
\end{align*}
The probability that
$\sup_{\theta',g,k}\left|(E_n-E_P)m_k(W_i,\theta')g_k(X_i)\right|
\le \delta/2$ goes to one uniformly in $P\in\mathcal{P}$ by Lemma
\ref{sd_rate_lemma}, and once this holds, the above display will imply
$T_n(\theta)\ge \delta/(2K_{S,2}\bar \sigma)$.  This cannot hold for
$\theta\in\mathcal{C}_n(\hat c_n)$ for $\hat c_n\sqrt{(\log n)/n}\le
\delta/(2K_{S,2}\bar \sigma)$, and the probability of this holding goes to
zero uniformly in $P\in\mathcal{P}$.

\end{proof}

\begin{proof}[proof of Theorem \ref{rate_thm}]
If $d_H(\Theta_0(P),\mathcal{C}_n(\hat c_n))
  >B\left(\frac{\hat c_n^2\log n}{n}\right)^{\gamma/2}$, %
$\Theta_0(P)\subseteq \mathcal{C}_n(\hat c_n)$ and
$d_H(\mathcal{C}_n(\hat c_n,\Theta_0(P))\le \delta$
(the latter two events hold with
probability approaching one uniformly in $P\in\mathcal{P}$ by Theorems
\ref{coverage_thm} and \ref{consistency_thm}), then there exists some
$\theta\in\mathcal{C}_n(\hat c_n)$ such that
$d_H(\theta,\Theta_0(P))
  >B\left(\frac{\hat c_n^2\log n}{n}\right)^{\gamma/2}$.
For this $\theta$ (and $P$), there will be, by Assumption
\ref{rate_assump}, a $g^*\in\mathcal{G}$ and $j^*$ such that
\begin{align*}
\frac{\mu_{P,j^*}(\theta,g^*)}
  {\sigma_{P,j^*}(\theta,g^*)
  \vee \left[B^{1/\gamma}
    \left(\frac{\hat c_n^2\log n}{n}\right)^{\psi/2}\right]}
\le -(C/2)B^{1/\gamma}\left(\frac{\hat c_n^2\log n}{n}\right)^{1/2}
\end{align*}
(replacing $C$ with $C/2$ takes care of the possibility that the infimum
in the assumption is not achieved)
and, by part (ii), for some constant $\eta>0$ that does not depend on $P$,
this will eventually imply
\begin{align*}
\frac{\mu_{P,j^*}(\theta,g^*)}
  {\sigma_{P,j^*}(\theta,g^*)
  \vee (\eta \sigma_n)}
\le -(C/2)B^{1/\gamma}\left(\frac{\hat c_n^2\log n}{n}\right)^{1/2}
\end{align*}
so that, letting $C_1=(C/2) (\eta \wedge 1)$, we will have
$\frac{\mu_{P,j^*}(\theta,g^*)}
  {\sigma_{P,j^*}(\theta,g^*)
  \vee \sigma_n}
\le -C_1B^{1/\gamma}\left(\frac{\hat c_n^2\log n}{n}\right)^{1/2}$.
Since $\theta\in\mathcal{C}_n(\hat c_n)$, we will also have
$T_n(\theta)\le \hat c_n \left(\frac{\log n}{n}\right)^{1/2}$, so that,
for all $g\in\mathcal{G}$ and all $j$,
$\frac{\hat \mu_{n,j}(\theta,g)}
  {\hat \sigma_{n,j}(\theta,g)\vee \sigma_n}
  \ge -K_{S,2}\hat c_n \left(\frac{\log n}{n}\right)^{1/2}$.  By Lemma
\ref{sd_est_lemma}, this will also imply
$\frac{\hat \mu_{n,j}(\theta,g)}
  {\sigma_{P,j}(\theta,g) \vee \sigma_n}
  \ge -\frac{K_{S,2}\hat c_n}{2} \left(\frac{\log n}{n}\right)^{1/2}$
with probability approaching
one uniformly in $P\in\mathcal{P}$.  When these events all hold, we will
have
\begin{align*}
\frac{\hat \mu_{n,j^*}(\theta,g^*)}{\sigma_{P,j^*}(\theta,g^*)\vee \sigma_n}
-\frac{\mu_{P,j^*}(\theta,g^*)}{\sigma_{P,j^*}(\theta,g^*)\vee \sigma_n}
\ge -\frac{K_{S,2}\hat c_n}{2} \left(\frac{\log n}{n}\right)^{1/2}
+C_1 B^{1/\gamma}\left(\frac{\hat c_n^2\log n}{n}\right)^{1/2}
\end{align*}
so that
\begin{align*}
\sup_{\theta\in\Theta, g\in\mathcal{G},j\in\{1,\ldots,j\}}
\frac{\sqrt{n}}{\sqrt{\log n}}
\left|\frac{\hat \mu_{n,j}(\theta,g)}{\sigma_{P,j}(\theta,g)\vee \sigma_n}
-\frac{\mu_{P,j}(\theta,g)}{\sigma_{P,j}(\theta,g)\vee \sigma_n}\right|
\ge \hat c_n (B^{1/\gamma}C_1-K_{S,2}/2).
\end{align*}
Since $\hat c_n$ is bounded away from zero, we can choose $B$ large so
that $\hat c_n (B^{1/\gamma}C_1-K_{S,2}/2)$ is large enough so that the
conclusion of Lemma \ref{sd_rate_lemma} holds with $B$ from that lemma
replaced by $\hat c_n (B^{1/\gamma}C_1-K_{S,2}/2)$.  For this value of $B$,
the probability of the last display holding will go to zero uniformly in
$P\in\mathcal{P}$ so that the desired conclusion will hold.
\end{proof}

\begin{proof}[proof of Theorem \ref{rate_thm_alpha}]
It is sufficient to find a $C$ such that, given $\theta$ and $P$, there
exists a $\theta_0(\theta,P)$, $j_0(\theta,P)$, and a $g\in\mathcal{G}$
such that
\begin{align*}
\frac{\mu_{P,j}(\theta,g)}
  {\sigma_{P,j}(\theta,g)\vee d(\theta,\theta_0(P))^{\psi/\gamma}}
\le -C \|\theta-\theta_0(\theta,P)\|^{1/\gamma}.
\end{align*}
Given $\theta$ and $P$, let $\theta_0(\theta,P)$ and $j_0(\theta,P)$ be
chosen as in Assumption \ref{x0_assump}.  To avoid cumbersome
notation, I will use $\theta_0$ and $j_0$ to denote $\theta_0(\theta,P)$
and $j_0(\theta,P)$ when the dependence on $\theta$ and $P$ is clear.
For this $\theta_0$ and $j_0$, we will have, for $\|x-x_0\|<\eta$,
\begin{align*}
&\bar m_{j_0}(\theta,x,P)
=\bar m_{j_0}(\theta,x,P)-\bar m_{j_0}(\theta_0,x_0,P)  \\
&=[\bar m_{j_0}(\theta,x,P)-\bar m_{j_0}(\theta_0,x,P)]
+[\bar m_{j_0}(\theta_0,x,P)-\bar m_{j_0}(\theta_0,x_0,P)]  \\
&\le \bar m_{\theta,j_0}(\theta^*,x,P)(\theta-\theta_0)
  +C\|x-x_0\|^\alpha
\end{align*}
for some $\theta^*$ between $\theta$ and $\theta_0$.  By Assumptions
\ref{diff_assump} and \ref{x0_assump}, for
$\|\theta-\theta_0\|$ and $\|x-x_0\|$ smaller than some constant that does
not depend on $P$ or $\theta$, this will be less than or equal to
\begin{align*}
-(\eta/2)\left\|\theta-\theta_0\right\|+C\|x-x_0\|^\alpha.
\end{align*}
For $\|x-x_0\|\le [\eta/(4C)]^{1/\alpha}\|\theta-\theta_0\|^{1/\alpha}$,
this is less than or equal to $-(\eta/4)\|\theta-\theta_0\|$.  Thus,
letting $g\in\mathcal{G}$ be as in Assumption \ref{g_rate_assump}
with
$s=x_0$ and $t=[\eta/(4C)]^{1/\alpha}\|\theta-\theta_0\|^{1/\alpha}$ so
that $g(x)\le I(\|x-x_0\|\le
[\eta/(4C)]^{1/\alpha}\|\theta-\theta_0\|^{1/\alpha})$
and
$g(x)\ge C_{\mathcal{G},1}I(\|x-x_0\|\le
[\eta/(4C)]^{1/\alpha}\|\theta-\theta_0\|^{1/\alpha}C_{\mathcal{G},2})$,
we will have
\begin{align}\label{mu_bound_eq}
\mu_{P,j_0}(\theta,g)
=E_P\bar m_{j_0}(\theta,X_i,P)g(X_i)
\le -(\eta/4)\|\theta-\theta_0\| E_Pg(X_i)
\end{align}
and
\begin{align*}
&\sigma_{P,j_0}(\theta,g)
=\left\{var_P[m_{j_0}(W_i,\theta)g(X_i)]\right\}^{1/2}
\le \left\{E_P[m_{j_0}(W_i,\theta)g(X_i)]^2\right\}^{1/2}  \\
&\le \overline Y \overline g^{1/2} \left\{E_Pg(X_i)\right\}^{1/2}.
\end{align*}
The lower bound on $g$ implies that $\{E_Pg(X_i)\}^{1/2}$ is greater than
some constant that does not depend on $P$ times
$\|\theta-\theta_0\|^{d_X/(2\alpha)}
  \ge d(\theta,\theta_0(P))^{d_X/(2\alpha)}$.  Thus, for some constant $K$
that does not depend on $P$,
$\sigma_{P,j_0}(\theta,g)\vee d(\theta,\theta_0(P))^{d_X/(2\alpha)}
\le K \{E_Pg(X_i)\}^{1/2}$.
Thus,
\begin{align*}
&\frac{\mu_{P,j_0}(\theta,g)}
  {\sigma_{P,j_0}(\theta,g)\vee d(\theta,\theta_0(P))^{d_X/(2\alpha)}}
\le \frac{-(\eta/4)}{K}
  \|\theta-\theta_0\| [E_Pg(X_i)]^{1/2}  \\
&\le \frac{-(\eta/4)}{K}\|\theta-\theta_0\|
  C_{\mathcal{G},1}^{1/2}P\left\{\|x-x_0\|\le
    [\eta/(4C)]^{1/\alpha}\|\theta-\theta_0\|^{1/\alpha}
    C_{\mathcal{G},2}\right\}^{1/2}  \\
&\le \frac{-(\eta/4)}{K}\|\theta-\theta_0\|
  C_{\mathcal{G},1}^{1/2}\eta^{1/2}
    \left\{[\eta/(4C)]^{1/\alpha}\|\theta-\theta_0\|^{1/\alpha}
    C_{\mathcal{G},2}\right\}^{d_X/2}
\end{align*}
where the second inequality follows from the lower bound on $g$.
This is equal to a negative constant that does not depend on $P$ times
$\|\theta-\theta_0\|^{(d_X+2\alpha)/(2\alpha)}$, so that Assumption
\ref{rate_assump} holds with $\gamma=2\alpha/(d_X+2\alpha)$ and
$\psi=d_X/(d_X+2\alpha)$.

\end{proof}

\begin{proof}[proof of Theorem \ref{oneside_thm}]
Assumption \ref{diff_assump} holds because $\bar m(\theta,x)$ is linear,
so it remains to verify Assumption \ref{x0_assump}.  Given
$\theta\in\Theta$ and $P\in\mathcal{P}$, let $x_0(\theta,P)$ minimize
$E_P(W_i^H|X_i=x)-\theta_1-x'\theta_{-1}$ over the support of $X_i$, and let
$t(\theta,P)$ be the minimum (the minimum is taken since
$E(W_i^H|X_i=x)-\theta_1-x'\theta_{-1}$ is continuous).  Let
$\theta_0(\theta,P)=(\theta_1+t(\theta,P),\theta_{-1})$.
Then
$\bar m(\theta_0(\theta,P),x,P)
  =E(W_i^H|X_i=x)-\theta_1-t(\theta,P)-x'\theta_{-1}$ so
that $\theta_0(\theta,P)\in\Theta_0(P)$ and
$\bar m(\theta_0(\theta,P),x_0(\theta,P),P)=0$.  We have
\begin{align*}
&\bar m_\theta(\theta_0(\theta,P),x_0(\theta,P),P) (\theta-\theta_0(\theta,P))
=-(1,x_0(\theta,P)') (-t(\theta,P),0,\ldots,0)'  \\
&=t(\theta,P)=-\|(t(\theta,P),0,\ldots,0)'\|
=-\|\theta-\theta_0(\theta,P)\|
\end{align*}
where the second to last equality holds because $t(\theta,P)$ is negative
by definition of the identified set.
The Holder continuity part of Assumption \ref{x0_assump} is immediately
implied by Assumption \ref{reg_holder_assump}.  Under Assumption
\ref{reg_2diff_assump}, $x_0(\theta,P)$ must be on the interior of the
support of $X_i$ by part (ii) of this assumption.
Thus, $x_0(\theta,P)$ is an interior minimum of the twice differentiable
function $x\mapsto E_P(W_i^H|X_i=x)-\theta_1-x'\theta_{-1}$, so the first
derivative of this function at $x_0(\theta,P)$ is zero.
This and a second order mean value expansion of this function around
$x_0(\theta,P)$ imply the
Holder continuity part of Assumption \ref{x0_assump} with $C$ a bound on
the norm of the second derivative matrix.

\end{proof}

\begin{proof}[proof of Theorem \ref{int_reg_thm}]
Everything is the same as in the proof of Theorem \ref{oneside_thm} except
for the verification of the first part of Assumption \ref{x0_assump}.
For any $\theta$, either $(\theta_1',\theta_{2})$ is in $\Theta_0(P)$ for
some $\theta'$, in which case the same argument to verify
Assumption \ref{x0_assump} goes through, or  $\theta_{2}>\overline
\theta_{2}$ or $\theta_{2}<\underline \theta_{2}$, where
$\overline \theta_2\equiv \sup \{\theta_{2}
  |(\theta_1,\theta_{2})\in\Theta_0(P) \text{ some $\theta_1$}\}$ and
$\underline \theta_2\equiv \inf \{\theta_{2}
  |(\theta_1,\theta_{2})\in\Theta_0(P) \text{ some $\theta_1$}\}$.
Suppose that $\theta_2>\overline \theta_2$ (the case where
$\theta_2<\underline \theta_2$ is symmetric).  Then, for some $\theta_1'$,
we have $(\theta_1',\overline \theta_2)\in \Theta_0(P)$, and, for some
$x_{0,2}\le x_{0,1}$,
$E(W_i^H|X_i=x_{0,1})=\theta_1'+x_{0,1}\overline \theta_2$
and
$E(W_i^L|X_i=x_{0,2})=\theta_1'+x_{0,2}\overline \theta_2$.  We have
$\bar m_{\theta,1}(\theta,x,P)=-(1,x)$ and
$\bar m_{\theta,2}(\theta,x,P)=(1,x)$, so that
\begin{align*}
\bar m_{\theta,1}(\theta,x_{0,1},P)(\theta-(\theta_1',\overline \theta_2))
=-(1,x_{0,1})(\theta-(\theta_1',\overline \theta_2))
\end{align*}
and
\begin{align*}
\bar m_{\theta,2}(\theta,x_{0,2},P)(\theta-(\theta_1',\overline \theta_2))
=(1,x_{0,2})(\theta-(\theta_1',\overline \theta_2)).
\end{align*}
If the sum of the expressions in these two displays is less than
$-2\eta\|\theta-(\theta_1',\overline \theta_2)\|$,
at least one of them must be less than $-\eta\|\theta-(\theta_1',\overline
\theta)\|$, so it suffices to bound
\begin{align*}
[(1,x_{0,2})-(1,x_{0,1})](\theta-(\theta_1',\overline \theta_2))
/\|\theta-(\theta_1',\overline \theta_2)\|
=-\frac{(x_{0,1}-x_{0,2})(\theta_2-\overline \theta_2)}
  {\left[(\theta_1-\theta_1')^2
  +(\theta_2-\overline \theta_2)^2\right]^{1/2}}.
\end{align*}
For this, it suffices to bound $x_{0,1}-x_{0,2}$ away from zero and
$|\theta_1-\theta_1'|/|\theta_2-\overline \theta_2|$ away from infinity.

$x_{0,1}-x_{0,2}$ is bounded away from zero by parts (ii) and (iii) of
Assumption \ref{int_reg_assump}.  For parameter values where
$|\theta_1-\theta_1'|/|\theta_2-\overline \theta_2|$ is large, we can use
another argument.
Note that
\begin{align*}
\frac{-(1,x_{0,1})(\theta-(\theta_1',\overline \theta_2))}
  {\|\theta-(\theta_1',\overline \theta_2)\|}
=-\frac{(\theta_1-\theta_1')+x_{0,1}(\theta_2-\overline \theta_2)}
  {\|\theta-(\theta_1',\overline \theta_2)\|}
=-\frac{(\theta_1-\theta_1')/(\theta_2-\overline \theta_2)+x_{0,1}}
  {\left[(\theta_1-\theta_1')^2/(\theta_2-\overline \theta_2)^2
    + 1\right]^{1/2}}
\end{align*}
and, similarly,
\begin{align*}
\frac{(1,x_{0,2})(\theta-(\theta_1',\overline \theta_2))}
  {\|\theta-(\theta_1',\overline \theta_2)\|}
=\frac{(\theta_1-\theta_1')/(\theta_2-\overline \theta_2)+x_{0,2}}
  {\left[(\theta_1-\theta_1')^2/(\theta_2-\overline \theta_2)^2
    + 1\right]^{1/2}}.
\end{align*}
For $|\theta_1-\theta_1'|/|\theta_2-\overline \theta_2|
  >2\max\{|x_{0,1}|,|x_{0,2}|,1\}$, one of these displays will be less
than $-1/4$.

\end{proof}

\begin{proof}[proof of Theorem \ref{quantile_oneside_thm}]
For Assumption \ref{diff_assump}, note that
\begin{align*}
&\bar m_{\theta}(\theta,x,P)
=\frac{d}{d\theta}E_P[\tau-I(W_i^H\le \theta_1+X_i'\theta_{-1})|X_i=x]
=-\frac{d}{d\theta}P(W_i^H\le \theta_1+X_i'\theta_{-1}|X_i=x)  \\
&=-f_{W_i^H|X_i}(\theta_1+x'\theta_{-1}|x)(1,x').
\end{align*}
This is continuous as a function of $\theta$ uniformly in $(\theta,x,P)$
by Assumption \ref{quantile_density_assump} and the bound on the support
of $X_i$.

To verify the first part of Assumption \ref{x0_assump}, let
$x_0(\theta,P)$, $t(\theta,P)$ and $\theta_0(\theta,P)$ be defined as in
the proof of Theorem \ref{oneside_thm}, but with $E_P(W_i^H|X_i=x)$
replaced by $q_{\tau,P}(W_i^H|X_i=x)$.  Then
$\theta_0(\theta,P)\in\Theta_0(P)$ and
\begin{align*}
&\bar m(\theta_0(\theta,P),x_0(\theta,P),P)
=\tau-P(W_i^H\le \theta_1+t(\theta,P)+X_i'\theta_{-1}|X_i=x_0(\theta,P))  %
=0
\end{align*}
since
$q_{\tau,P}(W_i^H|X_i=x_0(\theta,P))
=\theta_1+t(\theta,P)+x_0(\theta,P)'\theta_{-1}$.  We also have
\begin{align*}
&m_{\theta}(\theta_0(\theta,P),x_0(\theta,P),P)
(\theta-\theta_0(\theta,P))
=-f_{W_i^H|X_i}(\theta_1+x'\theta_{-1}|x)(1,x')
(-t(\theta,P),0,\ldots,0)'  \\
&=f_{W_i^H|X_i}(\theta_1+x'\theta_{-1}|x)t(\theta,P)
=-f_{W_i^H|X_i}(\theta_1+x'\theta_{-1}|x)
\|\theta-\theta_0(\theta,P)\|
\le -\underline f \|\theta-\theta_0(\theta,P)\|.
\end{align*}

For the second part of Assumption \ref{x0_assump}, note that, since
$\theta_0=\theta_0(\theta,P)\in\Theta_0(P)$,
\begin{align*}
&\bar m(\theta_0,x,P)
=\tau-P(W_i^H\le \theta_{0,1}+X_i'\theta_{0,-1}|X_i=x)  \\
&=\tau-P(W_i^H\le q_{\tau,P}(W_i^H|X_i=x)|X_i=x)  \\
&+P(\theta_{0,1}+X_i'\theta_{0,-1}\le W_i^H
  \le q_{\tau,P}(W_i^H|X_i=x) |X_i=x)  \\
&=P(\theta_{0,1}+X_i'\theta_{0,-1}\le W_i^H
  \le q_{\tau,P}(W_i^H|X_i=x) |X_i=x).
\end{align*}
For $\|x-x_0\|$ small enough, the distance between
$\theta_{0,1}+x'\theta_{0,-1}$ and $q_{\tau,P}(W_i^H|X_i=x)$ will be less
than the $\eta$ in Assumption \ref{quantile_density_assump}.  For $x$ such
that this holds,
\begin{align*}
&|\bar m(\theta_0,x,P)-\bar m(\theta_0,x_0,P)|
=\bar m(\theta_0,x,P)  \\
&=P(\theta_{0,1}+X_i'\theta_{0,-1}\le W_i^H
  \le q_{\tau,P}(W_i^H|X_i=x) |X_i=x)  \\
&\le \overline f [q_{\tau,P}(W_i^H|X_i=x)-\theta_{0,1}-x'\theta_{0,-1}]
\\
&=\overline f\{ [q_{\tau,P}(W_i^H|X_i=x)-\theta_{0,1}-x'\theta_{0,-1}]
- [q_{\tau,P}(W_i^H|X_i=x_0)-\theta_{0,1}-x_0'\theta_{0,-1}] \}.
\end{align*}
Under Assumption \ref{quantile_holder_assump}, the second part of
Assumption \ref{x0_assump} then follows immediately since, for $\alpha\le
1$ and $\|x-x_0\|$ small enough,
$\|(x-x_0)'\theta_{0,-1}\|\le \|\theta_{0,-1}\|\|x-x_0\|
\le \|\theta_{0,-1}\|\|x-x_0\|^\alpha$
so that the expression in the above display is bounded by
$\overline f(C+\|\theta_{0,-1}\|)\|x-x_0\|^\alpha$.
Under Assumption
\ref{quantile_2diff_assump}, Assumption \ref{x0_assump} follows from a
second order mean value expansion of $q_{\tau,P}(W_i^H|X_i=x_0)$ since
$x_0$ is on the interior of the support of $X_i$.

\end{proof}

\begin{proof}[proof of Theorem \ref{quantile_int_reg_thm}]
Everything is the same as in the proof of Theorem
\ref{quantile_oneside_thm} except for the verification of the first part
of Assumption \ref{x0_assump}.  Verifying this condition uses a similar
argument to the one in Theorem \ref{int_reg_thm} for mean regression.
For
any $\theta$, either $(\theta_1',\theta_2)\in \Theta_0(P)$ for some
$\theta'$, in which case the same argument to verify Assumption
\ref{x0_assump} goes through, or $\theta_2>\overline \theta_2$ or
$\theta_2<\underline \theta_2$, where $\overline \theta_2$ and $\underline
\theta_2$ are defined as in the proof of Theorem \ref{int_reg_thm}
($\overline \theta_2\equiv \sup \{\theta_{2}
  |(\theta_1,\theta_{2})\in\Theta_0(P) \text{ some $\theta_1$}\}$ and
$\underline \theta_2\equiv \inf \{\theta_{2}
  |(\theta_1,\theta_{2})\in\Theta_0(P) \text{ some $\theta_1$}\}$).  If
$\theta_2>\overline \theta_2$ (a symmetric argument applies when
$\theta_2<\underline \theta_2$), then, for some $\theta_1'$,
$(\theta_1',\overline\theta_2)\in\Theta_0(P)$ and some
$x_{0,2}\le x_{0,1}$,
$q_{\tau,P}(W_i^H|X_i=x_{0,1})=\theta_1'+x_{0,1}\overline \theta_2$ and
$q_{\tau,P}(W_i^L|X_i=x_{0,2})=\theta_1'+x_{0,2}\overline \theta_2$.
We have
$\bar m_{\theta,1}(\theta,x_{0,1},P)
=-f_{W_i^H|X_i}(\theta_1+x_{0,1}'\theta_2|x_{0,1})(1,x_{0,1})$ and
$\bar m_{\theta,2}(\theta,x_{0,2},P)
=f_{W_i^L|X_i}(\theta_1+x_{0,2}'\theta_2|x_{0,1})(1,x_{0,1})$, so
\begin{align*}
\bar m_{\theta,1}(\theta,x_{0,1},P)(\theta-(\theta_1',\overline \theta_2))
=-f_{W_i^H|X_i}(\theta_1+x_{0,1}'\theta_2|x_{0,1})(1,x_{0,1})
(\theta-(\theta_1',\overline \theta_2))
\end{align*}
and
\begin{align*}
\bar m_{\theta,2}(\theta,x_{0,2},P)(\theta-(\theta_1',\overline \theta_2))
=f_{W_i^L|X_i}(\theta_1+x_{0,2}'\theta_2|x_{0,1})(1,x_{0,2})
(\theta-(\theta_1',\overline \theta_2)).
\end{align*}
Letting $a_1$ be the expression in the first display above, and $a_2$ the
expression in the second display above, note that, if
\begin{align*}
[f_{W_i^H|X_i}(\theta_1+x_{0,1}'\theta_2|x_{0,1})]^{-1} \cdot a_1
+[f_{W_i^L|X_i}(\theta_1+x_{0,2}'\theta_2|x_{0,1})]^{-1} \cdot a_2
\le -\frac{2\eta}{\underline f} \|\theta-(\theta_1',\overline \theta_2)\|,
\end{align*}
then either $a_1\le -\eta \|\theta-(\theta_1',\overline \theta_2)\|$ or
$a_2\le -\eta \|\theta-(\theta_1',\overline \theta_2)\|$.  Thus, it
suffices to bound the expression on the left hand side of the above
display divided by $\|\theta-(\theta_1',\overline \theta_2)\|$
away from zero from above.  The left hand side of the above display
divided by $\|\theta-(\theta_1',\overline \theta_2)\|$ is equal to
\begin{align*}
[(1,x_{0,2})-(1,x_{0,1})](\theta-(\theta_1',\overline \theta_2))
/\|\theta-(\theta_1',\overline \theta_2)\|
=-\frac{(x_{0,1}-x_{0,2})(\theta_2-\overline \theta_2)}
{[(\theta_1-\theta_1')^2+(\theta_2-\theta_2')^2]^{1/2}}.
\end{align*}
By the same argument as in the proof of Theorem \ref{int_reg_thm}, this is
bounded away from zero from above for
$|\theta_1-\theta_1'|/|\theta_2-\overline \theta_2|$ bounded away from
infinity since $x_{0,1}-x_{0,2}$ is bounded away from zero, and, for
$|\theta_1-\theta_1'|/|\theta_2-\overline \theta_2|$ large enough, either
$\bar m_{\theta,1}(\theta,x_{0,1},P)(\theta-(\theta_1',\overline
\theta_2))$ or
$\bar m_{\theta,2}(\theta,x_{0,2},P)(\theta-(\theta_1',\overline
\theta_2))$ will be less than the same negative constant for all
$P\in\mathcal{P}$.

\end{proof}

\begin{proof}[proof of Theorem \ref{coverage_thm_selection}]
The result follow immediately from Theorem \ref{coverage_thm}.
\end{proof}

\begin{proof}[proof of Theorem \ref{rate_thm_selection}]
For the case where Assumption \ref{ident_pos_assump} holds, the result
follows by verifying the conditions of Theorem \ref{rate_thm} with $g$ a
function that is positive only on $[\underline x,\overline x]$.  For the
other cases, the result will follow by verifying the conditions of Theorem
\ref{rate_thm_alpha} once we show that these models can be transformed so
that Assumption \ref{ident_fin_assump} holds with $\phi_x$ in the
transformed model equal to zero and, under Assumption
\ref{ident_fin_assump} on the original model, $\phi_m$ in the transformed
model equal to $\phi_m/(\phi_x+1)$ and, under Assumption
\ref{ident_inf_assump} (and $d_X=1$) on the original model, $\phi_m$ in
the transformed model equal to $\phi_m/(\phi_x-1)$.
(Assumption \ref{g_rate_assump} is invariant to taking the same invertible
monotonic transformation of each element of $X_i$, since we can replace
$\|\cdot\|$ in that assumption with the supremum norm, and then the sets
involved are $d_X$ dimensional boxes, and the set of all $d_X$-dimensional
boxes is invariant to such transformations.  This holds even for the
transformations used under Assumption \ref{ident_inf_assump} in which
infinity is taken to a finite support point by taking $t$ in Assumption
\ref{g_rate_assump} to be large enough so that the largest value of any
component of $X_i$ in the sample is contained in the $d_X$-dimensional
box.)

Suppose that Assumption \ref{ident_fin_assump} holds for some $\phi_m$ and
$\phi_x$.  Then, for any $t\in\mathbb{R}$ with each element less than $\eta_X$
\begin{align*}
&P(0<x_{0,k}-X_{i,k}<t_k \text{ all $k$})
=P(x_0-t<X_i<x_0)    \\
&\ge \frac{1}{C}\int_{x_{0,1}-t_1}^{x_0} \cdots
  \int_{x_{0,d_X}-t_{d_X}}^{x_0}
  \prod_{k=1}^d |x_{0,k}-x_k|^{\phi_x} \, dx_1\cdots dx_{d_X}
=\frac{1}{C}\prod_{k=1}^d \frac{t_k^{\phi_x+1}}{\phi_x+1}
\end{align*}
so that
\begin{align*}
&P(x_{0,k}-t_k<x_{0,k}-(x_{0,k}-X_{i,k})^{(\phi_x+1)}<x_{0,k}
  \text{ all $k$})
=P(0<x_{0,k}-X_{i,k}<t_k^{1/(\phi_x+1)}
  \text{ all $k$})  \\
&\ge \frac{1}{C}\prod_{k=1}^d \frac{t_k}{\phi_x+1}.
\end{align*}
Thus, the random variable $V_i$ defined to have $k$th element
$x_{0,k}-(x_{0,k}-X_{i,k})^{(\phi_x+1)}$ for $x_0-\eta_X<X_i<x_0$ and $X_{i,k}$
otherwise will satisfy part (ii) of Assumption \ref{ident_fin_assump} (for
a different value of $\eta_X$) with
$\phi_x$ equal to zero for the transformed variable.  To get the
conditional mean of the transformed model, note that, for
$x_0-\eta_X<X_i<x_0$,
\begin{align*}
&E_P(W_i^H|V_i=v)
=E_P(W_i^H|x_{0,k}-(x_{0,k}-X_{i,k})^{(\phi_x+1)}=v_k \text{ all $k$})  \\
&=E_P(W_i^H|x_{0,k}-X_{i,k}=(x_{0,k}-v_k)^{1/(\phi_x+1)}
  \text{ all $k$})
=E_P(W_i^H|X_{i,k}=x_{0,k}-(x_{0,k}-v_k)^{1/(\phi_x+1)}
  \text{ all $k$})  \\
&\le C\|((x_{0,1}-v_1)^{1/(\phi_x+1)},\ldots,
  (x_{0,d_X}-v_{d_X})^{1/(\phi_x+1)})\|^{\phi_m}
\le Cd_X^{\phi_m}\|x_0-v\|^{\phi_m/(\phi_x+1)}.
\end{align*}
Thus, Assumption \ref{ident_fin_assump} will hold for the transformed
model with $X_i$ replaced with $V_i$ and $\phi_m$ in the transformed model
equal to $\phi_m/(\phi_x+1)$ and $\phi_x$ in the transformed model equal
to zero.

If Assumption \ref{ident_inf_assump} holds for some $\phi_m$ and $\phi_x$,
then, for $t$ greater than $K_X$ (here $d_X=1$),
\begin{align*}
P(X_i\ge t)
\ge \frac{1}{C}\int_{t}^{\infty}
  x^{-\phi_x} \, dx
=\frac{1}{C(\phi_x-1)} t^{1-\phi_x}.
\end{align*}
Thus,
\begin{align*}
&P(K_X+1-1/(X_i-K_X+1)\ge K_X+1-t)
=P(-1/(X_i-K_X+1)\ge -t)  \\
&=P(1/(X_i-K_X+1)\le t)
=P(1/t\le X_i-K_X+1)  \\
&=P(K_X-1+1/t\le X_i)
\ge \frac{1}{C(\phi_x-1)} (K_X-1+1/t)^{1-\phi_x}
\ge \frac{2^{1-\phi_x}}{C(\phi_x-1)} t^{\phi_x-1}
\end{align*}
where the last inequality holds for $t$ small enough so that
$1/t\ge K_X-1$.
It follows that part (ii) of Assumption \ref{ident_fin_assump} holds with
$\phi_x$
in that assumption replaced by $\phi_x-2$ for the transformed random
variable $V_i$ given by $V_i=K_X+1-1/(X_i-K_X+1)$ for
$X_i>K_X$ and $V_i=X_i$ otherwise.
Here, $x_0$ from Assumption \ref{ident_fin_assump} is equal to $K_X+1$ in
the transformed model.
As for the
conditional mean of the transformed model, we have, for $v$ close enough
to $K_X+1$,
\begin{align*}
&E_P(W_i^H|V_i=v)
=E_P(W_i^H|K_X+1-1/(X_i-K_X+1)=v)  \\
&=E_P(W_i^H|-1/(X_i-K_X+1)=v-1-K_X)
=E_P(W_i^H|X_i-K_X+1=-1/(v-1-K_X))  \\
&=E_P(W_i^H|X_i=-1/(v-1-K_X)+K_X-1)
\le C(-1/(v-1-K_X)+K_X-1)^{-\phi_m}  \\
&\le 2C(1/(1+K_X-v))^{-\phi_m}
= 2C(1+K_X-v)^{\phi_m}
\end{align*}
so that part (i) of Assumption \ref{ident_fin_assump} holds with the same
$\phi_m$.

\end{proof}

\begin{proof}[proof of Theorem \ref{upper_rate_bdd_thm}]
Let $\theta_n$ be a sequence converging to $\theta_0$ such that, for some
$\varepsilon>0$,
$d_H(\theta_n,\Theta_0(P))=n^{-\alpha/(2d_X+2\alpha)}\varepsilon$,
for large enough $n$, (conditions on how small $\varepsilon$ is will be
stated below).
Such a sequence exists by part (iv) of Assumption
\ref{smoothness_upper_assump}.
For each $n$, let $\theta_0'(n)\in \delta\Theta_0(P)$ be such that
$d_H(\theta_n,\theta_0'(n))\le 2 n^{-\alpha/(2d_X+2\alpha)}\varepsilon$
(doubling the distance to the identified set covers the possibility that
the infimum is not achieved).  For each $j$, we have, for some
$x_0\in\mathcal{X}_0(\theta_0'(n))$ and
some $\theta_n^*$ between $\theta_n$ and $\theta_0'(n)$,
\begin{align}\label{obj_bound_eq}
&\bar m_j(\theta_n,x,P)
=\bar m_j(\theta_n,x,P)-\bar m_j(\theta_0'(n),x_0,P)  \notag  \\
&=[\bar m_j(\theta_n,x,P)-\bar m_j(\theta_0'(n),x,P)]
+[\bar m_j(\theta_0'(n),x,P)-\bar m_j(\theta_0'(n),x_0,P)]  \notag  \\
&=\bar m_{\theta,j} (\theta_n^*,x,P)(\theta_n-\theta_0'(n))
+[\bar m_j(\theta_0'(n),x,P)-\bar m_j(\theta_0'(n),x_0,P)]  \notag  \\
&\ge - 2K n^{-\alpha/(2d_X+2\alpha)}\varepsilon
+\eta \min_{x_0\in \mathcal{X}_0(\theta_0'(n))}
(\|x-x_0\|^{\alpha}\wedge \eta)
\end{align}
where $K$ is a bound on the derivative.  For $n$ large enough, the last
line of the above display is negative only for $x$ such that, for some
$x_0\in\mathcal{X}_0(\theta'_0(n))$,
$\|x-x_0\|
<\left(\frac{2K\varepsilon}{\eta}\right)^{1/\alpha}
  n^{-1/(2d_X+2\alpha)}$.
This will imply, letting $\overline g$ be an upper bound for functions in
$\mathcal{G}$ and $K_1$ an upper bound for the number of elements in
$\mathcal{X}_0(\theta'_0(n))$,
\begin{align*}
&\mu_{P,j}(\theta_n,g)=E_P\bar m_j(\theta_n,X_i,P)g(X_i)
\ge - 2K n^{-\alpha/(2d_X+2\alpha)}\varepsilon \overline g
  P(\bar m_j(\theta_n,X_i,P)<0)  \\
&\ge - 2K n^{-\alpha/(2d_X+2\alpha)}\varepsilon \overline g
  \sum_{x_0\in\mathcal{X}_0(\theta'_0(n))}
  P\left(\|X_i-x_0\|<\left(\frac{2K\varepsilon}{\eta}\right)^{1/\alpha}
  n^{-1/(2d_X+2\alpha)}\right)  \\
&\ge - 2K n^{-\alpha/(2d_X+2\alpha)}\varepsilon K_1 \overline g
  \overline f 2^{d_X}\left(\frac{2K\varepsilon}{\eta}\right)^{d_X/\alpha}
  n^{-d_X/(2d_X+2\alpha)}  \\
&= - 2K \varepsilon K_1 \overline g \overline f
  2^{d_X}\left(\frac{2K\varepsilon}{\eta}\right)^{d_X/\alpha} n^{-1/2}.
\end{align*}
Here, the first inequality follows for large enough $n$ since $\bar
m_j(\theta_n,x,P)\ge - 2K n^{-\alpha/(2d_X+2\alpha)}\varepsilon$
eventually by the argument above.

If $d_H(\mathcal{C}_{n,\omega}(\hat c_n),\Theta_0(P))<\varepsilon
n^{-\alpha/(2d_X+2\alpha)}$, then
$\theta_n\notin\mathcal{C}_{n,\omega}(\hat c_n)$, so that
$T_{n,\omega}(\theta_n)> \hat c_n n^{-1/2}\ge \underline c n^{-1/2}$ where
$\underline c$ is a lower bound for $\hat c_n$.  Then, for some $j$ and
$g$, we will have, letting $K_{S,1}$ be as in Assumption \ref{S_assump},
$\omega(\theta_n,g)\hat \mu_{n,j}(\theta_n,g)
\le -K_{S,1} \underline c n^{-1/2}$ so that, letting $\overline \omega$ be
an upper bound for $\omega(\theta,g)$,
$n^{1/2} \hat \mu_{n,j}(\theta_n,g)
  \le -K_{S,1} \underline c /\overline \omega$.  For large enough $n$, we
will also have
$n^{1/2}\mu_{P,j}(\theta_n,g)\ge - 2K \varepsilon K_1 \overline g \overline f
  2^{d_X}\left(\frac{2K\varepsilon}{\eta}\right)^{d_X/\alpha}$.  This will
imply
\begin{align*}
n^{1/2} \left\{\hat \mu_{n,j}(\theta_n,g)
-\left[\mu_{P,j}(\theta_n,g)\wedge 0\right]\right\}
\le -K_{S,1} \underline c /\overline \omega
+2K \varepsilon K_1 \overline g \overline f
  2^{d_X}\left(\frac{2K\varepsilon}{\eta}\right)^{d_X/\alpha}
\end{align*}
so that $n^{1/2} \left\{\hat \mu_{n,j}(\theta_n,g)
-\left[\mu_{P,j}(\theta_n,g)\wedge 0\right]\right\}$ is bounded away from
zero from above by a negative constant when this event holds for small
enough $\varepsilon$.  Thus, it suffices to show that, for any $\delta>0$,
$n^{1/2} \inf_{g\in\mathcal{G}} \left\{\hat \mu_{n,j}(\theta_n,g)
-\left[\mu_{P,j}(\theta_n,g)\wedge 0\right]\right\}>-\delta$ with
probability approaching one.

We have, for any $r>0$,
\begin{align*}
&n^{1/2} \inf_{g\in\mathcal{G}} \left\{\hat \mu_{n,j}(\theta_n,g)
-\left[\mu_{P,j}(\theta_n,g)\wedge 0\right]\right\}  \\
&\ge n^{1/2} \inf_{g\in\mathcal{G}} \hat \mu_{n,j}(\theta_n,g)
I(\mu_{P,j}(\theta_n,g)>r)
+n^{1/2} \inf_{g\in\mathcal{G}} \left\{\hat \mu_{n,j}(\theta_n,g)
-\mu_{P,j}(\theta_n,g)\right\}
I(\mu_{P,j}(\theta_n,g)\le r).
\end{align*}
The first term is greater than zero with probability approaching one since
$\hat \mu_{n,j}(\theta,g)$ converges to $\mu_{P,j}(\theta,g)$ at a
root-$n$ rate uniformly over $(\theta,g)$ by standard arguments
(e.g. Theorem 2.5.2 in \citet{vaart_weak_1996}).

As for the second term,
note that, for any $\delta_1,\delta_2>0$ with $\delta_1^\alpha\le \eta$,
$\bar m_j(\theta_n,x,P)$ will be greater than or equal to
$-\delta_2 I(d_H(x,\mathcal{X}_0(\theta_0'(n)))<\delta_1 )
+(\eta\delta_1^\alpha-\delta_2)
   I(d_H(x,\mathcal{X}_0(\theta_0'(n)))\ge \delta_1 )$
for large enough $n$ by (\ref{obj_bound_eq}).  To simplify notation,
define the sets
$A_{n,\delta_1}=\{x|d_H(x,\mathcal{X}_0(\theta_0'(n)))<\delta_1\}$.
Using this notation, the above observation implies that, for $n$ greater
than some constant that depends on $\delta_1$,
\begin{align*}
\mu_{P,j}(\theta_n,g)=E_P \bar m_j(\theta_n,X_i,P)g(X_i)
\ge -\delta_2 E_P g(X_i)I(X_i\in A_{n,\delta_1} )
+(\eta\delta_1^\alpha-\delta_2) E_P g(X_i)I(X_i\notin A_{n,\delta_1} ).
\end{align*}
If $\mu_{P,j}(\theta_n,g)\le r$, then this means that
\begin{align*}
(\eta\delta_1^\alpha-\delta_2) E_P g(X_i)I(X_i\notin A_{n,\delta_1} )
\le \delta_2 E_P g(X_i)I(X_i\in A_{n,\delta_1} ) + r
\end{align*}
where, as above, $K_1$ is an upper bound for the number of elements in
$\mathcal{X}_0(\theta_0'(n))$.
Thus, for $\mu_{P,j}(\theta_n,g)\le r$, and $n$ larger than some constant
that depends only on $\delta_1$, letting $\overline g$ be a bound for
$g(X_i)$ and $M$ a bound for $m_j(W_i,\theta)$,
\begin{align*}
&E_P [m_j(W_i,\theta_n)g(X_i)]^2\le \overline g M^2 E_P g(X_i)
=\overline g M^2 [E_P g(X_i)I(X_i\notin A_{n,\delta_1} )
+E_P g(X_i)I(X_i\in A_{n,\delta_1} )]  \\
&\le \overline g M^2
  \{[\delta_2E_P g(X_i)I(X_i\in A_{n,\delta_1} ) + r]
    /(\eta\delta_1^\alpha-\delta_2)
  +E_P g(X_i)I(X_i\in A_{n,\delta_1} )\}  \\
&=\overline g M^2
\left[
  \left(\frac{\delta_2}{\eta\delta_1^\alpha-\delta_2}+1\right)
    E_P g(X_i)I(X_i\in A_{n,\delta_1} )
  +\frac{r}{\eta\delta_1^\alpha-\delta_2}
\right]  \\
&\le \overline g M^2
\left[
  \left(\frac{\delta_2}{\eta\delta_1^\alpha-\delta_2}+1\right)
    \overline g K_1 (2\delta_1)^{d_X}
  +\frac{r}{\eta\delta_1^\alpha-\delta_2}
\right]
\end{align*}
By choosing $r$, $\delta_1$, and $\delta_2$ so that $\delta_1$,
$r/(\eta\delta_1^\alpha-\delta_2)$ and
$\delta_2/(\eta\delta_1^\alpha-\delta_2)$ are small,
we can make the last line of the display less than any $\delta_3>0$.
Then, for $n$ large enough, $\mu_{P,j}(\theta_n,g)\le r$ will imply
$var_P [m(W_i,\theta_n)g(X_i)]\le \delta_3$, so that
\begin{align*}
&n^{1/2} \inf_{g\in\mathcal{G}} \left\{\hat \mu_{n,j}(\theta_n,g)
-\mu_{P,j}(\theta_n,g)\right\}
I(\mu_{P,j}(\theta_n,g)\le r).  \\
&\ge n^{1/2} \inf_{g\in\mathcal{G}} \left\{\hat \mu_{n,j}(\theta_n,g)
-\mu_{P,j}(\theta_n,g)\right\}
I(var_P [m(W_i,\theta_n)g(X_i)]\le \delta_3).
\end{align*}
This can be made arbitrarily small in magnitude by the stochastic
asymptotic equicontinuity of $n^{1/2}(E_n-E_P)m(W_i,\theta)g(X_i)$ with
respect to the covariance semimetric
$\rho((\theta,g),(\theta',g'))
=var_P [m(W_i,\theta)g(X_i)-m(W_i,\theta')g'(X_i)]$
as a sequence of processes indexed by $(\theta,g)$.  Letting
$\tilde g(x)=0$ be the zero function and $\tilde \theta$ an arbitrary
value in $\Theta$, the last line of the above display is equal to
\begin{align*}
n^{1/2} \inf_{g\in\mathcal{G}} \left\{
(E_n-E)m_j(W_i,\theta_n)g_j(X_i)
-(E_n-E)m_j(W_i,\tilde \theta)\tilde g_j(X_i)\right\}
I(\rho((\theta_n,g),(\tilde \theta, \tilde g))\le \delta_3).
\end{align*}
By making $\delta_3$ small, the probability of this being less than any
negative constant can be made arbitrarily small by equicontinuity of
$n^{1/2} (E_n-E)m_j(W_i,\theta_n)g_j(X_i)$ in $\rho$.

\end{proof}

\begin{proof}[proof of Theorem \ref{rate_thm_bddweight}]
By the same argument that gives (\ref{mu_bound_eq}) in the proof of
Theorem \ref{rate_thm_alpha}, we will have, for $\theta$ with
$d_H(\theta,\Theta_0(P))$ smaller than some constant that does not
depend on $P$, there exists
a $\theta_0\in\Theta_0(P)$, $j_0$ and $g\in\mathcal{G}$ with $g(x)\ge
C_{\mathcal{G},1}I(\|x-x_0\|\le
[\eta/(4C)]^{1/\alpha}\|\theta-\theta_0\|^{1/\alpha}C_{\mathcal{G},2})$
such that
\begin{align*}
\mu_{P,j_0}(\theta,g)
=E_P\bar m_{j_0}(\theta,X_i,P)g(X_i)
\le -(\eta/4)\|\theta-\theta_0\| E_Pg(X_i).
\end{align*}
This, and the lower bound on $g$ gives
\begin{align*}
&\mu_{P,j_0}(\theta,g)
\le -(\eta/4)\|\theta-\theta_0\|
C_{\mathcal{G},1}%
\left\{[\eta/(4C)]^{1/\alpha}
\|\theta-\theta_0\|^{1/\alpha}C_{\mathcal{G},2}\right\}^{d_X} \eta  \\
&=-(\eta/4)\|\theta-\theta_0\|^{(\alpha+d_X)/\alpha}
C_{\mathcal{G},1}
\left\{[\eta/(4C)]^{1/\alpha}
C_{\mathcal{G},2}\right\}^{d_X} \eta  \\
&\le -(\eta/4)d_H(\theta,\Theta_0(P))^{(\alpha+d_X)/\alpha}
C_{\mathcal{G},1}
\left\{[\eta/(4C)]^{1/\alpha}
C_{\mathcal{G},2}\right\}^{d_X} \eta.
\end{align*}
Thus, the conditions of Lemma \ref{rate_lemma_bddweight} hold with
$\gamma=\alpha/(d_X+\alpha)$.

\end{proof}

The proof of Theorem \ref{rate_thm_bddweight} uses the following lemma,
which is analogous to Theorem \ref{rate_thm} for set estimates based on
variance weighted KS statistics.

\begin{lemma}\label{rate_lemma_bddweight}
Suppose that, for some positive constants $C$, $\gamma$, and $\delta$,
we have, for all $P\in\mathcal{P}$ and $\theta$ with
$d_H(\theta,\Theta_0(P))<\delta$,
\begin{align*}
\inf_{g,j}
\mu_{P,j}(\theta,g)
\le -C d_H(\theta,\Theta_0(P))^{1/\gamma}
\end{align*}
where the infemum is taken over $g\in\mathcal{G}$ and
$j\in\{1,\ldots,d_Y\}$.
Suppose that Assumptions
\ref{g_pos_assump}, \ref{covering_assump}, \ref{bdd_assump},
\ref{S_assump}, and \ref{consistency_assump}
hold, and that
the weight function $\omega_n(\theta,g)$
satsifies $\underline \omega\le \omega_n(\theta,g)\le \overline \omega$ for
some $0<\underline \omega \le \overline \omega <\infty$, and suppose that
$\hat c_n\to\infty$ with $\hat c_n/\sqrt{n}\to 0$.
Then, 
\begin{align*}
\inf_{P\in\mathcal{P}} P(\Theta_0(P)
  \subseteq \mathcal{C}_{n,\omega}(\hat c_n))
  \stackrel{n\to\infty}{\to} 1
\end{align*}
and, for some large $B$,
\begin{align*}
\sup_{P\in\mathcal{P}}
  P\left( \left(n/\hat c_n^2\right)^{\gamma/2}
    d_H(\mathcal{C}_n(\hat c_n),\Theta_0(P)) > B\right)
\stackrel{n\to\infty}{\to} 0.
\end{align*}

\end{lemma}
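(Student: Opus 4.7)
The plan is to mimic the proofs of Theorems \ref{coverage_thm} and \ref{rate_thm}, but with two simplifications: (a) the weighting is bounded, so no truncation or variance estimation step is needed, and (b) the relevant empirical process $(E_n-E_P)m_j(W_i,\theta)g_j(X_i)$ can be controlled at the $\sqrt{n}$ rate rather than the $\sqrt{n/\log n}$ rate. The key empirical process input is that, under Assumptions \ref{covering_assump} and \ref{bdd_assump}, we have
\begin{align*}
\sup_{P\in\mathcal{P}}P\left(\sqrt{n}\sup_{\theta\in\Theta,g\in\mathcal{G},1\le j\le d_Y}
  |(E_n-E_P)m_j(W_i,\theta)g_j(X_i)|\ge M\right)\to 0
\end{align*}
as $M\to\infty$, which follows by a standard uniform entropy argument analogous to Theorem \ref{sd_scale_thm} applied without the variance scaling (indeed, it is strictly easier since there is no $\mu_{2,n}$ in the denominator).

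For the coverage conclusion, I would argue as in the proof of Theorem \ref{coverage_thm}. If $\Theta_0(P)\nsubseteq \mathcal{C}_{n,\omega}(\hat c_n)$, then some $\theta_0\in\Theta_0(P)$ and some $g\in\mathcal{G}$, $j$ must satisfy $\omega_{n,j}(\theta_0,g)\hat\mu_{n,j}(\theta_0,g)\le -K_{S,1}\hat c_n/\sqrt{n}$; since $\mu_{P,j}(\theta_0,g)\ge 0$ and $\omega_{n,j}\le\overline\omega$, this implies $\sqrt{n}(E_n-E_P)m_j(W_i,\theta_0)g_j(X_i)\le -K_{S,1}\hat c_n/\overline\omega$. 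Because $\hat c_n\to\infty$, the empirical process display above forces the probability of this event to vanish uniformly in $P\in\mathcal{P}$.

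For the rate conclusion, first invoke the coverage conclusion and an analogue of Theorem \ref{consistency_thm} adapted to the bounded-weight setting (the argument there uses only the uniform convergence of $\hat\mu_{n,j}$ to $\mu_{P,j}$, bounded-weight analogues of which follow from the same empirical process bound combined with Assumption \ref{consistency_assump}); this reduces attention to $\theta\in\mathcal{C}_{n,\omega}(\hat c_n)$ with $d_H(\theta,\Theta_0(P))\le \delta$. For such $\theta$, the hypothesis supplies $g^*,j^*$ with $\mu_{P,j^*}(\theta,g^*)\le -Cd_H(\theta,\Theta_0(P))^{1/\gamma}$. On the other hand, $\theta\in\mathcal{C}_{n,\omega}(\hat c_n)$ and Assumption \ref{S_assump}(ii) imply $\omega_{n,j^*}(\theta,g^*)\hat\mu_{n,j^*}(\theta,g^*)\ge -K_{S,2}\hat c_n/\sqrt{n}$, whence $\hat\mu_{n,j^*}(\theta,g^*)\ge -K_{S,2}\hat c_n/(\underline\omega\sqrt{n})$ since $\omega_{n,j^*}\ge \underline\omega>0$. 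Subtracting and rearranging gives
\begin{align*}
\sqrt{n}(E_n-E_P)m_{j^*}(W_i,\theta)g_{j^*}^*(X_i)
\ge C\sqrt{n}\,d_H(\theta,\Theta_0(P))^{1/\gamma}-K_{S,2}\hat c_n/\underline\omega.
\end{align*}
If $d_H(\mathcal{C}_{n,\omega}(\hat c_n),\Theta_0(P))>B(n/\hat c_n^2)^{-\gamma/2}$, then the right-hand side is at least $\hat c_n(CB^{1/\gamma}-K_{S,2}/\underline\omega)$, which can be made to exceed any fixed $M$ by choosing $B$ large (using $\hat c_n\to\infty$). The empirical process display then drives the probability of this event to zero uniformly in $P\in\mathcal{P}$.

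The main obstacle I anticipate is obtaining the empirical process bound uniformly in $P\in\mathcal{P}$; however, this is already essentially done, since the proof of Theorem \ref{sd_scale_thm} (without its $\delta_n$/$\mu_{2,n}$ tracking) yields the required $\sqrt{n}$-rate uniform central tendency under Assumption \ref{covering_assump} and the uniform boundedness of $m_j(W_i,\theta)g_j(X_i)$ from Assumption \ref{bdd_assump}. Beyond that input, the remainder is a direct bounded-weight analogue of the arguments for Theorems \ref{coverage_thm} and \ref{rate_thm}, replacing $\sqrt{\log n/n}$ scalings by $1/\sqrt{n}$ and using $\hat c_n\to\infty$ in place of the constant lower bound on $\hat c_n$.
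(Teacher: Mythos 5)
Your proposal follows essentially the same route as the paper: a uniform-in-$P$, $\sqrt{n}$-scale maximal inequality for $(E_n-E_P)m_j(W_i,\theta)g_j(X_i)$, coverage from $\hat c_n\to\infty$ together with $\mu_{P,j}(\theta_0,g)\ge 0$ and the upper bound $\overline\omega$, a consistency step to restrict to $d_H(\theta,\Theta_0(P))\le\delta$, and then the contradiction between $\mu_{P,j^*}(\theta,g^*)\le -C\,d_H(\theta,\Theta_0(P))^{1/\gamma}$ and $\hat\mu_{n,j^*}(\theta,g^*)\ge -K_{S,2}\hat c_n/(\underline\omega\sqrt{n})$, exactly as in the paper's argument.

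One point in your justification would not survive as written: you propose to obtain the key display, $\sup_{P}P\bigl(\sqrt{n}\sup_{\theta,g,j}|(E_n-E_P)m_j(W_i,\theta)g_j(X_i)|\ge M\bigr)\to 0$ as $M\to\infty$, by running the argument of Theorem \ref{sd_scale_thm} ``without the variance scaling.'' That argument is a single-scale discretization plus an exponential bound and, with $\mu_{2,n}$ held constant, it only yields $\sqrt{n}\sup|\cdot|=\mathcal{O}_P(\sqrt{\log n})$, not $\mathcal{O}_P(1)$. Since the lemma only assumes $\hat c_n\to\infty$ (possibly much more slowly than $\sqrt{\log n}$), a $\sqrt{\log n}$ bound is not enough for either the coverage step or the rate step as you have set them up. The fix is standard and is what the paper does: invoke a chaining maximal inequality such as Theorem 2.14.1 of \citet{vaart_weak_1996} with the constant envelope $\overline Y$; note that the $L_1$ uniform covering bound of Assumption \ref{covering_assump} together with uniform boundedness gives the polynomial $L_2$ uniform covering numbers needed for the entropy integral, and the resulting bound is uniform in $P$ because the envelope and entropy constants are. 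A second, trivial point: the infimum in the hypothesis need not be attained, so you should take $g^*,j^*$ achieving $\mu_{P,j^*}(\theta,g^*)\le -(C/2)\,d_H(\theta,\Theta_0(P))^{1/\gamma}$ (the paper replaces $C$ by $C/2$ for exactly this reason); this changes nothing else in your argument.
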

\begin{proof}%
First, note that, for all $j$, $\sup_{\theta,g}
\sqrt{n}|(E_n-E)m_j(W_i,\theta)g_j(X_i)|=\mathcal{O}_P(1)$ uniformly in
$P$ by Theorem 2.14.1 in \citet{vaart_weak_1996} (the constant function
equal to $\overline Y$ does not depend on $P$ and can be used as an
envelope function).  This, along with Assumption \ref{S_assump} and the
bound on the weight function, implies the first claim.

For the second claim, once $\Theta_0(P) \subseteq
\mathcal{C}_{n,\omega}(\hat c_n)$,
if $\left(n/\hat c_n^2\right)^{\gamma/2} 
  d_H(\mathcal{C}_n(\hat c_n),\Theta_0(P)) > B$,
there will be a
$\theta\in\mathcal{C}_{n,\omega}(\hat c_n)$ such that 
$d_H(\theta,\Theta_0(P)) > B \frac{\hat c_n^\gamma}{n^{\gamma/2}}$.
If $d_H(\mathcal{C}_{n,\omega}(\hat c_n),\Theta_0(P))<\delta$, which
happens with probability approaching one uniformly in
$P\in\mathcal{P}$ by arguments similar to the proof of Theorem
\ref{consistency_thm}, then,
for this $\theta$ and $P$, there will be a $g^*$ and $j^*$ such that,
for $n$ greater than some constant that does not depend on $P$,
$\mu_{P,j^*}(\theta,g^*)
\le -(C/2) \left(\hat c_n^2/n\right)^{1/2}B^{1/\gamma}$.
Since $\theta\in\mathcal{C}_{n,\omega}(\hat c_n)$, we will also have
$T_{n,\omega}(\theta)\le \hat c_n n^{-1/2}$, so that
$\hat \mu_{n,j^*}(\theta,g^*)
  \omega_{n,j^*}(\theta,g^*)\ge -\hat c_n n^{-1/2} K_{S,2}$.  By the
lower bound on the weight function, this implies
$\hat \mu_{n,j^*}(\theta,g^*)\ge -\hat c_n n^{-1/2}
  K_{S,2}/\underline\omega$.
Thus,
\begin{align*}
\sqrt{n}[\hat \mu_{n,j^*}(\theta,g^*)-\mu_{P,j^*}(\theta,g^*)]
\ge \hat c_n \left[-K_{S,2}/\underline \omega
+(C/2) B^{1/\gamma}\right].
\end{align*}
For $B$ large enough, the right hand side will go to infinity.  Since the
left hand side is $\mathcal{O}_P(1)$ uniformly in $P\in\mathcal{P}$, this
gives the desired result.

\end{proof}

\begin{proof}[proof of Theorem \ref{kernel_rate_upper_thm}]
Let $\theta_n$ and $\theta_0'(n)$ be as in the proof of Theorem
\ref{upper_rate_bdd_thm}, but with
$d_H(\theta_n,\Theta_0(P))
  =\varepsilon \left(\frac{\sqrt{\log n}}{\sqrt{n h_n^{d_X}}}
  \vee h_n^{\alpha} \right)$.

If $d_H(\mathcal{C}_n^{\text{kern}}(\hat c_n),\Theta_0(P))<
  \varepsilon \left(\frac{\sqrt{\log n}}{\sqrt{n h_n^{d_X}}}
  \vee h_n^{\alpha} \right)$,
then $\theta_n\notin \mathcal{C}_n^{\text{kern}}(\hat c_n)$ so that
$T^{\text{kern}}_{n,k,h_n}(\theta_n)\ge \hat c_n$.  Then, letting $K_{S,1}$
be as in Assumption \ref{S_assump}, we will have, for some $j$ and $x$,
$\frac{\sqrt{n h_n^{d_X}}}{\sqrt{\log n}}
\hat{\bar m}_j(x,\theta_n)\le -K_{S,1} \hat c_n$.  By Lemmas
\ref{kernel_scale_lemma} and \ref{kernel_var_lemma}, for large enough $a$
we will have, for some constant $K$,
\begin{align}\label{kern_consistency_eq}
\sup_{x\in\mathbb{R}^{d_X}, \theta\in\Theta}
\frac{\sqrt{n h_n^{d_X}}}{\sqrt{\log n}}
\left|\frac{(E_n-E_P)m_j(W_i,\theta)k((X_i-x)/h_n)}{E_nk((X_i-x)/h_n)}\right|
\le K
\end{align}
with probability approaching one (Lemma \ref{kernel_scale_lemma}
allows $E_Pk((X_i-x)/h_n)$ to be replaced by its sample analogue in
Lemma \ref{kernel_var_lemma}).  When
$T^{\text{kern}}_{n,k,h_n}(\theta_n)\ge \hat c_n$, we will have
\begin{align*}
&\frac{\sqrt{n h_n^{d_X}}}{\sqrt{\log n}}
\left[\frac{(E_n-E_P)m_j(W_i,\theta_n)k((X_i-x)/h_n)}{E_nk((X_i-x)/h_n)}
+\frac{E_P m_j(W_i,\theta_n)k((X_i-x)/h_n)}{E_nk((X_i-x)/h_n)} \right] \\
&=\frac{\sqrt{n h_n^{d_X}}}{\sqrt{\log n}}
  \hat{\bar m}_j(x,\theta_n)\le -K_{S,1} \hat c_n,
\end{align*}
so that, when (\ref{kern_consistency_eq}) holds, we will have
\begin{align*}
\frac{\sqrt{n h_n^{d_X}}}{\sqrt{\log n}}
\frac{E_P m_j(W_i,\theta_n)k((X_i-x)/h_n)}{E_nk((X_i-x)/h_n)}
\le -K_{S,1} \hat c_n + K.
\end{align*}
Appealing again to Lemma \ref{kernel_scale_lemma}, if $a$ is large enough,
this will imply
\begin{align*}
\frac{\sqrt{n h_n^{d_X}}}{\sqrt{\log n}}
\frac{E_P m_j(W_i,\theta_n)k((X_i-x)/h_n)}{E_Pk((X_i-x)/h_n)}
\le \frac{-K_{S,1} \hat c_n + K}{2}.
\end{align*}
Letting $\eta$ be as in Assumption \ref{support_assump}
letting $\varepsilon_1>0$ and $\varepsilon_2>0$ be such that $k(t)\ge
\varepsilon_1$ for $\|t\|\le \varepsilon_2$ and defining
$K_1=\eta \varepsilon_1 \varepsilon_2^{d_X}$, we have
$E_Pk((X_i-x)/h_n)\ge \varepsilon_1 P(\|X_i-x\|\le h_n\varepsilon_2)
\ge \eta\varepsilon_1 \varepsilon_2^{d_x}h_n^{d_X} = K_1h_n^{d_X}$ by
Assumption \ref{support_assump}, so that the above display implies
\begin{align*}
E_P m_j(W_i,\theta_n)k((X_i-x)/h_n)
\le K_1h_n^{d_X}\frac{-K_{S,1} \hat c_n + K}{2}
\frac{\sqrt{\log n}}{\sqrt{n h_n^{d_X}}}
= \frac{K_1\left(-K_{S,1} \hat c_n + K\right)}{2}
\frac{\sqrt{h_n^{d_X}}\sqrt{\log n}}{\sqrt{n}}.
\end{align*}
Let $\hat c_n$ be large enough so that
$K_1\left(-K_{S,1} \hat c_n + K\right)/2\le -\delta$ for some fixed
constant $\delta>0$.  Then the above display implies
\begin{align}\label{kernel_mean_bound}
E_P m_j(W_i,\theta_n)k((X_i-x)/h_n)
\le -\delta
\frac{\sqrt{h_n^{d_X}}\sqrt{\log n}}{\sqrt{n}}.
\end{align}
When this holds, the right hand side will be negative, so that, by Lemma
\ref{kernel_bias_lemma},
$h_n\le B [d_H(\theta_n,\Theta_0(P))]^{1/\alpha}$.
If $h_n^{\alpha}\ge \frac{\sqrt{\log n}}{\sqrt{n h_n^{d_X}}}$, this will
imply $h_n\le \varepsilon^{1/\alpha} B h_n$, which is a contradiction
for $\varepsilon$ small enough.

Now suppose $h_n^{\alpha}\le \frac{\sqrt{\log n}}{\sqrt{n h_n^{d_X}}}$.
By the same argument as in the proof of Theorem \ref{upper_rate_bdd_thm},
we have, for some constant $K_2$ that does not depend on $n$,
$\bar m_j(\theta_n,x)\ge -K_2 d_H(\theta_n,\Theta_0(P))$ so that, if
$h_n^{\alpha}\le \frac{\sqrt{\log n}}{\sqrt{n h_n^{d_X}}}$,
$\bar m_j(\theta_n,x)
  \ge -\varepsilon K_2 \frac{\sqrt{\log n}}{\sqrt{n h_n^{d_X}}}$
so that the left hand side of (\ref{kernel_mean_bound}) is greater
than or equal to
\begin{align*}
-\varepsilon K_2 \frac{\sqrt{\log n}}{\sqrt{n h_n^{d_X}}}
E_P k((X_i-x)/h_n)
\ge - \varepsilon K_2 \frac{\sqrt{\log n}}{\sqrt{n h_n^{d_X}}}
\overline f h_n^{d_X}
= - \varepsilon \overline fK_2
\frac{\sqrt{h_n^{d_X}}\sqrt{\log n}}{\sqrt{n}}
\end{align*}
so that (\ref{kernel_mean_bound}) implies
$\varepsilon \overline fK_2\ge \delta$, a contradiction for $\varepsilon$
small enough.

\end{proof}

The proof of Theorem \ref{kernel_rate_upper_thm} uses the lemmas stated
and proved below.

\begin{lemma}\label{kernel_scale_lemma}
Suppose that Assumption \ref{kernel_assump} holds, and that Assumption
\ref{support_assump} and part (iii) of Assumption
\ref{smoothness_upper_assump} hold, with the upper bound on the density in
the latter assumption uniform in $P\in\mathcal{P}$.
Then, for any $\varepsilon$, there exists an $a$ such that, if
$h_n^{d_X}n/\log n\ge a$ eventually,
\begin{align*}
\sup_{P\in\mathcal{P}}
P\left(
\sup_{x\in\text{supp}_P(X_i)}
\left|\frac{E_nk((X_i-x)/h)}{E_Pk((X_i-x)/h)}-1\right|>\varepsilon
\right)
\stackrel{n\to\infty}{\to} 0
\end{align*}
for all $\varepsilon>0$.
\end{lemma}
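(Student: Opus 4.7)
The plan is to bound numerator and denominator of $E_n k((X_i-x)/h_n)/E_P k((X_i-x)/h_n) - 1 = (E_n-E_P)k((X_i-x)/h_n)/E_Pk((X_i-x)/h_n)$ separately, with the denominator bounded below by a fixed multiple of $h_n^{d_X}$ and the numerator bounded above by $\mathcal{O}(\sqrt{h_n^{d_X}\log n/n})$ uniformly in $x\in\text{supp}_P(X_i)$ and $P\in\mathcal{P}$. Combining these gives a ratio of order $(h_n^{d_X}n/\log n)^{-1/2}$, which can be made smaller than any prescribed $\varepsilon$ by choosing the constant $a$ in the hypothesis $h_n^{d_X}n/\log n\ge a$ large enough.

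For the denominator, I would repeat the argument given in the proof of Theorem \ref{kernel_rate_upper_thm}: since $k$ is nonnegative and bounded below by some $\varepsilon_1>0$ on a neighborhood $\|t\|\le \varepsilon_2$ of the origin by Assumption \ref{kernel_assump}(ii), and since Assumption \ref{support_assump} gives $P(\|X_i-x\|\le h_n\varepsilon_2)\ge \eta (h_n\varepsilon_2)^{d_X}$ uniformly in $x\in\text{supp}_P(X_i)$ and $P\in\mathcal{P}$, we obtain $E_P k((X_i-x)/h_n)\ge \eta\varepsilon_1\varepsilon_2^{d_X} h_n^{d_X} =: K_1 h_n^{d_X}$.

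For the numerator, I would apply Theorem \ref{sd_scale_thm} to the classes $\mathcal{F}_{n,P}=\{t\mapsto k((t-x)/h_n)\mid x\in\text{supp}_P(X_i)\}$. Uniform boundedness of these classes is immediate from Assumption \ref{kernel_assump}(ii), and the polynomial covering number bound $\sup_Q N(\varepsilon,\mathcal{F}_{n,P},L_1(Q))\le A\varepsilon^{-W}$ holds uniformly in $n$ and $P$ by Assumption \ref{kernel_assump}(iii). For the second-moment bound $\mu_{2,n}$, use boundedness of $k$ and boundedness of the density $f_X$ (uniformly in $P$ by the hypothesis on part (iii) of Assumption \ref{smoothness_upper_assump}) to get $E_P k((X_i-x)/h_n)^2 \le \|k\|_\infty \overline f \int k((t-x)/h_n)\,dt = \|k\|_\infty \overline f\, h_n^{d_X}\int k(u)\,du =: K_2 h_n^{d_X}$, so $\mu_{2,n}^2 = K_2 h_n^{d_X}$. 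The hypothesis $h_n^{d_X}n/\log n\ge a$ then ensures $\mu_{2,n}\sqrt{n/\log n}\ge \sqrt{aK_2}$ is bounded away from zero, as required. Theorem \ref{sd_scale_thm} delivers, for some $B$ not depending on $P$,
\begin{align*}
\sup_{P\in\mathcal{P}}P\left(\sup_{x\in\text{supp}_P(X_i)}|(E_n-E_P)k((X_i-x)/h_n)|\ge B\sqrt{K_2 h_n^{d_X}\log n/n}\right)\to 0.
\end{align*}

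Dividing the numerator bound by $K_1 h_n^{d_X}$ gives, on this high-probability event, $\sup_x |E_nk((X_i-x)/h_n)/E_Pk((X_i-x)/h_n) - 1|\le B\sqrt{K_2}/(K_1\sqrt{h_n^{d_X}n/\log n})\le B\sqrt{K_2/(aK_1^2)}$, which is less than $\varepsilon$ once $a$ is chosen large enough. The only real work lies in checking that the constants in Assumptions \ref{kernel_assump} and \ref{support_assump}, together with the uniform density bound, let all the above constants be chosen independently of $P$, which is immediate from the uniformity stipulated in the hypotheses; the rest is a direct appeal to Theorem \ref{sd_scale_thm}.
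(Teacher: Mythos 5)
Your proposal is correct and follows essentially the same route as the paper: both reduce the ratio to the mean-zero fluctuation $(E_n-E_P)k((X_i-x)/h_n)$ controlled via Theorem \ref{sd_scale_thm}, lower-bound $E_Pk((X_i-x)/h_n)\ge \delta h_n^{d_X}$ using Assumptions \ref{kernel_assump} and \ref{support_assump}, and conclude by taking $a$ large. The only cosmetic difference is that the paper normalizes by the pointwise second moment and uses $E_P[k((X_i-x)/h_n)]^2\le \overline k\,E_Pk((X_i-x)/h_n)$, while you use a uniform second-moment envelope $K_2h_n^{d_X}$ from the density bound; both are valid instantiations of the same argument.
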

\begin{proof}%
We have
\begin{align*}
&\left|\frac{E_nk((X_i-x)/h_n)}{E_Pk((X_i-x)/h_n)}-1\right|
=\frac{\{E_P[k((X_i-x)/h_n)]^2\}^{1/2}}{E_Pk((X_i-x)/h_n)}
\left|\frac{(E_n-E_P)k((X_i-x)/h_n)}{\{E_P[k((X_i-x)/h_n)]^2\}^{1/2}}\right|
\\
&\le \overline k^{1/2}
\left|\frac{(E_n-E_P)k((X_i-x)/h_n)}{\{E_P[k((X_i-x)/h_n)]^2\}^{1/2}}\right|
\cdot \frac{1}{[E_Pk((X_i-x)/h_n)]^{1/2}}
\end{align*}
where $\overline k$ is an upper bound for the kernel function $k$.  By
Theorem \ref{sd_scale_thm}, 
\begin{align*}
\sup_{P\in\mathcal{P}}
P\left(
\sup_{x\in\text{supp}_P(X_i)}
\frac{\sqrt{n}}{\sqrt{\log n}}
\left|\frac{(E_n-E_P)k((X_i-x)/h_n)}{\{E_P[k((X_i-x)/h_n)]^2\}^{1/2}}\right|
> K
\right)
\to 0
\end{align*}
for large enough $K$ (the lower bound on the denominator follows from
Assumption \ref{support_assump}),
so the result will follow if we can show that
$[E_Pk((X_i-x)/h_n)]^{1/2}\sqrt{n}/\sqrt{\log n}$
can be made arbitrarily large by choosing $a$ large in the assumptions of
the lemma.  By Assumptions \ref{kernel_assump} and \ref{support_assump},
we have, for some $\delta>0$ and all $x$ on the support of $X_i$ under
$P$,
\begin{align*}
[n/(\log n)] E_Pk((X_i-x)/h_n)
\ge [n/(\log n)] \delta h_n^{d_X},
\end{align*}
and taking the square root of this expression gives something that can be
made arbitrarily large by choosing $a$ large.
\end{proof}

\begin{lemma}\label{kernel_var_lemma}
Suppose that Assumption \ref{kernel_assump} holds, and that Assumption
\ref{support_assump} and part (iii) of Assumption
\ref{smoothness_upper_assump} hold, with the upper bound on the density in
the latter assumption uniform in $P\in\mathcal{P}$.
Then, if $h_n^{d_X}n/\log n\ge a$ eventually for $a$ large enough, we will
have
\begin{align*}
\sup_{P\in\mathcal{P}}
P\left(
\sup_{x\in\text{supp}_P(X_i), \theta\in\Theta}
\frac{\sqrt{n h_n^{d_X}}}{\sqrt{\log n}}
\left|\frac{(E_n-E_P)m_j(W_i,\theta)k((X_i-x)/h_n)}{E_Pk((X_i-x)/h_n)}\right|
> B
\right)
\stackrel{n\to\infty}{\to} 0
\end{align*}
for some $B$.
\end{lemma}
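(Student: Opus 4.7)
The plan is to apply Theorem \ref{sd_scale_thm} to the class of functions $\mathcal{F}_{n,P} = \{w \mapsto m_j(w,\theta) k((x_w - x)/h_n) : \theta \in \Theta, \, x \in \text{supp}_P(X_i)\}$ (where $x_w$ denotes the $X_i$-coordinate of the observation) and then to combine the resulting uniform bound on the numerator with a uniform lower bound on the denominator $E_P k((X_i-x)/h_n)$.

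First I would establish the denominator lower bound: by Assumption \ref{kernel_assump}(ii), there exist $k_0, \varepsilon_0 > 0$ with $k(t) \ge k_0$ for $\|t\| \le \varepsilon_0$, and combined with Assumption \ref{support_assump} this yields $E_P k((X_i-x)/h_n) \ge k_0 P(\|X_i - x\| \le \varepsilon_0 h_n) \ge c_1 h_n^{d_X}$ uniformly in $x \in \text{supp}_P(X_i)$ and $P \in \mathcal{P}$. Next I would bound the raw second moments: since $m_j$ is bounded by $M$, $k$ by $\overline{k}$, and the density of $X_i$ is bounded above uniformly in $P$ (part (iii) of Assumption \ref{smoothness_upper_assump}), $E_P [m_j(W_i,\theta) k((X_i-x)/h_n)]^2 \le M^2 \overline{k} \, E_P k((X_i-x)/h_n) \le c_2 h_n^{d_X}$, so Theorem \ref{sd_scale_thm} applies with $\mu_{2,n} = c_2^{1/2} h_n^{d_X/2}$. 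The condition that $\mu_{2,n} \sqrt{n/\log n}$ is bounded away from zero translates to $h_n^{d_X} n/\log n \ge a$ for $a$ large, which is exactly the hypothesis of the lemma.

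The covering-number hypothesis of Theorem \ref{sd_scale_thm} for $\mathcal{F}_{n,P}$ follows from Lemma \ref{prod_class_lemma} applied to the product of the class $\mathcal{F}_m = \{w \mapsto m_j(w,\theta) : \theta \in \Theta\}$ and the kernel class $\{w \mapsto k((x_w - x)/h) : x, h\}$. The kernel class has polynomial uniform $L_1$ covering numbers by Assumption \ref{kernel_assump}(iii), and the moment class has such a bound under the standing covering assumption (Assumption \ref{covering_assump}) operative throughout the paper; the product bound then has the polynomial form required by Theorem \ref{sd_scale_thm}.

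Putting the pieces together, Theorem \ref{sd_scale_thm} produces a constant $B'$ such that, with probability approaching one uniformly in $P \in \mathcal{P}$, $\sup_{\theta, x} |(E_n - E_P) m_j(W_i,\theta) k((X_i-x)/h_n)| \le B' c_2^{1/2} \sqrt{h_n^{d_X} (\log n)/n}$. Dividing by the lower bound $c_1 h_n^{d_X}$ on the denominator and multiplying by $\sqrt{n h_n^{d_X}/\log n}$ yields a quantity bounded by $B = B' c_2^{1/2}/c_1$, as required. The main point where care is needed is checking that the scaling $\mu_{2,n} \asymp h_n^{d_X/2}$ chosen from the second-moment bound is matched correctly by the growth condition $h_n^{d_X} n/\log n \ge a$ so that the conclusion of Theorem \ref{sd_scale_thm} is indeed applicable; aside from this bookkeeping, the proof is a straightforward assembly of ingredients already developed in Sections \ref{unif_conv_sec} and \ref{covering_sec} of the appendix.
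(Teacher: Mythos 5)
Your proof is correct and takes essentially the same route as the paper: both arguments rest on Theorem \ref{sd_scale_thm} with scale $\mu_{2,n}\asymp h_n^{d_X/2}$ (second moments of order $h_n^{d_X}$ via the uniform density upper bound) combined with the lower bound $E_Pk((X_i-x)/h_n)\ge \delta h_n^{d_X}$ from Assumptions \ref{kernel_assump} and \ref{support_assump}, the only cosmetic difference being that the paper routes the maximal inequality through the truncated normalization $\sqrt{\mathrm{var}_P}\vee\sqrt{h_n^{d_X}}$ while you apply it directly to the raw class with a second-moment envelope. One small citation point: the covering-number hypothesis is best taken directly from Assumption \ref{kernel_assump}(iii) (Assumption \ref{covering_assump} with $\mathcal{G}$ replaced by the kernel class), since a separate covering bound for $\{w\mapsto m_j(w,\theta)\}$ alone is not literally among the lemma's stated hypotheses.
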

\begin{proof}%
We have
\begin{align*}
&\frac{\sqrt{n h_n^{d_X}}}{\sqrt{\log n}}
\left|\frac{(E_n-E_P)m_j(W_i,\theta)k((X_i-x)/h_n)}{E_Pk((X_i-x)/h_n)}\right|
 \\
&=\frac{\sqrt{n}}{\sqrt{\log n}}
\left|\frac{(E_n-E_P)m_j(W_i,\theta)k((X_i-x)/h_n)}
  {\sqrt{var_P[m_j(W_i,\theta)k((X_i-x)/h_n)]}\vee
    \sqrt{h_n^{d_X}}}\right|  \\
&\cdot
\frac{\sqrt{h_n^{d_X}}\left\{\sqrt{var_P[m_j(W_i,\theta)k((X_i-x)/h_n)]}
  \vee \sqrt{h_n^{d_X}}\right\}}
     {E_Pk((X_i-x)/h_n)}  %
\end{align*}
Since
\begin{align*}
&var_P[m_j(W_i,\theta)k((X_i-x)/h_n)]
\le \overline Y E_P[k((X_i-x)/h_n)]^2
\le \overline Y \overline f 
\int_{t\in\mathbb{R}^{d_X}} [k((t-x)/h_n)]^2 \, dt  \\
&= h_n^{d_X}\int_{u\in\mathbb{R}^{d_X}} [k(u)]^2 \, du,
\end{align*}
the last line is bounded by a constant times
\begin{align*}
\frac{\sqrt{n}}{\sqrt{\log n}}
\left|\frac{(E_n-E_P)m_j(W_i,\theta)k((X_i-x)/h_n)}
  {\sqrt{var_P[m_j(W_i,\theta)k((X_i-x)/h_n)]}\vee
    \sqrt{h_n^{d_X}}}\right|
\cdot \frac{h_n^{d_X}}{E_Pk((X_i-x)/h_n)}.
\end{align*}
By Assumptions \ref{kernel_assump} and \ref{support_assump},
we have, for some $\delta>0$ and $x$ on the support of $X_i$ under $P$,
$E_Pk((X_i-x)/h_n) \ge \delta h_n^{d_X}$, so that this is bounded by
\begin{align*}
\frac{\sqrt{n}}{\sqrt{\log n}}
\left|\frac{(E_n-E_P)m_j(W_i,\theta)k((X_i-x)/h_n)}
  {\sqrt{var_P[m_j(W_i,\theta)k((X_i-x)/h_n)]}\vee
    \sqrt{h_n^{d_X}}}\right|
\cdot (1/\delta).
\end{align*}
The claim now follows from Theorem \ref{sd_scale_thm}, with
$\sqrt{h_n^{d_X}}$ playing the role of the cutoff point $\sigma_n$.

\end{proof}

\begin{lemma}\label{kernel_bias_lemma}
Suppose that Assumptions \ref{support_assump},
\ref{smoothness_upper_assump} and \ref{kernel_assump} hold.
Let $\theta_0$ be as in Assumption
\ref{smoothness_upper_assump} and let $\theta_n$ be a sequence in
$\Theta\backslash\Theta_0(P)$ converging to $\theta_0$.  Then, for some
constant $B$ that does not depend on $n$ and some $N\in\mathbb{N}$,
$E_Pm_j(W_i,\theta_n)k((X_i-x)/h)$ will be nonnegative for
$h_n\ge B [d_H(\theta_n,\Theta_0(P))]^{1/\alpha}$ and $n\ge N$ for $x$ on
the support of $X_i$.
\end{lemma}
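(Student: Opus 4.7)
The plan is to lower bound $E_P m_j(W_i,\theta_n)k((X_i-x)/h)$ by decomposing the conditional mean $\bar m_j(\theta_n,t,P)$ into a nonnegative shape term (of order $\|t-x_0\|^\alpha$ near the contact point) and a bias term of order $d_n := d_H(\theta_n,\Theta_0(P))$, then integrating against the kernel. The hoped-for inequality after integration is
\[
E_P m_j(W_i,\theta_n) k((X_i-x)/h) \;\ge\; h^{d_X}\bigl(c\eta h^\alpha - 2K\overline f d_n\bigr),
\]
which is nonnegative once $h \ge B d_n^{1/\alpha}$.

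I would first pick $\theta_0'(n)\in\delta\Theta_0(P)$ with $\|\theta_n-\theta_0'(n)\|\le 2 d_n$; for $n$ large this lies in the neighborhood $B(\theta_0)$ of Assumption \ref{smoothness_upper_assump}. If the index $j$ has no contact point at $\theta_0'(n)$, then $\bar m_j(\theta_0'(n),\cdot,P)$ is bounded below by a positive constant, and Assumption \ref{smoothness_upper_assump}(i) alone gives $\bar m_j(\theta_n,t,P)\ge c - 2K d_n > 0$ for $d_n$ small. Otherwise, choose $x_0\in\mathcal{X}_0(\theta_0'(n))$ with $\bar m_j(\theta_0'(n),x_0,P)=0$. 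Combining Assumption \ref{smoothness_upper_assump}(ii) with $\bar m_j(\theta_0'(n),\cdot,P)\ge 0$ (since $\theta_0'(n)\in\Theta_0(P)$) and the derivative bound $K$ from Assumption \ref{smoothness_upper_assump}(i),
\[
\bar m_j(\theta_n,t,P) \;\ge\; \eta\bigl(\|t-x_0\|^\alpha \wedge \eta\bigr) - 2K d_n.
\]
Integrating against $k((t-x)/h)f_X(t)$ and using $E_P k((X_i-x)/h)\le \overline f h^{d_X}\int k$ (from the bounded density in Assumption \ref{smoothness_upper_assump}(iii)) reduces the theorem to establishing the lower bound
\[
E_P\bigl[(\|X_i-x_0\|^\alpha\wedge \eta) k((X_i-x)/h)\bigr] \;\ge\; c\, h^{\alpha+d_X}
\]
uniformly in $x$ on $\mathrm{supp}_P(X_i)$.

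The main obstacle is this uniform lower bound, which I would prove by case analysis on $\|x-x_0\|$ versus $h$, restricting to $h$ small enough that the $\wedge \eta$ truncation is inactive on the integration regions used below. Let $r_0$ and $\underline k$ denote the radius and bound from Assumption \ref{kernel_assump}(ii). When $\|x-x_0\|\ge r_0 h$, restrict the integral to $B(x,r_0 h/2)$: here $k\ge \underline k$, the triangle inequality gives $\|X_i-x_0\|^\alpha\ge (r_0 h/2)^\alpha$, and Assumption \ref{support_assump} applied at $x$ gives $P(X_i\in B(x,r_0 h/2))\ge \eta (r_0 h/2)^{d_X}$. When $\|x-x_0\|< r_0 h$, restrict instead to an annulus $B(x_0,\delta_2 h)\setminus B(x_0,\delta_1 h)$ with $\delta_2\le r_0/2$ and the ratio $\delta_2/\delta_1$ chosen so large that the lower bound $\eta(\delta_2 h)^{d_X}$ from Assumption \ref{support_assump} (applied at $x_0$) dominates the bounded-density upper bound $\overline f V_{d_X}(\delta_1 h)^{d_X}$ on the inner ball; on this annulus $k\ge \underline k$ (since by triangle its points sit inside $B(x,r_0 h)$) and $\|X_i-x_0\|^\alpha\ge (\delta_1 h)^\alpha$. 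Combining the two cases yields a constant $c>0$ (depending only on $\eta,\overline f,\underline k,r_0,\alpha,d_X$) for which the displayed lower bound holds, and plugging this back into the bias/variance balance above completes the argument with $B = (2K\overline f\int k/(\eta c))^{1/\alpha}$.
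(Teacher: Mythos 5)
Your proof is correct in its essentials but follows a genuinely different bookkeeping than the paper's. The paper integrates the same pointwise lower bound $\bar m_j(\theta_n,\cdot,P)\ge -Cb_n+\eta\min_{x_0}(\|\cdot-x_0\|^\alpha\wedge\eta)$ but then splits the expectation by the random event $\{\bar m_j(\theta_n,X_i,P)\le Cb_n\}$ (a small ball around $\mathcal{X}_0(\theta_0'(n))$), shows the ``bad set'' contributes at worst $-\mathcal{O}(b_n^{1+d_X/\alpha})$, and the ``good set'' contributes at least $Cb_n$ times its mass, which is $\ge K_2 h_n^{d_X}-K_3 b_n^{d_X/\alpha}$; the threshold drops out of the resulting expression $b_n(K_2 h_n^{d_X}-K_3 b_n^{d_X/\alpha})$. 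You instead lower-bound the integral of the shape term directly, proving the uniform estimate $E_P[(\|X_i-x_0\|^\alpha\wedge\eta)k((X_i-x)/h)]\ge c\,h^{\alpha+d_X}$ via a near/far case split and an annulus construction, and then balance it against the bias $2K\overline f d_n h^{d_X}\int k$. Both give the same threshold $h\gtrsim d_n^{1/\alpha}$; your version has the small advantage of treating the ``no contact point for $j$'' case explicitly, which the paper handles only implicitly.

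Two quibbles worth fixing. First, there is an arithmetic slip in the case split: if you split at $\|x-x_0\|\ge r_0 h$ and take $\delta_2\le r_0/2$ in the annulus case, then for a point $t$ in the annulus $\|t-x\|\le\delta_2 h+\|x-x_0\|$ can be as large as $3r_0h/2>r_0h$, so you cannot conclude $k((t-x)/h)\ge\underline k$. Split at $\|x-x_0\|\ge r_0 h/2$ instead (and use $B(x,r_0h/4)$ in the far case) so that the annulus sits inside $B(x,r_0h)$. Second, your uniform lower bound $E_P[(\|X_i-x_0\|^\alpha\wedge\eta)k]\ge ch^{\alpha+d_X}$ only holds for $h$ below a threshold (so the $\wedge\eta$ truncation is inactive on the annulus), whereas the paper's good-set/bad-set argument works for any $h_n$ once $n$ is large. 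Since the lemma is invoked with $h_n\to0$, this does not affect the application, but you should state the restriction explicitly since the lemma as written only assumes $h_n\ge B\,d_n^{1/\alpha}$.
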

\begin{proof}%
Let $b_n=d_H(\theta_n,\Theta_0(P))$.
By an argument similar to the one leading up to Equation
(\ref{obj_bound_eq}), we will have, for each $j$,
\begin{align*}
\bar m_j(\theta_n,x,P)
\ge - C b_n  %
+\eta \min_{x_0\in \mathcal{X}_0(\theta_0'(n))}
(\|x-x_0\|^{\alpha}\wedge \eta)
\end{align*}
for some $C$ that depends only on the bound on the derivative $\bar
m_{\theta,j}(\theta,x,P)$ in Assumption \ref{smoothness_upper_assump} and
some $\theta_0'(n)\in \Theta_0(P)$.  Thus, %
for $x$ such
that $\bar m_j(\theta_n,x,P)\le C b_n$, we will have, for some
$x_0\in \mathcal{X}_0(\theta_0'(n))$, %
$C b_n \ge - C b_n
+\eta (\|x-x_0\|^{\alpha}\wedge \eta)$
so that 
$2C b_n
\ge \eta (\|x-x_0\|^{\alpha}\wedge \eta)$.
For $b_n$ small enough, this implies that $\|x-x_0\|\le
(2Cb_n/\eta)^{1/\alpha}$.  This means that, letting $K$ be a bound for the
number of elements in $\mathcal{X}_0(\theta_0'(n))$ and $\overline f$
an upper bound for the density of $X_i$,
\begin{align}\label{small_mean_bound}
P\left(\bar m_j(\theta_n,X_i,P)\le C b_n\right)
\le K\overline f (2Cb_n/\eta)^{d_X/\alpha}.
\end{align}
This, and the lower bound on $\bar m_j(\theta_n,x,P)$ imply, letting
$\overline k$ be an upper bound on the kernel $k$,
\begin{align*}
&E_P m_j(W_i,\theta_n)k((X_i-x)/h_n)
  I(\bar m_j(\theta_n,X_i,P)\le C b_n)
\ge - \overline k C b_n
  P\left(\bar m_{\theta,j}(\theta,X_i,P)\le C b_n\right)  \\
&\ge - \overline k C b_n \cdot K\overline f (2Cb_n/\eta)^{d_X/\alpha}.
\end{align*}

We also have, for $x$ on the support of $X_i$, letting $\varepsilon$ and
$K_1$ be such that $k(t)\ge K_1$ for $\|t\|\le\varepsilon$,
\begin{align*}
&E_P m_j(W_i,\theta_n)k((X_i-x)/h_n)
  I(\bar m_j(\theta_n,X_i,P)> C b_n)  \\
&\ge C b_n E_P k((X_i-x)/h_n) I(\bar m_j(\theta_n,X_i,P)> C b_n)  \\
&\ge K_1 C b_n E_P I(\|(X_i-x)/h_n\|\le \varepsilon)
  I(\bar m_j(\theta_n,X_i,P)> C b_n)  \\
&\ge K_1 C b_n[P(\|(X_i-x)/h_n\|\le \varepsilon)
- P(\bar m_j(\theta_n,X_i,P)\le C b_n)]  \\
&\ge K_1 C b_n[\eta \varepsilon^{d_X} h_n^{d_X}
- K\overline f (2Cb_n/\eta)^{d_X/\alpha}].
\end{align*}
The last inequality follows from Assumption \ref{support_assump} and 
from the inequality (\ref{small_mean_bound}) above (here the two $\eta$s
come from different conditions, but they can be chosen to be the same by
decreasing one).  Combining this with the bound in the previous display
gives
\begin{align*}
&E_P m_j(W_i,\theta_n)k((X_i-x)/h_n)
\ge K_1 C b_n[\eta \varepsilon^{d_X} h_n^{d_X}
- K\overline f (2Cb_n/\eta)^{d_X/\alpha}]
- \overline k C b_n \cdot K\overline f (2Cb_n/\eta)^{d_X/\alpha}  \\
&= b_n (K_2 h_n^{d_X} - K_3 b_n^{d_X/\alpha})
\end{align*}
where
$K_2=K_1C\eta \varepsilon^{d_X}$
and
$K_3=K_1 C K\overline f (2C/\eta)^{d_X/\alpha}
  + \overline k C K \overline f (2C/\eta)^{d_X/\alpha}$
are both positive constants that do not depend on $n$.  For
$h_n\ge (K_3/K_2)^{1/d_X} b_n^{1/\alpha}$, this will be nonnegative.

\end{proof}

\bibliography{library}

\end{document}